\def \CLASS{llncs}}
\title{GADT meet subtyping}
\author{Gabriel Scherer \and Didier R\'emy}
\institute 
   {INRIA, Rocquencourt\thanks
    {Part of this work has been done at \textsc{IRILL}.}}
\providecommand {\Final}{\False}
\begin{document}

\begin{version}{\RR}
\selectlanguage{francais}
  \makeRR
\selectlanguage{english}
\end{version}
\begin{version}{\Not\RR}

\maketitle 

\begin{abstract}{}
  \RR{\noindent}{}
  While generalized abstract datatypes (\GADT) are now considered
  well-understood, adding them to a language with a notion of
  subtyping comes with a few surprises. What does it mean for a \GADT
  parameter to be covariant?
  The answer turns out to be quite subtle. It involves fine-grained
  properties of the subtyping relation that raise interesting design
  questions.
  We allow variance annotations in \GADT definitions, study their
  soundness, and present a sound and complete algorithm to check them.
  Our work may be applied to real-world ML-like languages with
  explicit subtyping such as OCaml, or to languages with general
  subtyping constraints.
\end{abstract}

\end{version}

\section{Motivation}


In languages that have a notion of subtyping, the interface of
parametrized types usually specifies a \emph{variance}. It defines the
subtyping relation between two instances of a parametrized type from
the subtyping relations that hold between their parameters. For
example, the type $\ty{\alpha}{list}$ of immutable lists is expected
to be \emph{covariant}: we wish $\ty{\sigma}{list} \leq
\ty{\sigma'}{list}$ as soon as $\sigma \leq \sigma'$.

Variance is essential in languages whose programming idioms rely on
subtyping, in particular object-oriented languages. Another reason to
care about variance is its use in the \emph{relaxed value restriction}
\cite{relaxing-the-value-restriction}: while a possibly-effectful
expression, also called an \emph{expansive expression}, cannot be
soundly generalized in ML---unless some sophisticated enhancement of
the type system keeps track of effectful expressions---it is always
sound to generalize type variables that only appear in covariant
positions, which may not classify mutable values.
\begin{version}{\Not\Esop}
  This relaxation uses an intuitive subtyping argument: all
  occurrences of such type variables can be specialized to $\bot$, and
  anytime later, all covariant occurrences of the same variable
  (which are now $\bot$) can be simultaneously replaced by the same
  arbitrary type $\tau$, which is always a supertype of $\bot$.  This
  relaxation of the value-restriction is implemented in OCaml, where
  it is surprisingly useful.
\end{version}
Therefore, it is important for extensions of type definitions, such as
\GADTs, to support it as well through a clear and expressive definition of
parameter covariance.

For example, consider the following \GADT of well-typed expressions:
\label{expr-example}
\begin{lstlisting}[xleftmargin=4em]
type $\ty{\vplus\alpha}{expr}$ =
  | Val : $\alpha \to \ty{\alpha}{expr}$
  | Int : $\tyc{int} \to \ty{\tyc{int}}{expr}$
  | Thunk : $\forall \beta.\,\ty{\beta}{expr} * (\beta \to \alpha) \to \ty{\alpha}{expr}$
  | Prod : $\forall \beta \gamma .\,\ty{\beta}{expr} * \ty{\gamma}{expr} \to \ty{(\beta*\gamma)}{expr}$
\end{lstlisting}
Is it safe to say that $\tyc{expr}$ is covariant in its type
parameter? It turns out that, using the subtyping relation of the
OCaml type system, the answer is ``yes''. But, surprisingly to us, in
a type system with a top type $\top$, the answer would be ``no''.

The aim of this article is to present a sound and complete criterion to
check soundness of parameter variance annotations, for use in a
type-checker. We also discuss the apparent fragility of this criterion
with respect to changes to the subtyping relation (\eg. the presence or
absence of a top type, private types, \etc.), and a different, more robust
way to combine \GADTs and subtyping.

\begin{version}{\Esop}
\paragraph {Note} Due to space restriction, the present article is
only a short version from which many details have been omitted. All
proofs of results presented in this version along with auxiliary
results, as well as further discussions, can be found in the longer
version available online~\cite
{Scherer-Remy:gadts-meet-subtyping@long2012}.
\end{version}

\subsection*{Examples}

Let us first explain why it is reasonable to say that $\ty\alpha{expr}$ is
covariant. Informally, if we are able to coerce
a value of type  $\alpha$ into one of type $\alpha'$ (we write $(\mc{v} :> \alpha')$ to
explicitly cast a value $\mc{v}$ of type $\alpha$ to a value of type
$\alpha'$), then we are also able to transform a value of type $\ty\alpha{expr}$
into one of type $\ty{\alpha'}{expr}$. Here is some pseudo-code\footnote{The
  variables $\beta'$ and $\gamma'$ of the \code{Prod} case are never
  really defined, only justified at the meta-level, making this code
  only an informal sketch.} for the coercion function:
\begin{lstlisting}[xleftmargin=4em]
let coerce : $\ty\alpha{expr} \to \ty{\alpha'}{expr}$ = function
  | Val (v : $\alpha$) -> Val (v :> $\alpha'$)
  | Int n -> Int n
  | Thunk $\beta$ (b : $\beta$ expr) (f : $\beta \to \alpha$) ->
    Thunk $\beta$ b (fun x -> (f x :> $\alpha'$))
  | Prod $\beta$ $\gamma$ ((b, c) : $\ty\beta{expr} * \ty\gamma{expr}$) ->
    (* if $\beta * \gamma \leq \alpha'$, then $\alpha'$ is of the form
       $\beta' * \gamma'$ with $\beta \leq \beta'$ and $\gamma \leq \gamma'$ *)
    Prod $\beta'$ $\gamma'$ ((b :> $\ty{\beta'}{expr}$), (c :> $\ty{\gamma'}{expr}$))
\end{lstlisting}
In the $\mc{Prod}$ case, we make an informal use of something we know
about the OCaml type system: the supertypes of a tuple are all
tuples. By entering the branch, we gain the knowledge that $\alpha$
must be equal to some type of the form $\beta * \gamma$. So from
$\alpha \leq \alpha'$ we know that
$\beta * \gamma \leq \alpha'$. Therefore, $\alpha'$ must itself be
a pair of the form $\beta' * \gamma'$. By covariance of the product,
we deduce that $\beta \leq \beta'$ and $\gamma \leq \gamma'$. This
allows to conclude by casting at types
$\ty{\beta'}{expr}$ and $\ty{\gamma'}{expr}$, recursively.

Similarly, in the $\mc{Int}$ case, we know that $\alpha$ must be an
$\tyc{int}$ and therefore an $\ty{\tyc{int}}{expr}$ is returned. This is
because we know that, in OCaml, no type is above $\tyc{int}$: if $\tyc{int}
\leq \tau$, then $\tau$ must be $\tyc{int}$. 

What we use in both cases is reasoning of the form\footnote {\let \T
  T\let \S \sigma\relax We write $\app \T\bb$ for a type expression
  $\T$ that may contain free occurrences of variables $\bb$ and $\app
  \T {\bar\S}$ for the simultaneous substitution of $\bar\S$ for $\bb$
  in $\T$.}:
``if  
$\app{T}{\bb} \leq \alpha'$, then I know that $\alpha'$ is of the form
$\app{T}{\bb'}$ for some $\bb'$''. We call this an \emph{upward
  closure} property: when we ``go up'' from a $\app{T}\bb$, we only
find types that also have the structure of $T$. Similarly, for
contravariant parameters, we would need a \emph{downward closure}
property: $T$ is downward-closed if $\app{T}\bb \geq \alpha'$ entails
that $\alpha'$ is of the form $\app{T}{\bb'}$.

Before studying a more troubling example, we  define the classic
equality type $\ty{(\alpha, \beta)}{eq}$, and the corresponding
casting function
$\mc{cast}: \forall \alpha \beta. \ty{(\alpha, \beta)}{eq} \to \alpha \to
\beta$: 
\begin{version}{\Not\Esop}
\begin{lstlisting}
  type ($\alpha$, $\beta$) eq =
    | Refl : $\forall \gamma$. ($\gamma$, $\gamma$) eq

  let cast (eqab : $\ty{(\alpha, \beta)}{eq}$) : $\alpha \to \beta$ =
    match eqab with
      | Refl -> (fun x -> x)
\end{lstlisting}
\end{version}
\begin{version}{\Esop}
\begin{lstlisting}[xleftmargin=1em]
type ($\alpha$, $\beta$) eq =                         $\hspace{5.8em}$let cast r =
  | Refl : $\forall \gamma$. ($\gamma$, $\gamma$) eq  $\hspace{2.3em}$  match r with Refl -> (fun x -> x)
\end{lstlisting}
\end{version}
Notice that it would be unsound\footnote{This counterexample is due to
  Jeremy Yallop.} to define $\tyc{eq}$ as covariant, even in only one
parameter. For example, if we had
$\mc{type}~\ty{(\vplus\alpha,\veq\beta)}{eq}$, from any $\sigma \leq
\tau$ we could subtype $\ty{(\sigma,\sigma)}{eq}$ into
$\ty{(\tau,\sigma)}{eq}$, allowing to cast any value of type $\tau$
back into one of type $\sigma$, which is unsound in general.

As a counter-example, the following declaration is incorrect: the type
$\ty{\alpha}{t}$ cannot be declared covariant.
\begin{lstlisting}
  type $\vplus\alpha$ t =
    | K : < m : int > $\to$ < m : int > t
  let v = (K (object method m = 1 end) :> < > t)
\end{lstlisting}
This declaration uses the OCaml object type \code{< m : int >}, which
qualifies objects having a method \code{m} returning an integer. It is
a subtype of object types with fewer methods, in this case the empty
object type \code{< >}, so the alleged covariance of \code{t}, if
accepted by the compiler, would allow us to cast a value of type
\code{< m : int > t} into one of type \code{< > t}. However, from such a 
value, we could wrongly deduce an equality witness
\code{(< >, <m : int>) eq} that allows to cast any
empty object of type \code{< >} into an object of type
\code{< m : int >}, but this is unsound, of course!

\begin{lstlisting}
  let get_eq : $\ty\alpha{t} \to$ ($\alpha$, < m : int >) eq = function
    | K _ -> Refl      $\qquad$ (* locally $\alpha =\,$< m : int > *)
  let wrong : < > -> < m : int > =
    let eq : (< >, < m : int >) eq = get_eq v in
    cast eq
\end{lstlisting}
It is possible to reproduce this example using a different feature of
the OCaml type system named \emph{private type
  abbreviation}\footnote{This counterexample is due to Jacques
  Garrigue.}: a module using a type $\mc{type}~\tyc{t}~\mc{=}~\tau$
\emph{internally} may describe its interface as
$\mc{type}~\tyc{t}~\mc{=}~\mc{private}~\tau$. This is a compromise
between a type abbreviation and an abstract type: it is possible to
cast a value of type $\tyc{t}$ into one of type $\tau$, but not,
conversely, to construct a value of type $\tyc t$ from one of type
$\tau$. In other words, $\tyc{t}$ is a strict subtype of $\tau$: we
have $\tyc{t} \leq \tau$ but not $ \tyc{t} \geq\tau$. Take for example
\code{type file_descr = private int}: this semi-abstraction is useful
to enforce invariants by restricting the construction of values of
type \code{file_descr}, while allowing users to conveniently and
efficiently destruct them for inspection at type \code{int}.
\begin{version}{\Esop}
  Using an unsound but quite innocent-looking covariant \GADT
  datatype, one is able to construct a function to cast any integer
  into a \code{file_descr}, which defeats the purpose of this
  abstraction---see the extended version of this article for the full
  example.
\end{version}

\begin{version}{\Not\Esop}
Unsound \GADT covariance declarations would defeat the purpose of such
private types: as soon as the user gets one element of the private
type, she could forge values of this type, as illustrated by the code
below.
\begin{lstlisting}
  module M = struct
    type file_descr = int
    let stdin = 0
    let open = ...
  end : sig
    type file_descr = private int
    val stdin : file_descr
    val open : string -> (file_descr, error) sum
  end

  type $\vplus\alpha$ t =
    | K : priv -> M.file_descr t

  let get_eq : $\ty\alpha{t}$ -> ($\alpha$, M.file_descr) eq = function
    | K _ -> Refl

  let forge : int -> M.file_descr =
    fun (x : $\tyc{int}$) -> cast (get_eq p) M.stdin
\end{lstlisting}
\end{version}

The difference between the former, correct \code{Prod} case and those
two latter situations with unsound variance is the notion of upward
closure. The types $\alpha * \beta$ and $\tyc{int}$ used in the
correct example were upward-closed. On the contrary, the private type
\Esop{\code{file_descr}}{\code{M.file_descr}} has a distinct supertype
\code{int}, and similarly the object type \code { < m:int > } has
a supertype \code {< >} with a different structure (no method
\code m).

In this article, we formally show that these notions of upward and
downward-closure are the key to a sound variance check for \GADTs. We
start from the formal development of Simonet and Pottier
\cite{simonet-pottier-hmg-toplas}, which provides a general soundness
proof for a language with subtyping and a very general notion of \GADT
expressing arbitrary constraints---rather than only type
equalities. By specializing their correctness criterion, we can
express it in terms of syntactic checks for closure and variance, that
are simple to implement in a type-checker.

\subsection*{The problem of non-monotonicity}

There is a problem with those upward or downward closure assumptions: while
they hold in core ML, with strong inversion theorems,
they are non-monotonic properties: they are not necessarily preserved by
extensions of the subtyping lattice. For example, OCaml has a concept of
\emph{private types}: a type specified by
$\mc{type}~\tyc{t}~=~\mc{private}~\tau$ is a new semi-abstract type smaller
than $\tau$ ($\tyc{t} \leq \tau$ but $\tyc{t} \ngeq \tau$), that can be
defined a posteriori for any type. Hence, no type is downward-closed
\emph{forever}. That is, for any type $\tau$, a new, strictly
smaller type may always be defined in the future.

This means that closure properties of the OCaml type system are
relatively weak: no type is downward-closed\footnote{Except types that
  are only defined privately in a module and not exported: they exist
  in a ``closed world'' and we can check, for example, that they are
  never used in a \code{private} type definition.} (so instantiated
\GADT parameters cannot be contravariant), and arrow types are not
upward-closed as their domain should be downward-closed. Only purely
positive algebraic datatypes are upward-closed. The subset of \GADT
declarations that can be declared covariant today is small, yet, we
think, large enough to capture a lot of useful examples, such as
$\ty{\alpha}{expr}$ above.

\subsection*{Giving back the freedom of subtyping}

It is disturbing that the type system should rely on non-monotonic
properties: if we adopt the correctness criterion above, we must be
careful in the future not to enrich the subtyping relation too
much.

\XXX{Mention that this ``non-monotonous'' behavior is due to
  a new ``inversion'' in the type system, for which we didn't have
  good example before (new thing)?}\XXX[GS]{I'm not sure this is the
  right place to speak about that. I think we should rather do that in
  the conclusion section.}

Consider \code{private} types for example: one could imagine
a symmetric concept of a type that would be strictly \emph{above}
a given type $\tau$; we will name those types \code{invisible} types
(they can be constructed, but not observed). Invisible types and \GADT
covariance seem to be working against each other: if the designer adds
one, adding the other later  will be difficult.

A solution to this tension is to allow the user to \emph{locally}
guarantee negative properties about subtyping (what is \emph{not}
a subtype), at the cost of selectively abandoning the corresponding
flexibility.  Just as object-oriented languages have \code{final}
classes that cannot be extended any more, we would like to be able to define
some types as \code{public} (respectively \code{visible}), that cannot
later be made \code{private} (resp. \code{invisible}). Such
declarations would be rejected if the defining type already has
subtypes (\eg. an object type), and would forbid further declarations
of types below (resp. above) the defined type, effectively guaranteeing
downward (resp. upward) closure. Finally, upward or downward closure
is a semantic aspect of a type that we must have the freedom to
publish through an interface: abstract types could optionally be
declared \code{public} or \code{visible}.

\subsection*{Another approach: subtyping constraints}
\label{sec/introduction-subtyping-contraints}

Getting fine variance properties out of \GADT is
difficult because they correspond to type equalities which, to
a first approximation, use their two operands both positively and
negatively. One way to get an easy variance check is to encourage
users to \emph{change} their definitions into different ones that are
easier to check.  For example, consider the following redefinition of
$\ty{\alpha}{expr}$ (in a speculative extension of OCaml with subtyping
constraints):
\begin{lstlisting}[basewidth=0.40em,columns=fixed]
type +$\ty{\alpha}{expr}$ =
| Val : $\forall \alpha. \alpha \to \ty{\alpha}{expr}$
| Int : $\forall \alpha [\alpha\mathord{\geq}\tyc{int}]. \tyc{int} \to \ty{\alpha}{expr}$
| Thunk : $\forall \beta.\,\ty{\beta}{expr} * (\beta \to \alpha) \to \ty{\alpha}{expr}$
| Prod : $\forall \alpha \beta \gamma [\alpha \mathord{\geq} \beta*\gamma].\,(\ty{\beta}{expr} * \ty{\gamma}{expr}) \to \ty{\alpha}{expr}$
\end{lstlisting}
It is now quite easy to check that this definition is covariant, since
all type equalities $\alpha = \app{T_i}{\bb}$ have been replaced by
inequalities $\alpha \geq \app{T_i}{\bb}$ which are preserved when
replacing $\alpha$ by a subtype $\alpha' \geq \alpha$---we explain
this more formally in
\S\ref{sec/gadts-with-subtyping-constraints}. This variation on
\GADTs, using subtyping instead of equality constraints, has been
studied by Emir \etal.~\cite{csharp-generalized-constraints} in the
context of the \csharp programming language.

But isn't such a type definition less useful than the previous one,
which had a stronger constraint? We will discuss this choice in more
detail in~\S\ref{sec/gadts-with-subtyping-constraints}.

\begin{version}{\Not\Esop}
\subsection*{On the importance of variance annotations}

Being able to specify the variance of a parametrized datatype is
important at abstraction boundaries: one may wish to define a program
component relying on an \emph{abstract} type, but still make certain
subtyping assumptions on this type. Variance assignments provide
a framework to specify such a semantic interface with respect to
subtyping. When this abstract type dependency is provided by an
encapsulated implementation, the system must check that the provided
implementation indeed matches the claimed variance properties.

Assume the user specifies an abstract type
\begin{lstlisting}[xleftmargin=4em]
module type S = sig
  type $(+\alpha)$ collection
  val empty : unit -> $\alpha$ collection
  val app : $\alpha$ collection -> $\alpha$ collection -> $\alpha$ collection
end
\end{lstlisting}
and then implements it with linked lists
\begin{lstlisting}[xleftmargin=4em]
module C : S = struct
  type $+\alpha$ collection =
    | Nil of $\tyc{unit}$
    | Cons of $\alpha\,*\,\ty{\alpha}{collection}$
  let empty () = Nil ()
end
\end{lstlisting}
The type-checker will accept this implementation, as it has the
specified variance. On the contrary,
\begin{lstlisting}[xleftmargin=4em]
type $+\alpha$ collection = $\ty{(\ty{\alpha}{list})}{ref}$  
let empty () = ref []
\end{lstlisting}
would be rejected, as $\tyc{ref}$ is invariant.
In the following definition:
\begin{lstlisting}[xleftmargin=4em]
let nil = C.empty ()
\end{lstlisting}
the right hand-side is not a value, and is therefore not generalized
in presence of the value restriction; we get a monomorphic type,
$\ty{?\alpha}{t}$, where $?\alpha$ is a yet-undetermined type
variable. The relaxed value restriction
\cite{relaxing-the-value-restriction} indicates that it is sound to
generalize $?\alpha$, as it only appears in covariant
positions. Informally, one may unify $?\alpha$ with $\bot$, add an
innocuous quantification over $\alpha$, and then generalize
$\forall \alpha. \ty{\bot}{t}$ into $\forall \alpha. \ty{\alpha}{t}$
by covariance---assuming a lifting of subtyping to polymorphic type
schemes.

The definition of \code{nil} will therefore get generalized in
presence of the relaxed value restriction, which would not be the case
if the interface \texttt S had specified an invariant type.
\end{version}

\subsection*{Related work}


\begin{version}{\Esop}
  Simonet and Pottier \cite{simonet-pottier-hmg-toplas} have studied
  \GADTs in a general framework HMG(X), inspired by HM(X). They were
  interested in typing inference using constraints, so considered
  \GADTs with arbitrary constraints rather than type equalities, and
  considered the case of subtyping with applications to information
  flow security in mind. We instantiate their general framework, which
  allows us to reuse their dynamic semantics and syntactic proofs of
  soundness, and concentrate only on the static semantics, proving
  that we meet the requirements they impose on the parametrized
  models.

  Their soundness criterion is formulated in very general terms as
  a constraint entailment problem. In contrast, our specialized study
  of the case of equality and subtyping led to a refined, more
  syntactic, criterion. This provides a more practically implementable
  check for type definitions, and reveals the design issues
  surrounding $v$-closed constructors that were not apparent in their
  work.
\end{version}
\begin{version}{\Not\Esop}
When we encountered the question of checking variance annotations on
\GADTs, we expected to find it already discussed in the
literature. The work of Simonet and Pottier
\cite{simonet-pottier-hmg-toplas} is the closest we could find. It was
done in the context of finding good specification for \emph{type
  inference} of code using \GADTs, and in this context it is natural
to embed some form of constraint solving in the type inference
problem. From there, Simonet and Pottier generalized to a rich notion
of \GADTs defined over arbitrary constraints, in presence of
a subtyping relation, justified in their setting by potential
applications to information flow checking.

They do not describe a particular type system, but a parametrized
framework HMG(X), in the spirit of the inference framework HM(X). In
this setting, they prove a general soundness result, applicable to all
type systems which satisfy their model requirements. We directly reuse
this soundness result, by checking that we respect these requirements
and proving that their condition for soundness is met. This allows us
to concentrate purely on the static semantics, without having to
define our own dynamic semantics to formulate subject reduction and
progress results.

Their soundness requirement is formulated in terms of a general
constraint entailment problem involving arbitrary
constraints. Specializing this to our setting is simple, but expressing
it in a form that is amenable to mechanical verification is surprisingly
harder---this is the main result of this paper. Furthermore, at their
level of generality, the design issues related to subtyping of \GADTs,
in particular the notion of upward and downward-closed type
constructors, were not apparent. Our article is therefore not only
a specialized, more practical instance of their framework, but also
raises new design issues.
\end{version}


\medskip 

\begin{version}{\Esop}
  Emir, Kennedy, Russo and Yu~\cite{csharp-generalized-constraints}
  studied the soundness of an object-oriented calculus with subtyping
  constraints on classes and methods. Previous work
  \cite{gadt-and-oop} had established the correspondence between
  equality constraints on methods in an object-oriented style and GADT
  constraints on type constructors in functional style. Through this
  correspondence, their system matches our presentation of \GADTs with
  subtyping constraints and easier variance assignment, mentioned in
  the introduction~(\S\ref{sec/introduction-subtyping-contraints}) and
  detailed in~\S\ref{sec/gadts-with-subtyping-constraints}. They do
  not encounter the more delicate notion of upward and downward
  closure.
  
  Those two approaches (subtyping constraints with easy
  variance check, stronger equality constraints with more delicate
  variance check) are complementary and have different convenience
  trade-offs. In their system with explicit-constraint definitions
  and implicit subtyping, the subtyping-constraint solution is the
  most convenient, while our ML setting provides incentives to study
  the other solution.
\end{version}

\begin{version}{\Not\Esop}
  The other major related work, by Emir, Kennedy, Russo and
  Yu~\cite{csharp-generalized-constraints}, studies the soundness
  of having subtyping constraints on classes and methods of an
  object-oriented type system with generics (parametric
  polymorphism). Previous work~\cite{gadt-and-oop} had already
  established the relation between the \GADT style of having type
  equality constraints on data constructors and the desirable
  object-oriented feature of having type equality constraints on
  object methods. This work extends it to general subtyping
  constraints and develops a syntactic soundness proof in the context
  of a core type system for an object-oriented languages with
  generics.

The general duality between the ``sums of data'' prominent in
functional programming and ``record of operations'' omnipresent in
object-oriented programming is now well-understood. Yet, it is
surprisingly difficult to reason on the correspondence between \GADTs
and generalized method constraints; an application that is usually
considered to require \GADTs in a functional style (for example
a strongly-typed eval $\ty{\alpha}{expr}$ datatype and its associated
\code{eval} function) is simply expressed in idiomatic object-oriented
style without specific constraints\footnote{There is a relation
  between this way of writing a strongly typed eval function and the
  ``finally tagless'' approach \cite{finally-tagless} that is known to require
  only simple ML types.}, while the simple
$\mc{flatten} : \forall \alpha,\, \ty{\ty{\alpha}{list}}{list} \to \ty{\alpha}{list}$
requires an equality or subtyping constraint when expressed in
object-oriented style.

These important differences of style and applications make it difficult
to compare our present work with this one. Our understanding of this
system is that a subtyping constraint of the form $X \leq Y$ is
considered to be a negative occurrence of $X$, and a positive occurrence
of $Y$; this means that equality constraints (which are conjunctions
of a $\rel\leq$ constraint and a $\rel\geq$ constraints) always impose invariance
on their arguments. Checking correctness of constraints with this
notion of variance is simpler than with our upward and
downward-closure criterion, but also not as expressive. It
corresponds, in our work, to the idea of \GADTs with subtyping
constraint mentioned in the introduction and that we detail in
\S\ref{sec/gadts-with-subtyping-constraints}.

The design trade-off in this related work is different from our
setting; the reason why we do not focus on this solution is that it
requires explicit annotations at the GADT definition site, and more
user annotations in pattern matching in our system where subtyping is
explicitly annotated, while convertibility is implicitly inferred by
unification. On the contrary, in an OOP system with explicit
constraints and implicit subtyping, this solution has the advantage of
user convenience.

We can therefore see our present work as a different choice in the
design space: we want to allow a richer notion of variance assignment
for type \emph{equalities}, at the cost a higher complexity for those
checks. Note that the two directions are complementary and are both
expressed in our formal framework.
\end{version}

\section{A formal setting}\label{formal_setting}

We define a core language for Algebraic Datatypes (ADT) and, later,
Generalized Algebraic Datatypes (\GADTs), that is an instance of the
parametrized HMG(X) system of Simonet and Pottier
\cite{simonet-pottier-hmg-toplas}. We refine their framework by using
variances to define subtyping, but rely on their formal description for
most of the system, in particular the static and dynamic semantics. We
ultimately rely on their type soundness proof, by rigorously showing
(in the next section) that their requirements on datatype definitions
for this proof to hold are met in our extension with variances.

\subsection{Atomic subtyping}

Our type system defines a subtyping relation between ground types,
parametrized by a reflexive transitive relation between base constant
types (\code{int}, \code{bool}, etc.). Ground types consist of a set
of base types $\tyc{b}$, function types $\tau_1 \to \tau_2$, product
types $\tau_1 * \tau_2$, and a set of algebraic datatypes
$\ty{\bs}{t}$. (We write $\bar \sigma$ for a sequence of types
$(\sigma_i)_\iI$.)  We use prefix notation for datatype parameters, as
is the usage in ML. Datatypes may be user-defined by toplevel
declarations of the form:
\begin{lstlisting}
  type $\ty{\Gva}{t}$ =
    | K$_1$ of $\app{\tau^1}{\ba}$
    | ...
    | K$_n$ of $\app{\tau^n}{\ba}$
\end{lstlisting}
This is a disjoint sum: the constructors \code{K$_c$} represent all 
possible cases and each type
$\app {\tau^c} \ba$ is the domain of the constructor
\code{K$_c$}. Applying it to an argument $e$ of a corresponding ground
type $\app{\tau}{\bs}$ constructs a term of type $\ty{\bs}{t}$. Values
of this type are deconstructed using pattern matching clauses of the
form $\mc{K}_c~x \to e$, one for each constructor.

The sequence $\Gva$ is a binding list of type variables $\alpha_i$
along with their \emph{variance annotation} $v_i$, which is a marker
among the set $\set{\vplus, \vminus, \veq, \virr}$. We may associate
a relation a relation $\rel{\prec_v}$ between types to each variance
$v$:
\begin{itemize}
\item $\prec_{\vplus}$ is the \emph{covariant} relation $\rel\leq$;
\item $\prec_{\vminus}$ is the \emph{contravariant} relation $\rel\geq$,
the symmetric of $\rel\leq$;
\item $\prec_{\veq}$ is the \emph{invariant} relation $\rel{=}$, defined as
the intersection of $\rel\leq$ and $\rel\geq$;
\item $\prec_{\virr}$, is the \emph{irrelevant} relation $\rel\Join$,
  the full relation such that $\sigma \Join \tau$ holds for all types
  $\sigma$ and $\tau$.
\end{itemize}

Given a reflexive transitive relation $\rel\leqslant$ on base
types, the subtyping relation on ground types $\rel\leq$ is defined by the
inference rules of Figure~\ref {fig/subtyping}, which, in particular, give
their meaning to the variance annotations $\Gva$. The judgment
$\mc{type}~\ty{\Gva}{t}$ simply means that the type constructor
$\tyc{t}$ has been previously defined with the variance annotation
$\Gva$. 
\begin{mathparfig}{fig/subtyping}{Subtyping relation}
\inferrule{ }
  {\sigma \leq \sigma}

\inferrule
  {\sigma_1 \leq \sigma_2 \\ \sigma_2 \leq \sigma_3}
  {\sigma_1 \leq \sigma_3}

\inferrule
    {\tyc{b} \leqslant \tyc{c}}
    {\tyc{b} \leq \tyc{c}}

\inferrule
  {\sigma \geq \sigma' \\ \tau \leq \tau'}
  {\sigma \to \tau \leq \sigma' \to \tau'}
  \quad \and

\inferrule
  {\sigma \leq \sigma'\\ \tau \leq \tau'}
  {\sigma * \tau \leq \sigma' * \tau'}

\inferrule
  {\mc{type}~\ty{\Gva}{t} \\
   \forall i,\,\sigma_i \prec_{v_i} \sigma'_i}
  {\ty{\bs}{t} \leq \ty{\bs'}{t}}
\end{mathparfig}
\XXX{Rules for subtyping should have names}
Notice that the rules for arrow and product types can be subsumed by
the rule for datatypes, if one consider them as special datatypes
(with a specific dynamic semantics) of variance $(\vminus, \vplus)$
and $(\vplus, \vplus)$, respectively. For this reason, the following
definitions will not explicitly detail the cases for arrows and
products.

\begin{version}{\Not\Esop}
Finally, it is routine to show that the rules for reflexivity and
transitivity are admissible, by pushing them up in the derivation
until the base cases $\tyc{b} \leqslant \tyc{c}$, where they can be
removed as $\rel\leqslant$ is assumed to be reflexive and transitive. 
Removing reflexivity and transitivity provides us with an equivalent
syntax-directed judgment having powerful inversion principles: if
$\ty \bs t \leq \ty {\bs'} t$ and
$\mc{type}~\ty{\Gva}{t}$, then one can deduce that for each
$i$, $\sigma_i \prec_{v_i} \sigma'_i$.\label{subtyping_inversion} 
\end{version}

\begin{version}{\Esop}
  As usual in subtyping systems, we could reformulate our judgment in
  a syntax-directed way, to prove that it admits good inversion
  properties: if $\ty \bs t \leq \ty {\bs'} t$ and
  $\mc{type}~\ty{\Gva}{t}$, then one can deduce that for each $i$,
  $\sigma_i \prec_{v_i} \sigma'_i$.\label{subtyping_inversion}
\end{version}

\begin{version}{\Not\Esop}
\XXX[GS]{Didier, please review this paragraph on the difference between
  equivalence and syntactic equality} We insist that our equality
relation $\rel\veq$ is here a derived concept, defined from the
subtyping relation $\rel\leq$ as the ``equiconvertibility'' relation
$\rel{\leq \cap \geq}$; in particular, it is not defined as the usual
syntactic equality. If we have both $b_1 \leqslant b_2$ and
$b_1 \leqslant b_2$ in our relation on base types, for two distinct
base types $b_1$ and $b_2$, we have $b_1 = b_2$ as types, even though they are
syntactically distinct. This choice is inspired by the previous work
of Simonet and Pottier.
\end{version}

\begin{version}{\Not\Esop}
\paragraph{On the restriction of atomic subtyping}\label{atomic_subtyping_restriction}

The subtyping system demonstrated above is called ``atomic''. If two
head constructors are in the subtyping relation, they are either
identical or constant (no parameters). Structure-changing subtyping
occurs only at the leaves of the subtyping derivations.

While this simplifies the meta-theoretic study of the subtyping
relation, this is too simplifying for real-world type systems that use
non-atomic subtyping relations. In our examples using the OCaml type
system, \code{private} type were a source of non-atomic subtyping: if
you define
$\mc{type}~\ty{\alpha}{t_2}~=~\mc{private}~\ty{\alpha}{t_1}$, the head
constructors $t_1$ and $t_2$ are distinct yet in a subtyping
relation. If we want to apply our formal results to the design of such
languages, we must be careful to isolate any assumption on this atomic
nature of our core formal calculus.

The aspect of non-atomic subtype relations we are interested in is the
notion of $v$-closed constructor.
We have used this notion informally in
the first section (in OCaml, product types are $\vplus$-closed);
we now defined it formally. 
\begin{definition}[Constructor closure]
\label{def/v-closed}
A type constructor $\ty{\ba}{t}$ is $v$-closed if, for any type sequence
$\bs$ and type $\tau$ such that $\ty{\bs}{t} \prec_v \tau$ hold, then
$\tau$ is necessarily equal to $\ty{\bs'}{t}$ for some $\bs'$.
\end{definition}
In our core calculus, all
type constructors are $v$-closed for any $v \neq \virr$, but we will
still mark this hypothesis explicitly when it appears in typing
judgments; this let the formal results be adapted more easily to
a non-atomic type system.

It would have been even more convincing to start from a non-atomic
subtyping relation. However, the formal system of Simonet and Pottier,
whose soundness proof we ultimately reuse, restricts subtyping
relations between (G)ADT type to atomic subtyping. We are confident
their proof (and then our formal setting) can be extended to cover the
non-atomic case, but we have left this extension to future work.
\end{version}

\begin{version}{\Esop}
\paragraph{On the restriction of atomic subtyping}\label{atomic_subtyping_restriction}

Our typing relation reproduces a simplification that is present in the
formulation of Simonet and Pottier: it is \emph{atomic} in the sense
that two non-constant type constructors in the subtyping relation are
always identical. We are confident their proof
(and then our formal setting) can be extended to cover the non-atomic
case, but we have left this extension to future work.

Richer type systems, for example if they have bottom or top types or,
in the case of the OCaml type system, \code{private} types and object
types, have a non-atomic subtyping relations. To be able to extend our
work to such settings, we have carefully marked each use of atomic
subtyping in the formal development with an hypothesis of $v$-closure,
defined below. In the case of atomic subtyping, all types are
$v$-closed.

\begin{definition}[Constructor closure]
\label{def/v-closed}
A type constructor $\ty{\ba}{t}$ is $v$-closed if, for any type sequence
$\bs$ and type $\tau$ such that $\ty{\bs}{t} \prec_v \tau$ hold, then
$\tau$ is necessarily equal to $\ty{\bs'}{t}$ for some $\bs'$.
\end{definition}
\end{version}

\subsection{The algebra of variances}

If we know that $\ty{\bs}{t} \leq \ty{\bs'}{t}$, that is $\ty{\bs}{t}
\prec_{\vplus} \ty{\bs'}{t}$, and the constructor $\tyc{t}$ has variable
$\bar{v\alpha}$, an inversion principle tells us that for each $i$,
$\sigma_i \prec_{v_i} \sigma'_i$. But what if we only know $\ty \bs
t \prec_{u} \ty {\bs'} t$ for some variance $u$ different from
$\rel\vplus$? If $u$ is $\rel\vminus$, we get the reverse relation
$\sigma_i \succ_{v_i} \sigma'_i$. If $u$ is $\rel\virr$, we get
$\sigma_i \Join \sigma'_i$, that is, nothing. This outlines
a \emph{composition} operation on variances $\varcomp u {v_i}$, such
that if $\ty \bs t \prec_{u} \ty {\bs'} t$ then $\sigma_i
\prec_{\varcomp u {v_i}} \sigma'_i$ holds. It is defined by the
following table:
$$
\def \C#1{\multicolumn{1}{C}{#1}}
\begin{tabular}{R|c@{}|C|C|C|C|l}
  \varcomp v w && \C{\veq} & \C{\vplus} & \C{\vminus} & \C{\virr} & w
\\ 
\cline{1-1}\cline{3-7} \noalign{\vskip \doublerulesep}\cline{1-1}\cline{3-6}
  \veq    && \veq  & \veq     & \veq    & \virr \\ \cline{3-6}
  \vplus  && \veq  & \vplus   & \vminus & \virr \\ \cline{3-6}
  \vminus && \veq  & \vminus  & \vplus  & \virr \\ \cline{3-6}
  \virr   && \virr & \virr    & \virr   & \virr \\ \cline{3-6}
  v       &\multicolumn{5}{C}{}
  \end{tabular}
$$
This operation is associative and commutative. Such an operator, and
the algebraic properties of variances explained below, have already
been used by other authors, for example
\cite{polarized-subtyping-for-sized-types}.

There is a natural order relation between \emph{variances}, which
is the \emph{coarser-than} order between the corresponding relations:
$v \leq w$ if and only if 
$\rel{\prec_v} \supseteq \rel{\prec_w}$; \ie. if and only if, for all
$\sigma$ and $\tau$, $\sigma \prec_w \tau$ implies
$\sigma \prec_v \tau$.\footnote{The reason for this order reversal is
  that the relations occur as hypotheses, in negative position, in
  definition of subtyping: if we have $v \leq w$ and
  $\mc{type}~\ty{v\alpha}{t}$, it is safe to assume
  $\mc{type}~\ty{w\alpha}{t}$: $\sigma \prec_w \sigma'$ implies
  $\sigma \prec_v \sigma'$, which implies
  $\ty{\sigma}{t} \leq \ty{\sigma'}{t}$. One may also see it, as Abel
  notes, as an ``information order'': knowing that
  $\sigma \prec_{\vplus} \tau$ ``gives you more information'' than
  knowing that $\sigma \prec_{\virr} \tau$, therefore
  $\virr \leq \vplus$.}
This reflexive, partial order is described by the following lattice diagram:
$$
\xymatrix@=1.4ex{
 & \veq \ar@{-}[dr] & \\
\vplus \ar@{-}[ur] \ar@{-}[dr] & & - \\
& \virr \ar@{-}[ur] & \\
}
$$
That is,  all variances are smaller than $\veq$ and bigger than $\virr$. 

\begin{version}{\False}
Conversely, if we assume that $D[C[\_]]$ has variance $u$, and know
that $D[\_]$ has variance $w$, what must be the variance of $C[\_]$
alone ? For a fixed $w$, the composition operation
$v \mapsto \varcomp v w$ has a ``quasi-inverse'', which we will write
$u \mapsto \vardiv u w$. We say ``quasi'' because it does not verify
$\vardiv {(\varcomp v w)} w = v$, but only the weaker
$\vardiv {(\varcomp v w)} w \leq v$. More generally, Abel gives the
following connection:
  \[ \forall u, v, w,\quad u \leq \varcomp v w \iff \vardiv u w \leq v \]
The operation table for $\vardiv u w$ is the following:
$$
  \begin{tabular}{r|c|c|c|c|l}
    \multicolumn{1}{r}{$\vardiv u w$}
    & \multicolumn{1}{c}{$\veq$}
    & \multicolumn{1}{c}{$\vplus$}
    & \multicolumn{1}{c}{$\vminus$}
    & \multicolumn{1}{c}{$\virr$}
    & $w$
    \\ \cline{2-5}
    $\veq$    & $\veq$  & $\veq$    & $\veq$    & $\veq$ \\ \cline{2-5}
    $\vplus$  & $\virr$ & $\vplus$  & $\vminus$ & $\veq$ \\ \cline{2-5}
    $\vminus$ & $\virr$ & $\vminus$ & $\vplus$  & $\veq$ \\ \cline{2-5}
    $\virr$   & $\virr$ & $\virr$   & $\virr$   & $\veq$ \\ \cline{2-5}
    \multicolumn{1}{r}{$u$}
    & \multicolumn{1}{c}{}
    & \multicolumn{1}{c}{}
    & \multicolumn{1}{c}{}
    & \multicolumn{1}{c}{}
    & \\
  \end{tabular}
$$
This operation is associative, and further we have that
$\vardiv{(\vardiv u w)} w' = \vardiv u {(\varcomp w w')}$. It is also
monotonous, but contravariant in its right argument: if $v \leq v'$
and $w \geq w'$, then $\vardiv v w \leq \vardiv {v'} {w'}$.
\end{version}

From the order lattice on variances we can define join $\vee$ and meet
$\wedge$ of variances: $v \vee w$ is the biggest variance such that
$v \vee w \leq v$ and $v \vee w \leq v$; conversely, $v \wedge w$ is
the lowest variance such that $v \leq v \wedge w$ and
$w \leq v \wedge w$. Finally, the composition operation is monotonous:
if $v \leq v'$ then $\varcomp w v \leq \varcomp w {v'}$
(and $\varcomp v w \leq \varcomp {v'} w$).

We will frequently manipulate vectors $\bar{v\alpha}$, of variable
associated with variances, which correpond to the ``context'' $\Gamma$
of a type declaration. We extend our operation pairwise on those
contexts: $\Gamma \vee \Gamma'$ and $\Gamma \wedge \Gamma'$, and the
ordering between contexts $\Gamma \leq \Gamma'$. We also extend the
variance-dependent subtyping relation $\rel{\prec_v}$, which becomes
an order $\rel{\prec_\Gamma}$ between vectors of type of the same
length: $\bs \prec_{\bar{v\alpha}} \bs'$ holds when for all $i$ we
have $\sigma_i \prec_{v_i} \sigma'_i$.

\subsection{Variance assignment in ADTs}

\paragraph{A counter-example}

To have a sound type system, some datatype declarations must be
rejected. 
Assume (only for this example) that we have two base types $\tyc{int}$ and
$\tyc{bool}$ such that $\tyc{bool} \leqslant \tyc{int}$ and 
$\tyc{int} \not\leqslant \tyc{bool}$. 
Consider the following type declaration: 
\begin{lstlisting}
  type $\ty{(\vplus\alpha, \vplus\beta)}{t}$ =
    | Fun of $\alpha \to \beta$
\end{lstlisting}
If it were accepted, we could build type the following program that deduces
from the $(+\alpha)$ variance that $\ty{(\tyc{bool},\tyc{bool})}{t} \leq
\ty{(\tyc{int},\tyc{bool})}{t}$; that is, we could turn the identity
function of type $\tyc{bool}\to \tyc{bool}$ into one of type $\tyc{int}\to
\tyc{bool}$ and then turns an integer into a boolean:
\begin{lstlisting}
  let three_as_bool : $\tyc{bool}$ = 
    match (Fun (fun x -> x) : $\ty{(\tyc{bool}, \tyc{bool})}{t}$ :> $\ty{(\tyc{int}, \tyc{bool})}{t}$) with
      | Fun (danger : $\tyc{int} \to \tyc{bool}$) -> danger 3
\end{lstlisting}

\paragraph{A requirement for type soundness}

We say that the type $\mc{type}~\ty{\Gva}{t}$ defined by the
constructors $(\mc{K}_c~\mc{of}~\app{\tau^c}{\ba})_{\cC}$
is \emph{well-signed} if
\[ \forall \cC, \forall \bs, \forall \bs', \quad
   \ty{\bs}{t} \leq \ty{\bs'}{t} \implies 
     \app{\tau^c}{\bs} \leq \app{\tau^c}{\bs'}
\]
The definition of $\ty{(+\alpha,+\beta)}{t}$ is not well-signed
because we have $\ty{(\bot,\bot)}{t} \leq \ty{(\tyc{int},\bot)}{t}$ 
according to the variance declaration, but we do not have the corresponding
conclusion $(\tyc{int} \to \bot) \leq (\bot \to \bot)$.

This is a simplified version, specialized to simple algebraic
datatypes, of the soundness criterion of Simonet and Pottier. They
proved that this condition is \emph{sufficient}\footnote{It turns out
  that this condition is not \emph{necessary} and can be slightly
  weakened: \Esop {we discuss this in the extended version of this
    article} {we will discuss that later
    (\ref{simonet-pottier-not-complete})}.} for soundness: if all
datatype definitions accepted by the type-checker are well-signed,
then both subject reduction and progress hold---for their static and
dynamic semantics, using the subtyping relation $(\leq)$ we have
defined.
\paragraph {A judgment for variance assignment}

When reformulating the well-signedness requirement of Simonet and
Pottier for simple ADT, in our specific case where the subtyping
relation is defined by variance, it becomes a simple check on the
variance of type definitions. Our example above is unsound as its
claims $\alpha$ covariant while it in fact appears in negative
position in the definition.

\begin{version}{\Not\Esop}
In the context of higher-order subtyping
\cite{polarized-subtyping-for-sized-types}, where type abstractions
are first-class and annotated with a variance
($\lambda v\alpha. \tau$), it is natural to present this check as
a kind checking of the form $\G \der \tau : \kappa$, where
$\G$ is a context $\Gva$ of type variables \emph{associated with
  variances}. For example, if $\vplus\alpha \der \tau : \star$ is provable, it
is sound to consider $\alpha$ covariant in $\tau$. In the context of
a simple first-order monomorphic type calculus, this amounts to
a \emph{monotonicity check} on the type $\tau$ as defined by
\cite{csharp-generalized-constraints}.
Both approaches use judgments of a peculiar form where the
\emph{context} changes when going under a type constructor: to check
$\G \der \sigma \to \tau$, one checks $\G \der \tau$ but
$(\vardiv \G \vminus) \der \sigma$, where $\vardiv \G \vminus$
reverses all the variances in the context $\G$
(turns $\rel\vminus$ into $\rel\vplus$ and conversely). Abel gives an
elegant presentation of this inversion $/$ as an algebraic operation
on variances, a quasi-inverse such that $\vardiv u v \leq w$ if and
only if $u \leq \varcomp w v$. This context change is also reminiscent
of the \emph{context resurrection} operation of the literature on
proof irrelevance (in the style of \cite{pfenning:intextirr}
for example).
\end{version}

\begin{mathparfig}{fig/variance}{Variance assignment}
\inferrule[vc-Var]
  {w\alpha \in \G\\ w \geq v}
  {\G \der \alpha : v}

\inferrule[vc-Constr]
  {\G \der \mc{type}~\ty{\bar{w\alpha}}{t}\\
   \forall i,\ \G \der \sigma_i : \varcomp v {w_i}}
  {\G \der \ty{\bar{\sigma}}{t} : v}
\end{mathparfig}

\begin{version}{\Not\Esop}
  We chose an equivalent but more conventional style where the context
  of subderivation does not change: instead of a judgment
  $\G \der \tau$ that becomes ${(\vardiv \G u) \der \sigma}$ when
  moving to a subterm of variance $u$, we define a judgment of the
  form $\G \der \tau : v$, that evolves into
  $\G \der \sigma : (\varcomp v u)$. The two styles are equally
  expressive: our judgment $\G \der \tau : v$ holds if and only if
  $(\vardiv \G v) \der \tau$ holds in Abel's system---but we found
  that this one extends more naturally to checking decomposability, as
  will later be necessary. The inference rules for the judgment
  $\G \der \tau : v$ are defined on Figure~\ref{fig/variance}.
\end{version}
\begin{version}{\Esop}
  We define a judgment to check the variance of a type
  expression. Given a context $\G$ of the form $\bar{v\alpha}$,
  that is, where each variable is annotated with a variance, the
  judgment $\G \der \tau : v$ checks that the
  expression $\tau$ varies along $v$ when the
  variables of $\tau$ vary along their variance in $\G$. 
For example,
  $(\vplus\alpha) \vdash \app{\tau}{\alpha} : \vplus$ holds when
  $\app{\tau}{\alpha}$ is covariant in its variable $\alpha$.  The
  inference rules for the judgment $\G \der \tau : v$ are defined
  on Figure~\ref{fig/variance}.

  The parameter $v$ evolves when going into subderivations: when
  checking $\G \der \tau_1 \to \tau_2 : v$, contravariance is
  expressed by checking
  $\G \der \tau_1 : \rel{\varcomp v \vminus}$. Previous work
  (on variance as \cite{polarized-subtyping-for-sized-types} and
  \cite{csharp-generalized-constraints}, but also on irrelevance as in
  \cite{pfenning:intextirr}) used no such parameter, but modified the
  context instead, checking $\vardiv \G \vminus \der \tau_1$ for
  some ``variance cancellation'' operation $v \vardiv w$
  (see \cite{polarized-subtyping-for-sized-types} for
  a principled presentation). Our own inference rules preserve the
  same context in the whole derivation and can be more easily adapted
  to the decomposability judgment $\G \der \tau : v \To v'$ that we
  introduce in \S\ref{sec/syntactic-decomposability}.
\end{version}

\paragraph{A semantics for variance assignment}

This syntactic judgment $\Gamma \der \tau : v$ corresponds to
a semantics property about the types and context involved, which
formalizes our intuition of ``when the variables vary along $\Gamma$,
the expression $\tau$ varies along $v$''. We also give a few formal
results about this judgment.

\begin{definition}[Interpretation of the variance checking judgment]
\label{def/vc/semantics}\indent\\
We write $\sem{\G \der \tau : v}$ for the property:
$
\forall \bs, \bs', \;
    \bs \prec_{\G} \bs'
    \implies \app \tau \bs \prec_v \app \tau {\bs'}
$.
\end{definition}
\begin{lemma}[Correctness of variance checking]
  \label{lem/variance-checking-correct}
  $\G \der \tau : v$ is provable if and only if
  $\sem{\G \der \tau : v}$ holds.
\end{lemma}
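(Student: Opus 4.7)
The plan is to prove both directions simultaneously by induction on $\tau$, relying on one algebraic fact that I will call \emph{generalized inversion}: for a type $\mc{type}~\ty{\bar{w\alpha}}{t}$, one has $\ty\bs t \prec_v \ty{\bs'} t$ if and only if for each $i$, $\sigma_i \prec_{\varcomp v {w_i}} \sigma'_i$. The case $v = \vplus$ is exactly the syntax-directed inversion principle already pointed out on page~\pageref{subtyping_inversion}; the case $v = \vminus$ follows by swapping the two sides; the case $v = \veq$ follows by intersection, using that $\rel{\prec_\veq}$ is $\rel\leq \cap \rel\geq$ and that $\varcomp \veq w = \veq$ whenever $w \neq \virr$; and the case $v = \virr$ is trivial since $\varcomp \virr {w_i} = \virr$ makes the right-hand side vacuous.

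For the forward direction (soundness) I proceed by induction on the derivation of $\G \der \tau : v$ and fix $\bs \prec_\G \bs'$. In the case of rule \textsc{vc-Var}, with $w\alpha \in \G$ and $w \geq v$, the hypothesis gives $\sigma_\alpha \prec_w \sigma'_\alpha$, and by definition $w \geq v$ means exactly $\rel{\prec_w} \subseteq \rel{\prec_v}$, so $\sigma_\alpha \prec_v \sigma'_\alpha$. In the case of rule \textsc{vc-Constr} for $\tau = \ty{\bar\sigma}{t}$, the IH applied to each premise $\G \der \sigma_i : \varcomp v {w_i}$ yields $\app{\sigma_i}\bs \prec_{\varcomp v {w_i}} \app{\sigma_i}{\bs'}$, and the generalized inversion principle used from right to left yields $\ty{\bar{\app\sigma\bs}}{t} \prec_v \ty{\bar{\app\sigma{\bs'}}}{t}$, which is the desired conclusion.

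For the backward direction (completeness) I proceed by induction on $\tau$. If $\tau = \alpha$ with $w\alpha \in \G$, I must show $w \geq v$: given any $\sigma, \sigma'$ with $\sigma \prec_w \sigma'$, extend them to instantiations $\bs, \bs'$ by setting $\sigma_\beta = \sigma'_\beta$ to any type for every other variable $\beta$ of $\G$; reflexivity of each $\rel{\prec_{v_\beta}}$ gives $\bs \prec_\G \bs'$, and the semantic hypothesis yields $\sigma \prec_v \sigma'$, so $\rel{\prec_w} \subseteq \rel{\prec_v}$, i.e., $w \geq v$, and \textsc{vc-Var} applies. If $\tau = \ty{\bar\sigma}{t}$, by IH it suffices to prove $\sem{\G \der \sigma_i : \varcomp v {w_i}}$ for each $i$: given $\bs \prec_\G \bs'$, the semantic hypothesis on $\tau$ gives $\ty{\bar{\app\sigma\bs}}{t} \prec_v \ty{\bar{\app\sigma{\bs'}}}{t}$, and the generalized inversion principle used from left to right yields $\app{\sigma_i}\bs \prec_{\varcomp v {w_i}} \app{\sigma_i}{\bs'}$. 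The main delicate point of the whole argument is precisely the generalized inversion principle: in our atomic setting it is immediate, but in a richer type system its left-to-right direction would need the $v$-closure hypothesis of Definition~\ref{def/v-closed} for each variance $v$ that appears on the left-hand side.
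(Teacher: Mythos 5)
Your proof is correct and follows essentially the same route as the paper's: soundness by induction on the derivation and completeness by induction on $\tau$, with the constructor case in both directions resting on the two directions of what you call generalized inversion---which the paper invokes more tersely as ``the definition of variance composition'' and ``inversion''---and with the variable case of completeness spelled out via the extension of a single pair to full instantiations. One minor aside: your closing caveat that the left-to-right direction of generalized inversion would need $v$-closure in a richer system is in mild tension with the paper's own remark after Theorem~\ref{thm/inversion} that inversion between two instances of the \emph{same} head constructor does not rely on $v$-closure, but this does not affect the validity of your argument in the atomic setting.
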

\begin{proof}{inline}
  \Case{Soundness: $\G \der \tau : v$ implies
    $\sem{\G \der \tau : v}$} By induction on the derivation. In
  the variable case this is direct. In the $\ty{\bs}{t}$ case, for
  $\br, \br'$ such that $\br \prec_\G \br'$, we get
  $\forall i, \app{\sigma_i}{\br} \prec_{\varcomp v {w_i}} \app{\sigma_i}{\br'}$
  by inductive hypothesis, which allows to conclude, by definition of
  variance composition, that
  $\app{(\ty{\bs}{t})}{\br} \prec_v \app{(\ty{\bs}{t})}{\br'}$.

  \Case{Completeness: $\sem{\G \der \tau : v}$ implies
    $\G \der \tau : v$} By induction on $\tau$; in the variable
  case this is again direct. In the $\ty{\bs}{t}$ case, given
  $\br \prec_\G \br'$ such that
  $\app{(\ty{\bs}{t})}{\br} \prec_v \app{(\ty{\bs}{t})}{\br'}$ we can
  deduce by inversion that for each variable $\alpha_i$ of variance
  $w_i$ in $\app{\tau}{\ba}$ we have
  $\app{\sigma_i}{\br} \prec_{\varcomp v {w_i}} \app{\sigma_i}{\br'}$,
  which allows us to inductively build the subderivations
  $\G \der \sigma_i : \varcomp v {w_i}$.
\qed
\end{proof}

\begin{lemma}[Monotonicity] 
\label{lem/monotonicity}
  If $\G \der \tau : v$ is provable and $\G \leq \G'$ then
  $\G' \der \tau : v$ is provable.
\end{lemma}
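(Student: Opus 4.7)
The plan is to reduce this immediately to the semantic characterization given by Lemma~\ref{lem/variance-checking-correct}, so that the proof becomes a single observation about the order $\rel\leq$ on variances. The key fact is that the ordering on contexts is \emph{contravariant} with respect to the relations they induce: if $\G \leq \G'$, then for each position $i$ the variances satisfy $v_i \leq v'_i$, which by definition of the order on variances ($v \leq w \iff \rel{\prec_v} \supseteq \rel{\prec_w}$) yields $\rel{\prec_{v_i}} \supseteq \rel{\prec_{v'_i}}$. Pointwise, this gives $\bs \prec_{\G'} \bs' \implies \bs \prec_\G \bs'$.

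Assume $\G \der \tau : v$. By Lemma~\ref{lem/variance-checking-correct}, $\sem{\G \der \tau : v}$ holds, that is, for all $\bs, \bs'$ with $\bs \prec_\G \bs'$ we have $\app\tau\bs \prec_v \app\tau{\bs'}$. To establish $\sem{\G' \der \tau : v}$, pick $\bs, \bs'$ with $\bs \prec_{\G'} \bs'$; by the observation above, $\bs \prec_\G \bs'$, hence $\app\tau\bs \prec_v \app\tau{\bs'}$. Applying the other direction of Lemma~\ref{lem/variance-checking-correct}, we conclude $\G' \der \tau : v$ is provable.

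If a more syntactic argument is preferred (and this is the proof I would write out as a backup, to avoid depending on the correctness lemma), it goes by a direct induction on the derivation of $\G \der \tau : v$. In the \textsc{vc-Var} case, from $w\alpha \in \G$ with $w \geq v$ and $\G \leq \G'$ we obtain some $w'\alpha \in \G'$ with $w' \geq w \geq v$, and \textsc{vc-Var} applies in $\G'$. In the \textsc{vc-Constr} case, the induction hypothesis applied to each premise $\G \der \sigma_i : \varcomp v {w_i}$ yields $\G' \der \sigma_i : \varcomp v {w_i}$, and \textsc{vc-Constr} recombines these into $\G' \der \ty{\bs}{t} : v$.

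The only point where one has to be careful is getting the direction of the ordering right: $\G \leq \G'$ means $\G'$ carries the \emph{more informative} variances (closer to $\veq$ in the lattice), so the relation $\rel{\prec_{\G'}}$ is narrower than $\rel{\prec_\G}$ and any monotonicity property already witnessed for the larger relation $\rel{\prec_\G}$ automatically restricts to the smaller one. I do not anticipate any other obstacle.
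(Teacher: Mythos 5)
Your proof is correct, and on both routes: the semantic detour through Lemma~\ref{lem/variance-checking-correct} and the direct induction over \textsc{vc-Var}/\textsc{vc-Constr} each go through, and you have the one delicate point right — $\G \leq \G'$ makes $\rel{\prec_{\G'}}$ the \emph{narrower} relation, so the universally quantified implication in $\sem{\G \der \tau : v}$ only gets easier. The paper itself dismisses this lemma with the single word ``Obvious,'' so there is no authorial proof to diverge from; either of your arguments is a faithful expansion of what the authors had in mind.
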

\begin{proof}{hide}
Obvious. 
\end{proof}
\begin{version}{\Not\Esop}
\begin{lemma}
  If $\G \der \tau : v$ and $\G' \der \tau : v$ both hold,
  then $(\G \vee \G') \der \tau : v$ also holds.
\end{lemma}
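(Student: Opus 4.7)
The plan is to prove this lemma by structural induction on the type $\tau$, which drives both derivations $\G \der \tau : v$ and $\G' \der \tau : v$ in a syntax-directed manner: for a fixed $\tau$, exactly one of the two rules vc-Var and vc-Constr can conclude. So the two given derivations have the same shape, and I will pair them up case by case.

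In the variable case $\tau = \alpha$, both hypotheses unfold via vc-Var to witnesses $w\alpha \in \G$ and $w'\alpha \in \G'$ with $w \geq v$ and $w' \geq v$. The pointwise definition of $\G \vee \G'$ places $(w \vee w')\alpha$ in the combined context, so it suffices to verify $w \vee w' \geq v$. This is precisely the universal property of the meet: since $v$ lies below both $w$ and $w'$, it also lies below their greatest lower bound $w \vee w'$. A single application of vc-Var then yields $(\G \vee \G') \der \alpha : v$.

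In the constructor case $\tau = \ty{\bs}{t}$ with $\mc{type}~\ty{\bar{w\alpha}}{t}$, both derivations must end with vc-Constr, so for each $i$ we have $\G \der \sigma_i : \varcomp v {w_i}$ together with $\G' \der \sigma_i : \varcomp v {w_i}$ at the \emph{same} target variance $\varcomp v {w_i}$. The induction hypothesis applied componentwise gives $(\G \vee \G') \der \sigma_i : \varcomp v {w_i}$, and a final application of vc-Constr closes the case.

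The proof presents no real obstacle, the only algebraic content sitting in the variable case. One might be tempted to argue via the semantic characterization of Lemma~\ref{lem/variance-checking-correct}, but this path is actually awkward: a hypothesis $\bs \prec_{\G \vee \G'} \bs'$ relates strictly more pairs than either $\bs \prec_\G \bs'$ or $\bs \prec_{\G'} \bs'$, so neither of the two semantic hypotheses applies directly to a given pair, and one is forced back onto a case analysis that mirrors the syntactic derivation. The structural induction above sidesteps this subtlety cleanly.
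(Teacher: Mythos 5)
Your proof is correct, and it is exactly the argument the paper has in mind (the paper dismisses this lemma as obvious without writing it out): a structural induction on the syntax-directed judgment, with the only content in the variable case, where $v \leq w$ and $v \leq w'$ give $v \leq w \vee w'$ by the universal property of the greatest lower bound. Your closing remark about why the semantic route via Lemma~\ref{lem/variance-checking-correct} is not immediate (since $\rel{\prec_{w \vee w'}}$ strictly contains $\rel{\prec_w} \cup \rel{\prec_{w'}}$ in general) is also accurate.
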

\begin{proof}{hide}
Obvious
\end{proof}
\begin{corollary}[Principality] \label{principality}
  For any type $\tau$ and any variance $v$, there exists a minimal context
  $\Delta$ such that $\Delta \der \tau : v$ holds. That is, for any other
  context $\G$ such that $\G \der \tau : v$, we have $\Delta \leq
  \G$.
\end{corollary}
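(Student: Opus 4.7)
The plan is to exploit the preceding join-closure lemma. Fix a finite variable set containing $\mathrm{fv}(\tau)$ and let $S = \{\Gamma \mid \Gamma \vdash \tau : v\}$ range over contexts on this set. First, $S$ is nonempty: the context assigning $\veq$ to every variable derives $\tau : v$ by a trivial induction on $\tau$ (vc-Var is satisfied since $\veq \geq v'$ for any $v'$, and the recursive premises $\varcomp v {w_i}$ in vc-Constr are handled by the same induction). Second, $S$ is closed under the binary operation $\vee$, by the preceding lemma. Finally, $S$ is finite since both the variable set and the variance lattice are finite, so one may define $\Delta = \bigvee_{\Gamma \in S} \Gamma$; iterated application of the closure lemma gives $\Delta \in S$, and since $\vee$ is a pointwise lower bound we have $\Delta \leq \Gamma$ for every $\Gamma \in S$, which is the required minimality.

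Alternatively, one may build $\Delta$ directly by structural induction on $\tau$. For $\tau = \alpha$, let $\Delta$ assign $v$ to $\alpha$ and $\virr$ to every other variable: vc-Var applies since $v \geq v$, and for any derivable $\Gamma$ inversion of vc-Var gives $w\alpha \in \Gamma$ with $w \geq v$, whence $\Delta \leq \Gamma$ pointwise (using $\virr \leq w$ everywhere else). For $\tau = \ty{\bs}{t}$ with $\mc{type}~\ty{\bar{w\alpha}}{t}$, the induction hypothesis supplies minimal contexts $\Delta_i$ for the premises $\sigma_i : \varcomp v {w_i}$; take $\Delta = \bigwedge_i \Delta_i$. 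Monotonicity (Lemma~\ref{lem/monotonicity}) promotes each $\Delta_i \vdash \sigma_i : \varcomp v {w_i}$ to $\Delta \vdash \sigma_i : \varcomp v {w_i}$, so vc-Constr gives $\Delta \vdash \tau : v$; conversely, inverting any derivation $\Gamma \vdash \tau : v$ yields $\Gamma \vdash \sigma_i : \varcomp v {w_i}$ for each $i$, so $\Delta_i \leq \Gamma$ by induction hypothesis, and therefore $\Delta = \bigwedge_i \Delta_i \leq \Gamma$.

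Neither approach has a serious obstacle, but the one point that deserves vigilance is the paper's variance-lattice convention, in which $\vee$ denotes the greatest lower bound of the order while $\wedge$ denotes the least upper bound. This is why $\wedge$ (not $\vee$) appears in the inductive construction when combining the $\Delta_i$, while the join-closure approach uses $\vee$ because a minimal element is precisely a pointwise lower bound. I would present the first approach in the paper since it reuses the immediately preceding lemma and yields a two-line argument.
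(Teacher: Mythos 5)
Your first argument is exactly the paper's intended route: the statement is presented as a corollary of the immediately preceding lemma that the set of valid contexts is closed under $\vee$ (the pointwise greatest lower bound in the paper's convention), together with nonemptiness and finiteness, and the paper leaves this step implicit. The alternative inductive construction is also correct, and you have handled the one real pitfall — the paper's swapped $\vee$/$\wedge$ naming — properly in both versions.
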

\end{version}

\begin{version}{\Esop}
\begin{lemma}[Principality] \label{principality}
  For any type $\tau$ and any variance $v$, there exists a minimal context
  $\Delta$ such that $\Delta \der \tau : v$ holds. That is, for any other
  context $\G$ such that $\G \der \tau : v$, we have $\Delta \leq
  \G$.
\end{lemma}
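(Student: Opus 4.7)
The plan is to proceed by structural induction on $\tau$, constructing the principal context $\Delta$ explicitly. The two ingredients I rely on are the lattice structure of variances with bottom $\virr$ (which lifts pointwise to contexts, so that any finite family of contexts has a least upper bound) and Lemma~\ref{lem/monotonicity}, which guarantees that a valid derivation remains valid when the context is enlarged.

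For the base case $\tau = \alpha$, take $\Delta$ to map $\alpha$ to $v$ and every other variable to $\virr$. Rule vc-Var gives $\Delta \der \alpha : v$ directly since $v \geq v$. For minimality, any $\G$ with $\G \der \alpha : v$ must, by inversion on vc-Var (the only applicable rule), have $\G(\alpha) \geq v$, while $\G(\beta) \geq \virr$ holds trivially for $\beta \neq \alpha$; thus $\Delta \leq \G$ pointwise.

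For the inductive step $\tau = \ty{\bs}{t}$, where $\tyc{t}$ has declared variance $\bar{w\alpha}$, the induction hypothesis yields for each $i$ a minimal $\Delta_i$ with $\Delta_i \der \sigma_i : \varcomp v {w_i}$. Let $\Delta$ be the pointwise least upper bound of the family $(\Delta_i)_i$ in the context lattice. Lemma~\ref{lem/monotonicity} promotes each judgment to $\Delta \der \sigma_i : \varcomp v {w_i}$, and rule vc-Constr then delivers $\Delta \der \ty{\bs}{t} : v$. Conversely, if $\G \der \ty{\bs}{t} : v$, then vc-Constr is the only rule whose conclusion matches, so its premises give $\G \der \sigma_i : \varcomp v {w_i}$ for each $i$; by minimality of each $\Delta_i$ we obtain $\Delta_i \leq \G$, and since $\Delta$ is the least upper bound of the $\Delta_i$, we conclude $\Delta \leq \G$.

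The main step that requires some care is the inversion used in both cases, but since each rule of the variance judgment has a syntactically distinct conclusion (a type variable versus a type-constructor application), inversion is immediate, and I therefore expect no real obstacle. The argument mirrors the standard recipe for principal typings, transposed from types to contexts of variance annotations.
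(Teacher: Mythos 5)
Your proof is correct, but it takes a genuinely different route from the paper's. The paper derives principality as a corollary of a closure property: it first records (with the proof omitted as obvious) that if $\G \der \tau : v$ and $\G' \der \tau : v$ both hold, then $(\G \vee \G') \der \tau : v$ holds as well, where $\G \vee \G'$ is the pointwise greatest lower bound of the two contexts. Since the set of contexts validating the judgment is finite, nonempty (the all-$\veq$ context always works), upward closed by Lemma~\ref{lem/monotonicity}, and closed under binary meets, it has a minimum element, which is $\Delta$. You instead construct $\Delta$ explicitly by structural induction on $\tau$: the variable gets $v$ and everything else gets $\virr$ in the base case, and in the constructor case $\Delta$ is the least upper bound of the subterms' principal contexts. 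Your inductive step is sound: by minimality in one direction and Lemma~\ref{lem/monotonicity} in the other, $\G \der \sigma_i : \varcomp v {w_i}$ holds exactly when $\G$ lies above $\Delta_i$, so the least context validating all premises of \Rule{vc-Constr} is precisely the join of the $\Delta_i$ (which exists because the variance order is a genuine lattice). What your version buys is an effective construction --- it is essentially the mode-inverted judgment, with $\G$ as output rather than input, that the paper only sketches later in its paragraph on the pragmatic evaluation of the criterion. What the paper's version buys is brevity and independence from the particular shape of the inference rules, since it uses only monotonicity and meet-closure of the judgment. Both arguments are valid.
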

\end{version}

\begin{version}{\Not\Esop}
\paragraph{Inversion of subtyping}

We have mentioned in \ref{subtyping_inversion} the inversion
properties of our subtyping relation. From
$\ty{\bs}{t} \leq \ty{\bs'}{t}$ we can deduce subtyping relations on
the type parameters $\sigma_i, \sigma'_i$. This can be generalized to
any type expression $\app{\tau}{\ba}$:

\begin{theorem}[Inversion] \label{thm/inversion} For any type
  $\app{\tau}{\ba}$, variance $v$, and type sequences $\bs$ and
  $\bs'$, the subtyping relation
  $\app{\tau}{\bs} \prec_v \app{\tau}{\bs'}$ holds if and only if the
  judgment $\G \der \tau : v$ holds for some context $\G$ such
  that $\bs \prec_\G \bs'$.
\end{theorem}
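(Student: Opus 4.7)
The plan is to prove both directions separately. The ``if'' direction is immediate from the soundness half of Lemma~\ref{lem/variance-checking-correct}: given $\G \der \tau : v$ and $\bs \prec_\G \bs'$, the semantic interpretation $\sem{\G \der \tau : v}$ directly yields $\app{\tau}{\bs} \prec_v \app{\tau}{\bs'}$.

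For the ``only if'' direction, I would proceed by structural induction on $\tau$. In the variable case $\tau = \alpha_j$, I take $\G$ that assigns variance $v$ to $\alpha_j$ and $\virr$ to all other variables: this makes $\G \der \alpha_j : v$ immediate from vc-Var (since $v \geq v$), and $\bs \prec_\G \bs'$ holds because the hypothesis gives $\sigma_j \prec_v \sigma'_j$ while the other components are trivially related via $\prec_{\virr}$. In the constructor case $\tau = \ty{\br}{t}$ with $\mc{type}~\ty{\bar{w\alpha}}{t}$, I first apply the inversion property of atomic subtyping (\S\ref{subtyping_inversion}), generalised from $\vplus$ to an arbitrary variance $v$, to obtain $\app{\rho_i}{\bs} \prec_{\varcomp v {w_i}} \app{\rho_i}{\bs'}$ for each $i$; this generalisation is immediate by splitting on $v$, and becomes vacuous when $v = \virr$. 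The induction hypothesis then supplies contexts $\G_i$ with $\G_i \der \rho_i : \varcomp v {w_i}$ and $\bs \prec_{\G_i} \bs'$. I finally combine them by taking the pointwise join $\G = \bigwedge_i \G_i$: Lemma~\ref{lem/monotonicity} lifts each premise to $\G$, and vc-Constr concludes $\G \der \ty{\br}{t} : v$.

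The main subtlety lies in this last combination step: we must check that $\bs \prec_\G \bs'$ still holds for $\G = \bigwedge_i \G_i$. This relies on the identity $\rel{\prec_{v_1 \wedge v_2}} = \rel{\prec_{v_1}} \cap \rel{\prec_{v_2}}$, which follows from the definition of $\wedge$ as the least upper bound in the coarser-than order on variances. Since each $\sigma_j$ satisfies $\sigma_j \prec_{(\G_i)_j} \sigma'_j$ uniformly in $i$, it lies in the intersection, so $\sigma_j \prec_{(\bigwedge_i \G_i)_j} \sigma'_j$ as required. A minor corner case is the constructor case with $v = \virr$, where inversion yields nothing useful: one simply takes $\G$ to be all $\virr$, using the easy sublemma that such a context derives $\G \der \tau : \virr$ for any $\tau$ (a straightforward induction on $\tau$, since $\varcomp \virr {w} = \virr$ in every entry of the composition table).
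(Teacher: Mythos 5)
Your proof is correct and takes essentially the same route as the paper's: the ``if'' direction is read off from soundness of the variance judgment, and the ``only if'' direction is a structural induction that applies head-constructor inversion generalized to an arbitrary variance $v$, invokes the induction hypothesis on each argument, and recombines the resulting contexts via their least upper bound together with monotonicity (Lemma~\ref{lem/monotonicity}) before concluding with \textsc{vc-Constr}. Your explicit justifications --- the identity $\rel{\prec_{v_1 \wedge v_2}} = \rel{\prec_{v_1}} \cap \rel{\prec_{v_2}}$, the $\virr$ padding in the variable case, and the $v = \virr$ corner case --- merely spell out steps the paper leaves implicit.
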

\begin{proof}{}
  The  reverse implication, 
  is a direct application of the soundness of the variance judgment.

  The direct implication is proved by induction on
  $\app{\tau}{\ba}$. The variable case is direct: if
  $\app{\alpha}{\sigma} \prec_v \app{\alpha}{\sigma'}$ holds then for
  $\G$ equal to $(v\alpha)$ we indeed have $v\alpha \der \alpha : v$
  and $\sigma \prec_\G \sigma'$.

  In the $\ty{\bt}{t}$ case, we have that
  $\app{(\ty{\bt}{t})}{\bs} \prec_v \app{(\ty{\bt}{t})}{\bs'}$. Suppose
  the variance of $\ty{\ba}{t}$ is $\bar{w\alpha}$: by inversion on
  the head type constructor $\tyc{t}$ we deduce that for each $i$,
  $\app{\tau_i}{\bs} \prec_{\varcomp v {w_i}} \app{\tau_i}{\bs'}$. Our
  induction hypothesis then gives us a family of contexts
  $(\G_i)_\iI$ such that for each $i$ we have
  $\G_i \der \tau_i : \varcomp v {w_i}$. Furthermore,
  $\bs \prec_{\G_i} \bs'$ holds for all $\G_i$, which means
  that $\bs \prec_{\mathop{\wedge}_\iI \G_i} bs'$. Let's define
  $\G$ as $\mathop{\wedge}_\iI \G_i$. By construction we have
  $\G \geq \G_i$, so by monotonicity (Lemma~\ref{lem/monotonicity}) we
  have $\G \der \tau_i : \varcomp v {w_i}$ for each $i$. This
  allows us to conclude $\G \der \ty{\bt}{t} : v$ as desired.
\qed
\end{proof}

Note that this would work even for type constructors that are not
$v$-closed: we are not comparing a $\app{\tau}{\bs}$ to any type
$\tau'$, but to a type $\app{\tau}{\bs'}$ sharing the same
structure---the head constructors are always the same.

For any given pair $\bs, \bs'$ such that
$\app{\tau}{\bs} \prec_v \app{\tau}{\bs'}$ we can produce a context
$\G$ such that $\bs \prec_\G \bs'$. But is there a common
context that would work for any pair? Indeed, that is the lowest
possible context, the principal context $\G$ such that
$\G \der \tau : v$.

\begin{corollary}[Principal inversion] 
\label{cor/principal-inversion}
  If $\Delta$ is principal for $\Delta \der \tau : v$, then for any
  type sequences $\bs$ and $\bs'$, the subtyping relation
  $\app{\tau}{\bs} \prec_v \app{\tau}{\bs'}$ implies
  $\bs \prec_\Delta \bs'$.
\end{corollary}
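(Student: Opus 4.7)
The plan is to obtain the conclusion as a direct combination of the Inversion Theorem~\ref{thm/inversion}, the principality of $\Delta$, and the order-theoretic meaning of $\leq$ between variance contexts. No induction on $\tau$ is needed: the structural work has already been absorbed into Theorem~\ref{thm/inversion}.

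First I would apply Theorem~\ref{thm/inversion} to the hypothesis $\app{\tau}{\bs} \prec_v \app{\tau}{\bs'}$, which produces some context $\G$ such that $\G \der \tau : v$ holds and $\bs \prec_\G \bs'$. This context depends on the particular pair $\bs, \bs'$, so at this point we do not yet have a relation to the fixed principal context $\Delta$; that is provided by the next step.

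Then, since $\Delta$ is principal for $\Delta \der \tau : v$ (Corollary~\ref{principality}), the existence of the derivation $\G \der \tau : v$ gives us $\Delta \leq \G$. Recall from the ordering on variances that $v_i \leq w_i$ is defined to mean $\rel{\prec_{v_i}} \supseteq \rel{\prec_{w_i}}$; extended pointwise to contexts, $\Delta \leq \G$ therefore means that the relation $\rel{\prec_\Delta}$ is coarser than $\rel{\prec_\G}$. Applying this to the pair $\bs, \bs'$ converts $\bs \prec_\G \bs'$ into $\bs \prec_\Delta \bs'$, which is the desired conclusion.

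No step is really an obstacle: the corollary is essentially a repackaging of Theorem~\ref{thm/inversion} together with the universal property of the principal context. The only thing to be careful about is the direction of the order on variances, which is reversed with respect to the inclusion of the associated relations; having used this convention consistently in defining principality, the inclusion $\rel{\prec_\Delta} \supseteq \rel{\prec_\G}$ falls out exactly the right way to push $\bs \prec_\G \bs'$ down to $\bs \prec_\Delta \bs'$.
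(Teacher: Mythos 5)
Your proof is correct and follows exactly the same route as the paper's: invoke Theorem~\ref{thm/inversion} to obtain a context $\G$ with $\G \der \tau : v$ and $\bs \prec_\G \bs'$, use principality to get $\Delta \leq \G$, and unfold the coarser-than ordering to conclude $\bs \prec_\Delta \bs'$. Your extra care about the direction of the variance order is sound and matches the paper's convention.
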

\begin{proof}{}
  Let $\Delta$ be the principal context such that
  $\Delta \der \tau : v$ holds. For any $\bs, \bs'$ such that
  $\app{\tau}{\bs} \prec_v \app{\tau}{\bs'}$, by inversion
  (Theorem~\ref{thm/inversion}) we have some $\G$ such that
  $\G \der \tau : v$ and $\bs \prec_\G \bs'$. By definition of $\Delta$,
  $\Delta \leq \G$ so $\bs \prec_\Delta \bs'$ also holds. That is,
  $\bs \prec_\Delta \bs'$ holds for any $\bs, \bs'$ such that
  $\app{\tau}{\bs} \prec_v \app{\tau}{\bs'}$.
\qed
\end{proof}
\end{version}

\begin{version}{\Esop}
  We can generalize inversion of head type
  constructors~(\S\ref{subtyping_inversion}) to whole type
  expressions. The most general inversion is given by the principal
  context.

\begin{theorem}[Inversion]
  \label{thm/inversion}
  \label{cor/principal-inversion}

  For any type $\app{\tau}{\ba}$, variance $v$, and type sequences
  $\bs$ and $\bs'$, the subtyping relation
  $\app{\tau}{\bs} \prec_v \app{\tau}{\bs'}$ holds if and only if the
  judgment $\G \der \tau : v$ holds for some context $\G$ such
  that $\bs \prec_\G \bs'$.

  Furthermore, if $\app{\tau}{\bs} \prec_v \app{\tau}{\bs'}$, then
  $\bs \prec_\Delta \bs'$ holds, where $\Delta$ is the more general
  context such that $\Delta \der \tau : v$ holds.
\end{theorem}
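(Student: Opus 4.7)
The plan is to prove the biconditional by structural induction on $\tau$, and then derive the principal-context statement as a direct consequence via Lemma~\ref{principality}. The reverse implication is immediate from the soundness half of Lemma~\ref{lem/variance-checking-correct}: if $\G \der \tau : v$ is derivable, then by definition $\sem{\G \der \tau : v}$ holds, so $\bs \prec_\G \bs'$ yields $\app{\tau}{\bs} \prec_v \app{\tau}{\bs'}$.

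For the forward direction, I would induct on $\tau$. If $\tau$ is a variable $\alpha_j$, the hypothesis reduces to $\sigma_j \prec_v \sigma'_j$, and the context assigning variance $v$ to $\alpha_j$ and $\virr$ to every other variable satisfies both $\G \der \alpha_j : v$ (via rule vc-Var) and $\bs \prec_\G \bs'$. If $\tau$ is a constructor application $\ty{\bar\tau}{t}$ with $\tyc{t}$ of declared variance $\bar{w\alpha}$, I apply the head-constructor inversion of \S\ref{subtyping_inversion} (generalized from $\rel\leq$ to an arbitrary $v$ by treating $\rel\leq$ and $\rel\geq$ components separately) to obtain $\app{\tau_i}{\bs} \prec_{\varcomp v {w_i}} \app{\tau_i}{\bs'}$ for each $i$. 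The induction hypothesis then produces contexts $\G_i$ with $\G_i \der \tau_i : \varcomp v {w_i}$ and $\bs \prec_{\G_i} \bs'$. Setting $\G = \bigwedge_i \G_i$, each $\G_i \leq \G$, so Lemma~\ref{lem/monotonicity} lifts each subderivation to $\G \der \tau_i : \varcomp v {w_i}$, and rule vc-Constr concludes $\G \der \ty{\bar\tau}{t} : v$. The combined relation $\bs \prec_\G \bs'$ holds because, on the four-element variance lattice, $\rel{\prec_{v_1 \wedge v_2}}$ coincides with $\rel{\prec_{v_1}} \cap \rel{\prec_{v_2}}$, which I would verify by a small finite case check.

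The principal-context statement then follows easily: Lemma~\ref{principality} supplies a minimum $\Delta$ with $\Delta \der \tau : v$, and given any $\bs, \bs'$ with $\app{\tau}{\bs} \prec_v \app{\tau}{\bs'}$, the forward inversion produces some $\G$ with $\bs \prec_\G \bs'$ and $\G \der \tau : v$; minimality forces $\Delta \leq \G$, and since the variance order is reversed inclusion of relations, $\bs \prec_\G \bs'$ entails $\bs \prec_\Delta \bs'$. The delicate step is the constructor case, which crucially relies on atomic subtyping: without $v$-closure, a subtyping between a $\tau$-shaped type and an arbitrary $\rho$ would not need to be witnessed by a same-shaped $\rho$, and the inductive inversion would stall. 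The theorem sidesteps this subtlety because both sides share the same expression $\tau$, so the recursion only ever traverses matching head constructors, where atomic inversion applies cleanly and the $v$-closure assumption is not needed.
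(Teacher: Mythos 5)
Your proposal is correct and follows essentially the same route as the paper: the reverse direction by soundness of the variance judgment, the forward direction by structural induction on $\tau$ using head-constructor inversion, merging the inductively obtained contexts with their least upper bound and lifting the subderivations by monotonicity, and the principal-context claim as an immediate consequence of principality. The only (welcome) addition is that you explicitly check that the least upper bound of variances corresponds to intersecting the relations, a step the paper leaves implicit.
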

\end{version}

\paragraph{Checking variance of type definitions}

We have all the machinery in place to explain the checking of
ADT variance declarations.  The well-signedness criterion of Simonet
and Pottier gives us a general semantic characterization of which
definitions are correct: a definition 
$\mc{type}~\ty{\Gva}{t} =
(\mc{K}_c~\mc{of}~(\app{\tau^c}{\ba})_{c \in C}$ 
is correct if, for each constructor $c$, we have: 
\[ \forall \bs, \forall \bs', \quad
   \ty{\bs}{t} \leq \ty{\bs'}{t} \implies 
     \app{\tau}{\bs} \leq \app{\tau}{\bs'}
\]
By inversion of subtyping, $\ty{\bs}{t} \leq \ty{\bs'}{t}$ implies
$\sigma_i \prec_{v_i} \sigma'_i$ for all $i$. 
Therefore, it suffices to check that:
\[ \forall \bs, \forall \bs', \quad
   (\forall i, \sigma_i \prec_{v_i} \sigma'_i)
     \implies
     \app{\tau}{\bs} \leq \app{\tau}{\bs'}
\]
This is exactly the semantic property corresponding to the
judgment ${\Gva\der\tau:\rel\vplus}$! 
That is, we have reduced  soundness verification of an algebraic type definition
to a mechanical syntactic check on the constructor argument type.

This syntactic criterion is very close to the one implemented in
actual type checkers, which do not need to decide general subtyping
judgments---or worse solve general subtyping constraints---to check
variance of datatype parameters.  Our aim is now to find a similar
syntactic criterion for the soundness of variance annotations on
guarded algebraic datatypes, rather than simple algebraic datatypes.

\subsection{Variance annotations in \GADTs}

\paragraph {A general description of \GADTs}

When used to build terms of type $\ty{\ba}{t}$, a constructor
$\mc{K}~\mc{of}~\tau$ behaves like a function of type
$\forall \ba. (\tau \to \ty{\ba}{t})$. Remark that the codomain is
exactly $\ty{\ba}{t}$, the type $\tyc{t}$ instantiated with parametric
variables. \GADTs arise by relaxing this restriction, allowing to
specify constructors with richer types of the form
$\forall \ba. (\tau \to \ty{\bs}{t})$. See for example the declaration
of constructor \code{Prod} in the introduction:
\begin{lstlisting}[xleftmargin=4em]
  | Prod : $\forall \beta \gamma .\ \ty{\beta}{expr} * \ty{\gamma}{expr}
                                  \to \ty{(\beta*\gamma)}{expr}$
\end{lstlisting}
Instead of being just $\ty{\alpha}{expr}$, the codomain is now
$\ty{(\beta*\gamma)}{expr}$. We moved from simple algebraic datatypes
to so-called \emph{generalized} algebraic datatypes. This approach is
natural and convenient for the users, so it is exactly the syntax
chosen in languages with explicit \GADTs support, such as Haskell and
OCaml, and is reminiscent of the inductive datatype definitions of
dependently typed languages.

\begin{version}{\Not\Esop}
However, for formal study of \GADTs, a different formulation based on
equality constraints is preferred. The idea is that we will force
again the codomain to be exactly $\ty{\alpha}{expr}$, but allow
additional type equations such as
$\alpha = \beta * \gamma$ in this example:
\begin{lstlisting}[xleftmargin=4em]
| Prod : $\forall \alpha.\ \forall \beta \gamma [ \alpha = \beta * \gamma ].\ 
            \ty{\beta}{expr} * \ty{\gamma}{expr}
            \to \ty{\alpha}{expr}$
\end{lstlisting}
This restricted form justifies the name of \emph{guarded} algebraic
datatype. The $\forall\beta\gamma[D].\tau$ notation, coming from Simonet
and Pottier, is a \emph{constrained type scheme}: $\beta,\gamma$ may
only be instantiated with type parameters respecting the constraint
$D$. Note that, as $\beta$ and $\gamma$ do not appear in the codomain
anymore, we may equivalently change the outer universal into an
existential on the left-hand side of the arrow:
\begin{lstlisting}[xleftmargin=4em]
| Prod : $\forall \alpha.\ (\exists \beta \gamma [ \alpha = \beta * \gamma ].\ 
            \ty{\beta}{expr} * \ty{\gamma}{expr})
            \to \ty{\alpha}{expr}$
\end{lstlisting}
In the general case, a \GADT definition for $\ty{\ba}{t}$ is composed of a set
of constructor declarations, each of the form:
\begin{lstlisting}[xleftmargin=4em]
| K : $\forall \ba.\
         (\exists \bb [ \ba = \app \bs \bb ] .\,\app \tau {\ba,\bb})
         \to \ty{\ba}{t}$
\end{lstlisting}
or, reusing the classic notation,
\begin{lstlisting}[xleftmargin=4em]
| K of $\exists \bb [ \ba = \app \bs \bb ] .\,
                            \app \tau {\ba,\bb}$
\end{lstlisting}
Without loss of generality, we can conveniently assume that the
variables $\ba$ do not appear in the parameter type $\tau$
anymore: if some $\alpha_i$ appears in $\tau$, one may always pick
a fresh existential variable $\beta$, add the constraint
$\alpha = \beta$ to $D$, and consider $\tau[\beta/\alpha]$. Let us
re-express the introductory example in this form, that is, 
$\mc{K of}~\exists\bb[\ba = \app \bs \bb].\, \app \tau \bb$:
\end{version}

\begin{version}{\Esop}
  However, for formal study of \GADTs, a different formulation based
  on equality constraints is preferred. We will use the following
  equivalent presentation, in the syntax of Simonet and Pottier. The
  idea is that instead of having $\ty{(\beta * \gamma)}{t}$ as
  codomain of the constructor \code{Prod}, we will force it to be
  $\ty{\alpha}{t}$ again, by adding an explicit type equality $\alpha
  = \beta * \gamma$.  \pagebreak 
\end{version}

\begin{lstlisting}[xleftmargin=4em]
type $\ty{\alpha}{expr}$ =
  | Val of $\exists \beta [\alpha = \beta].\,\beta$
  | Int of $[\alpha = \tyc{int}].\,\tyc{int}$
  | Thunk of $\exists \beta \gamma [\alpha = \gamma].\, \ty{\beta}{expr} * (\beta \to \gamma)$
  | Prod of $\exists \beta\gamma[\alpha = \beta*\gamma].\, \ty{\beta}{expr} * \ty{\gamma}{expr}$
\end{lstlisting}

\begin{version}{\Not\Esop}
If all constraints between brackets are of the simple form
$\alpha_i = \beta_i$ (for distinct variables $\alpha_i$ and $\beta_i$), as for
the constructor \code{Thunk}, then we have a constructor with
existential types as described by Laüfer and Odersky
\cite{laufer-odersky-92}. If furthermore there are no other
existential variables than those equated with a type parameter, as in
the \code{Val} case, we have an usual algebraic type constructor; of
course the whole type is ``simply algebraic'' only if each of its
constructors is algebraic.
\end{version}

In the rest of the paper, we extend our former core language with such
guarded algebraic datatypes. This impacts the typing rules (which are
precisely defined in Simonet and Pottier), but not the notion of
subtyping, which is defined on (\GADT) type constructors with variance
$\mc{type}~\ty{\Gva}{t}$ just as it previously was on simple
datatypes. What needs to be changed, however, is the soundness
criterion for checking the variance of type definitions.

\paragraph{The correctness criterion}

Simonet and Pottier~\cite{simonet-pottier-hmg-toplas} define a general
framework HMG(X) to study type systems 
with \GADTs where the type equalities in bounded quantification are
generalized to an arbitrary constraint language. They make few
assumptions on the type system used, mostly that it has function types
$\sigma \to \tau$, user-definable (guarded) algebraic datatypes
$\ty{\ba}{t}$, and a subtyping relation $\sigma \leq \tau$ (which
may be just equality, in languages without subtyping).

They use this general type system to give static semantics
(typing rules) to a fixed untyped lambda-calculus equipped with
datatype construction and pattern matching operations. They are able
to prove a type soundness result under just some general assumptions
on the particular subtyping relation $\rel\leq$. Here are the three
requirements to get their soundness result:
\begin{enumerate}
\item Incomparability of distinct types: for all types $\tau_1,
  \tau_2, \bs, \bs'$ and distinct datatypes $\ty{\ba}{t},
  \ty{\ba'}{t'}$, the types $(\tau_1 \to \tau_2)$, $\tau_1 * \tau_2$,
  $\ty{\bs}{t}$ and $\ty{\bs'}{t'}$ must be pairwise incomparable
  (both $\nleq$ and $\ngeq$) --- this is where our restriction to an
  atomic subtyping relation, discussed
  in~\S\ref{atomic_subtyping_restriction}, comes from.

\item Decomposability of function and product types: if
  $\tau_1\to\tau_2 \leq \sigma_1\to\sigma_2$
  (respectively $\tau_1 * \tau_2 \leq \sigma_1 * \sigma_2$), we must
  have $\tau_1 \geq \sigma_1$ (resp. $\tau_1 \leq \sigma_1$) and
  $\tau_2 \leq \sigma_2$.

\item Decomposability of datatypes\footnote{This is an extended
    version of the soundness requirement for algebraic datatypes: it
    is now formulated in terms guarded existential types
    $\exists\bb[D]\tau$ rather than simple argument types $\tau$.}:
  for each datatype $\ty{\ba}{t}$ and all type vectors $\bs$ and $\bs'$
  such that $\ty{\bs}{t} \leq \ty{\bs'}{t}$, we must have
  $(\exists \bb[\app D \bs]\tau) \leq (\exists \bb[\app D {\bs'}] \tau)$
  for each  constructor
  $\mc{K of}~\exists \bb[\app D {\bb,\ba}].\,\app \tau \bb$.
\end{enumerate}
\XXX[GS]{Didier, please review the paragraph below: it was modified to
  mention and explain the ``variance taken granted'' approach of later
  proofs.\\}
Those three criteria are necessary for the soundness proof. We will
now explain how variance of type parameters impact those requirements,
that is, how to match a \GADT implementation against a variance
specification. With our definition of subtyping based on variance, and
the assumption that the datatype $\ty{\bar{v\alpha}}{t}$ we are
defining indeed has variance $\bar{v\alpha}$, is the \GADT
decomposability requirement (item 3 above) satisfied by all its
constructors? If so, then the datatype definition is sound and can be
accepted. Otherwise, the datatype definition does not match the
specified variance, and should be rejected by the type checker.


\section{Checking variances of \GADT}
\label{sec:checking_variance_of_gadts}


For every type definition, we need to check that the decomposability
requirement of Simonet and Pottier holds. Remark that it is expressed
for each \GADT constructor independently of the other constructors for
the same type: we can check one constructor at a time. 

Assume we
check a fixed constructor $\mc{K}$ of argument type
$\exists \bb[\app D {\ba,\bb}].\,\app \tau \bb$.
Simonet and Pottier prove that their requirement is equivalent to the
following formula, which is more convenient to manipulate:
$$
\forall \bs, \bs', \br, \uad
\bigparens{
   \ty{\bs}{t} \leq \ty{\bs'}{t}
     \wedge \app D{\bs,\br}
   \implies
   \exists \br',\; \app D{\bs',\br'}
     \wedge \app\tau\br \leq \app\tau{\br'}
}
\eqno 
(\DefRule {req-SP})
$$
The purpose of this section is to extract a practical criterion
equivalent to this requirement. It should not be expressed as
a general constraint satisfaction problem, but rather as
a syntax-directed and decidable algorithm that can be used in
a type-checker---without having to implement a full-blown constraint
solver.

\begin{version}{\Not\Esop}
\paragraph{A remark on the non-completeness}
\label{simonet-pottier-not-complete}

Note that while the criterion \Rule {req-SP} is \emph{sound},
it is not \emph{complete}---even in the simple ADT case.

For a constructor $\mc{K}~\mc{of}~\tau$ of $\ty{\bs}{t}$,
the justification for the fact that, under the hypothesis
$\ty{\bs}{t} \leq \ty{\bs'}{t}$, we should have
$\app{\tau}{\bs} \leq \app{\tau}{\bs'}$ is the following: given
a value $v$ of type $\app{\tau}{\bs}$, we can build the value
$\mc{K}~v$ at type $\ty{\bs}{t}$, and coerce it to $\ty{\bs'}{t}$. We
can then deconstruct this value by matching the constructor $K$, whose
argument is of type $\app{\tau}{\bs'}$. But this whole computation,
$(\mc{match}~(\mc{K}~v~\mc{:>}~\ty{\bs'}{t})~\mc{with}~\mc{K}~{x}~\to~x)$,
reduces to $v$, so for value reduction to hold we need to also have
$v : \app{\tau}{\bs'}$.

For this whole argument to work we need a value at type
$\app{\tau}{\bs}$. In fact, if the type $\app{\tau}{\bs}$ is not
inhabited, it can fail to satisfy \Rule {req-SP}
and still be sound: this criterion is not complete. See the following
example in a consistent system with an uninhabited type $\bot$:
\begin{lstlisting}
type $\ty{+\alpha}{t}$ =
  | T of $\tyc{int}$
  | Empty of $\bot * (\alpha \to \tyc{bool})$
\end{lstlisting}
Despite $\alpha$ occurring in a contravariant position in the dead
\code{Empty} branch (which violates the soundness criterion of Simonet
and Pottier), under the assumption that the $\bot$ type really is
uninhabited we know that this \code{Empty} constructor will never be
used in closed code, and the contravariant occurrence will therefore
never make a program ``go wrong''. The definition is correct, yet
rejected by \Rule {req-SP}, which is therefore incomplete.

Deciding type inhabitation in the general case is a very complex
question, which is mostly orthogonal to the presence and design of
\GADT in the type system. There is, however, one clear interaction
between the type inhabitation question and \GADTs. If a \GADT
$\ty{\ba}{t}$ is instantiated with type variables $\bs$ that satisfy
none of the constraints $\app{D}{\alpha}$ of its constructors
$\mc{K}~\mc{of}~\exists\bb[\bar{D}].\tau$, then we know that
$\ty{\bs}{t}$ is not inhabited. This is related to the idea of
``domain information'' that we discuss in the Future Work
section~(\S\ref{future_work:domain_information}).
\end{version}

\subsection{Expressing decomposability}
\label{sec/expressing_decomposability}

If we specialize \Rule {req-SP} to the \code{Prod} constructor
of the $\ty{\alpha}{expr}$ example datatype, \ie. 
$\mc{Prod}~\mc{of}~\exists \beta \gamma[\alpha = \beta * \gamma]
\ty{\beta}{expr} * \ty{\gamma}{expr}$, we get:
\[
\begin{array}{l}
  \forall \sigma, \sigma', \rho_1, \rho_2, \\\uad
  \bigparens{
  \ty{\sigma}{expr} \leq \ty{\sigma'}{expr}
    \wedge
    \sigma = \rho_1 * \rho_2
  \implies
  \exists \rho'_1, \rho'_2, \parens {\sigma' = \rho'_1 * \rho'_2
    \wedge
    \rho_1 * \rho_2 \leq \rho'_1 * \rho'_2}
  }
\end{array}
\]
We can substitute equalities and use the (assumed)\XXX{This step
  should be clearer: variances can be assumed and the criterion says
  whether the consequences of these assumptions are sound. I think it
  should be discussed earlier at the ``toplevel'' of the
  paper.\\}
\XXX[GS]{Can it be left as is with the new paragraph
  mentioning this before? We don't really have the space budget to be
  more explicit here.} covariance to simplify the subtyping constraint
$\ty{\sigma}{expr} \leq \ty{\sigma'}{expr}$ into $\sigma \leq
\sigma'$:
\[
  \forall \sigma', \rho_1, \rho_2,\uad
  \bigparens
    {\rho_1 * \rho_2 \leq \sigma'
  \implies
  \exists \rho'_1, \rho'_2, \uad  
    \parens 
      {\sigma' = \rho'_1 * \rho'_2
       \wide\wedge
       \rho_1 \leq \rho'_1
       \wide\wedge
       \rho_2 \leq \rho'_2
      }
  }
\eqno \llabel 1
\]
This is the \emph{upward closure} property mentioned in the
introduction. This transformation is safe only if any supertype
$\sigma'$ of a product $\rho_1 * \rho_2$ is itself a product, \ie. is
of the form $\rho'_1 * \rho'_2$ for some $\rho'_1$ and $\rho'_2$.

More generally, for a type $\G \der \sigma$ and a variance $v$, we
are interested in a closure property of the form \[
  \forall (\br : \G), \sigma',\quad
    \app \sigma \br \prec_{v} \sigma' \implies
    \exists (\br' : \G),\,\sigma' = \app \sigma {\br'}
\]
Here, the context $\G$ represents the set of existential variables
of the constructor ($\beta$ and $\gamma$ in our example). We can
easily express the condition $\rho_1 \leq \rho'_1$ and
$\rho_2 \leq \rho'_2$ on the right-hand side of the implication by
considering a context $\G$ annotated with variances
$(\vplus\beta, \vplus\gamma)$, and using the context ordering
$\rel{\prec_\G}$. Then, \lref 1 is equivalent to: 
$$
  \forall (\br : \G), \sigma',\quad
    \app \sigma \br \prec_{v} \sigma' \implies
    \exists (\br' : \G),\uad\br \prec_\G \br'
      \wedge \sigma' = \app \sigma {\br'}
$$
Our aim is now to find a set of inference rules to check decomposability; we
will later reconnect it  to \Rule {req-SP}.
In fact, we study a slightly more general relation, where the
equality $\app\sigma{\br'}~=~\sigma'$ on the right-hand side is
relaxed to an arbitrary relation
$\app{\sigma}{\br'} \prec_{v'} \sigma'$:

\begin{definition}[Decomposability]
\label{def/decompasability}
Given a context $\G$, a type expression $\app{\sigma}{\bb}$ and two
variances $v$ and $v'$, we say that $\sigma$ is \emph{decomposable} under
$\G$ from variance $v$ to variance $v'$, which we write $\G \Der \sigma : v
\vto v'$, if the property
$$
\forall (\br : \G), \sigma',\uad
      \app \sigma \br \prec_{v} \sigma' \implies
      \exists (\br' : \G),\uad
        \br \prec_\G \br'
        \wide\wedge \app \sigma {\br'} \prec_{v'} \sigma'
$$
holds.
\end{definition}
We use the symbol $\Der$ rather than $\der$ to highlight the fact that
this is just a logic formula, not the semantics criterion
corresponding to an inductive judgment, nor a syntactic judgment---we
will introduce one later in section
\ref{sec/syntactic-decomposability}.

Remark that, due to the \emph{positive} occurrence of the relation
$\prec_\G$ in the proposition ${\G \Der \tau : v \vto v'}$ and the
anti-monotonicity of $\prec_\G$, this formula is ``anti-monotonous''
with respect to the context ordering $\G \leq \G'$. This corresponds
to saying that we can still decompose, but with less information on
the existential witness $\br'$.

\begin{lemma}[Anti-monotonicity]
\label{lem/anti-monotonicity}
If $\G \Der \tau : v \vto v'$ holds
and $\G' \leq \G$, then $\G' \Der \tau : v \vto v'$ also holds.
\end{lemma}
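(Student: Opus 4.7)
The plan is to unfold the definition of decomposability on both sides and invoke the correspondence between the variance ordering on contexts and the induced subtyping relation on type vectors. Recall that $v \leq w$ was defined so that $\rel{\prec_v} \supseteq \rel{\prec_w}$, i.e.\ a \emph{smaller} variance yields a \emph{larger} (coarser) relation. Extended pairwise to contexts, this means that $\G' \leq \G$ is equivalent to $\rel{\prec_{\G'}} \supseteq \rel{\prec_{\G}}$; in particular, any pair $\br, \br'$ with $\br \prec_\G \br'$ also satisfies $\br \prec_{\G'} \br'$. This is the only algebraic fact the proof will use.

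Concretely, I would assume $\G \Der \tau : v \vto v'$ and $\G' \leq \G$, then fix an arbitrary $\br$ (of the arity shared by $\G$ and $\G'$) and an arbitrary $\sigma'$ such that $\app{\tau}{\br} \prec_v \sigma'$. Applying the hypothesis $\G \Der \tau : v \vto v'$ to this $\br$ and $\sigma'$ yields a witness $\br'$ satisfying $\br \prec_\G \br'$ and $\app{\tau}{\br'} \prec_{v'} \sigma'$. Since $\G' \leq \G$, the observation above gives $\br \prec_{\G'} \br'$ for free, while the relation $\app{\tau}{\br'} \prec_{v'} \sigma'$ is preserved verbatim. Hence $\br'$ is also a valid witness for decomposability of $\tau$ under $\G'$, establishing $\G' \Der \tau : v \vto v'$.

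There is essentially no obstacle: the proof is a one-step weakening argument and requires no induction on $\tau$ nor any appeal to the syntactic judgment $\G \der \tau : v$. The only point worth flagging, which the statement itself alludes to when it calls the formula ``anti-monotonous,'' is the direction of the ordering: $\prec_\G$ occurs \emph{positively} in the conclusion of the decomposability formula, so making the ambient relation coarser (by replacing $\G$ with the smaller $\G'$) can only make the conclusion easier to satisfy, never harder. No change to $\br'$, $\sigma'$, or the structure of $\tau$ is needed.
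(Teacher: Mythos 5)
Your proof is correct and matches the paper's own justification, which is given only as a remark before the lemma: the relation $\prec_\G$ occurs positively in the conclusion of the decomposability formula, and $\G' \leq \G$ coarsens $\rel{\prec_{\G'}} \supseteq \rel{\prec_{\G}}$, so the same witness $\br'$ works. Nothing is missing.
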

\begin{version}{\Not\Esop}
Our final decomposability criterion, given below in Figure~\ref{fig/decomposability},
requires both correct variances and a decomposability property, so it
will be neither monotonous nor anti-monotonous with respect to the
context argument.

\end{version}
In the following subsections, we study the subtleties of
decomposability. 

\subsection{Variable occurrences}

In the \code{Prod} case, the type whose decomposability was considered
is $\beta * \gamma$ (in the context $\beta, \gamma$).  In this very
simple case, decomposability depends only on the type constructor for
the product. In the present type system, with very strong
invertibility principles on the subtyping relation, both upward and
downward closures hold for products---and any other head type
constructor. In the general case, we require that this specific type
constructor be upward-closed.

In the general case, the closure of the head type constructor alone is
not enough to ensure decomposability of the whole type. For example,
in a complex type expression with subterms, we should consider the
closure of the type constructors appearing in the subterms as
well. Besides, there are subtleties when a variable occurs several
times.

For example, while $\beta * \gamma$ is decomposable from $\rel\vplus$
to $\rel\veq$, $\beta * \beta$ is not: $\bot * \bot$ is an
instantiation of $\beta * \beta$, and a subtype of, \eg., $\tyc{int} *
\tyc{bool}$, but it is not equal to $\app{(\beta * \beta)}{\gamma'}$
for any $\gamma'$. The same variable occurring twice in covariant
position (or having one covariant and one invariant or contravariant
occurence) breaks decomposability.

On the other hand, two invariant occurrences are possible:
$\ty{\beta}{ref} * \ty{\beta}{ref}$ is upward-closed (assuming the
type constructor $\tyc{ref}$ is invariant and upward-closed): if
$(\ty{\sigma}{ref} * \ty{\sigma}{ref}) \leq \sigma'$, then by
upward closure of the product, $\sigma'$ is of the form
$\sigma'_1 * \sigma'_2$, and by its covariance
$\ty{\sigma}{ref} \leq \sigma'_1$ and
$\ty{\sigma}{ref} \leq \sigma'_2$. Now by invariance of $\tyc{ref}$ we
have $\sigma'_1 = \ty{\sigma}{ref} = \sigma'_2$, and therefore
$\sigma'$ is equal to $\ty{\sigma}{ref} * \ty{\sigma}{ref}$, which is
an instance\footnote {We use the term \emph{instance} to denote 
the replacement  of all the free variables of a type expression under
context by   closed types---not the specialization of an ML type scheme.}
 of $\ty{\beta}{ref} * \ty{\beta}{ref}$. 

Finally, a variable may appear in irrelevant positions without
affecting closure properties; $\beta * (\ty{\beta}{irr})$
(where $\tyc{irr}$ is an upward-closed irrelevant type, defined for
example as $\mc{type}~\ty{\alpha}{irr} = \tyc{int}$) is upward closed:
if $\sigma * (\ty{\sigma}{irr}) \leq \sigma'$, then $\sigma'$ is of
the form $\sigma'_1 * (\ty{\sigma'_2}{irr})$ with
$\sigma \leq \sigma'_1$ and $\sigma \Join \sigma'_2$, which is
equiconvertible to $\sigma'_1 * (\ty{\sigma'_1}{irr})$ by irrelevance,
an instance of $\beta * (\ty{\beta}{irr})$.

\XXX[DR]{I don't see why you need equivalence and not just subtyping 
in the example above.}

\XXX[GS]{To prove that $\beta * (\ty{\beta}{irr})$ is
  $(\vplus \To \veq)$-decomposable, we need to prove that for any
  $\rho$ and $\sigma'$ such that
  $\rho * (\ty{\rho}{irr}) \leq \sigma'$ we have some $\rho'$ such
  that $\sigma'$ is \emph{equal} (or equiconvertible, now defined in
  the paragraph insisting on our notion of equality in the definition
  of the subtyping relation) to $\rho' * (\ty{\rho'}{irr})$.}

\subsection{Context zipping}

The intuition to think about these different cases is to consider
that, for any $\sigma'$, we are looking for a way to construct a ``witness''
$\bs'$ such that $\app \tau {\bs'} = \sigma'$ 
from the hypothesis $\app \tau \bs \prec_v \sigma'$.
When a type variable appears only once, its witness can be determined
by inspecting the corresponding position in the type $\sigma'$. For
example in $\alpha * \beta \leq \tbool * \tint$, the mapping $\alpha
\mapsto \tbool, \beta \mapsto \tint$ gives the witness pair $\tbool,
\tint$.

However, when a variable appears twice, the two witnesses
corresponding to the two occurrences may not coincide. (Consider for
example $\beta * \beta \leq \tyc{bool} * \tyc{int}$.) If a variable
$\beta_i$ appears in several \emph{invariant} occurrences, the witness
of each occurrence is forced to be equal to the corresponding subterm
of $\app \tau \bs$, that is $\sigma_i$, and therefore the various
witnesses are themselves equal, hence compatible.  On the contrary,
for two covariant occurrences (as in the $\beta * \beta$ case), it is
possible to pick a $\sigma'$ such that the two witnesses are
incompatible---and similarly for one covariant and one invariant
occurrence. Finally, an irrelevant occurrence will never break closure
properties, as all witnesses (forced by another occurrence) are
compatible.

To express these merging properties, we define a ``zip''\footnote {The
  idea of context merging and the term ``zipping'' are inspired by
  Montagu and Remy~\cite{montagu-remy-09}} operation $v_1 \zip v_2$, that
formally expresses which combinations of variances are possible for
several occurrences of the same variable; it is a partial operation
(for example, it is not defined in the covariant-covariant case, which
breaks the closure properties) with the following table:
$$
\def \C#1{\multicolumn{1}{C}{#1}}
\begin{tabular}{R|c@{}|C|C|C|C|l}
  v \zip w && \C{\veq} & \C{\vplus} & \C{\vminus} & \C{\virr} & w 
\\ 
\cline{1-1}\cline{3-7}\noalign{\vskip \doublerulesep}\cline{1-1}\cline{3-6}
  \veq     && \veq   &          &           & \veq    \\ \cline{3-6}
  \vplus   &&        &          &           & \vplus  \\ \cline{3-6}
  \vminus  &&        &          &           & \vminus \\ \cline{3-6}
  \virr    && \veq   & \vplus   & \vminus   & \virr   \\ \cline{3-6}
  v        &\multicolumn{5}{C}{}
\end{tabular}
$$

\begin{version}{\Not\Esop}
The following lemma uses zipping to merge together the results of the
decomposition of several subterms $(\sigma_i)_i$ into a ``simultaneous
decomposition''.

\begin{definition}[Simultaneous decomposition]
\label{def/simultaneous-decomposition}
Given a context $\G$, and families of type expressions $(\sigma_i)_\iI$ and
variances $(v_i)_\iI$ and $(v'_i)_\iI$, we define the following
``simultaneous closure property'' ${\G \Der (T_i : v_i \vto v'_i)_\iI}$
defined as :
$$
    \forall (\br : \G), \bs',\uad
      \parens 
        {\forall \iI, \app{\sigma_i}{\br} \prec_{v_i} \sigma'_i}
      \implies
      \exists (\br' : \G),\uad \br \prec_\G \br'
        \wedge \parens 
         {\forall \iI, \app {\sigma_i} {\br'} \prec_{v'_i} \sigma'_i}
\nobelowdisplayskip
$$
\end{definition}
\begin{lemma}[Soundness of zipping]
\label{lem/zip-soundness}
Suppose we have families of type expressions $(\app{T_i}{\bb})_\iI$, contexts
$(\G_i)_\iI$ and variances $(v_i)_\iI$ and $(v'_i)_\iI$ such that
$\mathop{\zip}_\iI \G_i$ exists and for all $i$ we have both
$\G_i \der T_i : v_i$ and 
$\G_i \Der T_i : v_i \vto v'_i$.
Then, we have $(\mathop{\zip}_\iI \G_i) \Der (\sigma_i : v_i \vto v'_i)_\iI$.
\end{lemma}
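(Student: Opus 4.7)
The plan is to unfold the definition of simultaneous decomposition and build a single witness $\bar\rho'$ by combining, component by component, the individual witnesses produced by the decomposability hypotheses $\Gamma_i \Vdash T_i : v_i \vto v'_i$. The zipping table is precisely engineered so that this combination is unambiguous, and the variance judgments $\Gamma_i \vdash T_i : v_i$ will ensure that the combined witness behaves correctly in each $T_i$.

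First I would fix $\bar\rho$ and $(\sigma'_i)_{i\in I}$ with $T_i[\bar\rho] \prec_{v_i} \sigma'_i$ for every $i$, and apply each hypothesis $\Gamma_i \Vdash T_i : v_i \vto v'_i$ separately to obtain witnesses $\bar\rho^{(i)}$ with $\bar\rho \prec_{\Gamma_i} \bar\rho^{(i)}$ and $T_i[\bar\rho^{(i)}] \prec_{v'_i} \sigma'_i$. Next, for each variable $\beta$ with variance $w_i(\beta)$ in $\Gamma_i$ and $w(\beta) = \mathop{\zip}_i w_i(\beta)$ in the zipped context, I would define $\rho'_\beta$ by a case analysis on the zip table: if some $w_{i_0}(\beta)\in\{\vplus,\vminus\}$ (necessarily unique, and forcing all other $w_i(\beta) = \virr$), set $\rho'_\beta = \rho^{(i_0)}_\beta$; if some $w_{i_0}(\beta) = \veq$ (all other $w_i(\beta)$ being $\veq$ or $\virr$), set $\rho'_\beta = \rho_\beta$, which equals $\rho^{(i_0)}_\beta$ by invariance; otherwise set $\rho'_\beta = \rho_\beta$.

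By construction, $\rho_\beta \prec_{w(\beta)} \rho'_\beta$ holds in each case, so $\bar\rho \prec_{\mathop{\zip}_i \Gamma_i} \bar\rho'$. It then remains to show $T_i[\bar\rho'] \prec_{v'_i} \sigma'_i$ for each $i$. The key observation is that $\bar\rho'$ and $\bar\rho^{(i)}$ coincide on every variable $\beta$ with $w_i(\beta) \neq \virr$: this is immediate from our construction in the $\vplus$, $\vminus$, and $\veq$ cases. So $\bar\rho'$ and $\bar\rho^{(i)}$ differ only at variables that are $\virr$ in $\Gamma_i$; at such variables both $\bar\rho^{(i)} \prec_{\Gamma_i} \bar\rho'$ and $\bar\rho' \prec_{\Gamma_i} \bar\rho^{(i)}$ hold (since $\prec_\virr$ is the full relation). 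Applying correctness of variance checking (Lemma~\ref{lem/variance-checking-correct}) to $\Gamma_i \vdash T_i : v_i$ in both directions yields $T_i[\bar\rho^{(i)}] \prec_{v_i} T_i[\bar\rho']$ and $T_i[\bar\rho'] \prec_{v_i} T_i[\bar\rho^{(i)}]$; whenever $v_i \neq \virr$ this means the two types are equiconvertible, hence equal in our setting, so $T_i[\bar\rho'] = T_i[\bar\rho^{(i)}] \prec_{v'_i} \sigma'_i$. The edge case $v_i = \virr$ makes the assumption $T_i[\bar\rho] \prec_{v_i} \sigma'_i$ trivial and the expected $v'_i$ is then also $\virr$, making the conclusion trivial as well.

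The hard part is Step 4, where one must justify that the combined witness really serves each $T_i$ simultaneously. The delicate point is that the variance judgment only gives us information up to $\prec_{v_i}$, so we must crucially exploit the two-sidedness of $\prec_\virr$ together with our definition of equality as equiconvertibility; this is what collapses the two $\prec_{v_i}$-inequalities into a genuine equality of $T_i[\bar\rho^{(i)}]$ and $T_i[\bar\rho']$, which is what allows transporting the conclusion $\prec_{v'_i} \sigma'_i$ from one to the other.
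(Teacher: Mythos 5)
Your proof is correct and takes essentially the same route as the paper's: extract the individual witnesses, merge them by a case analysis driven by the zip table (any non-$\virr$ variance in some $\Gamma_{i_0}$ forces $\virr$ everywhere else, so that occurrence dictates the merged witness, while $\veq$ pins it to $\rho_\beta$), and transfer $\prec_{v'_i}\sigma'_i$ to the merged witness by observing that the two-sided $\prec_{v_i}$ bounds between $\app{T_i}{\br'}$ and $\app{T_i}{\br^{(i)}}$ collapse to equiconvertibility, hence equality, whenever $v_i\neq\virr$. The one place where you assume something not in the statement---that $v'_i=\virr$ whenever $v_i=\virr$---is exactly the corner the paper's own proof also elides (its appeal to ``irrelevance of $v_i$'' is vacuous when $v_i=\virr$), so you are on equal footing there, and your explicit two-sided argument is in fact a cleaner justification of the paper's terse ``$\app{T}{\rho'}=\app{T}{\rho'_1}$ by irrelevance'' step.
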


\begin{proof}{}
\locallabelreset
Without loss of generality, 
\XXX[TODO]{Say why...}
we can consider that there are only two
type expressions $\app{T_1}{\bb}$ and $\app{T_2}{\bb}$, and that the
free variables $\bb$ is reduced to  a single variable $\beta$. 
Let $w_1, w_2$ be the respective variances of $\beta$ in $\G_1,
\G_2$. 
We know that $(w_1 \zip w_2)$ exists  and is equal to the variance $w$
of the variable $\beta$ in $\G_1 \zip \G_2$. 

\let \br \rho

Our further assumptions are 
$\G_i \der T_i : v_i$~\llabel 1 and 
$\G_i \Der T_i : v_i \vto v'_i$~\llabel 2 
for $i$ in $\set {1,2}$. 
The  expansion of \lref 2 is:
$$
  \forall \iI,\, \forall \br,\sigma'_i,\quad
     \app{T_i}{\br} \prec_{v_i} \sigma'_i
     \wide\implies \exists \br', \uad \br \prec_{\G_i} \br'
       \;\wedge\;\app{T_i}{\br'} \prec_{v'_i} \sigma'_i
\eqno \llabel 3
$$
Our goal is to prove  $\G_1 \zip \G_2 \Der (\sigma_i : v_i\vto
v'_i)_\iI$, which is equivalent to:  
$$
  \forall \br,\bs',\quad
    (\forall \iI, (\app{T_i}{\br} \prec_{v_i} \sigma'_i))
    \wide\implies \exists \br',\uad
        \br \prec_{\G} \br' \wide\wedge
        \forall \iI, \app{T_i}{\br'} \prec_{v'_i} \sigma'_i
\eqno \llabel C
$$
Assume given $\rho$ and $(\sigma'_1, \sigma'_2)$ such that
$\app{T_1}{\br} \prec_{v_1} \sigma'_1$~\llabel {H1} and
$\app{T_2}{\br} \prec_{v_2} \sigma'_2$~\llabel {H2}. 
Applying~\lref 3 with $i$ equal to $1$ and \lref {H1} ensures the existence of 
a $\rho'_1$ such that $\rho \prec_{w_1} \rho'_1$~\llabel {rw1} and
$\app{T_1}{\rho'_1} \prec_{v_1} \sigma'_1$~\llabel {rp1}. 
Similarly, there exists  $\rho'_2$ such that
$\rho \prec_{w_2} \rho'_2$~\llabel {rw2} and
$\app{T_2}{\rho'_2} \prec_{v_2} \sigma'_2$~\llabel {rp2}. 
To establish \lref C, it remains to build a single $\rho'$ that satisfies
$\rho \prec_w \rho'$~\llabel {r0}, $\app{T_1}{\rho'} \prec_{v_1}
\sigma'_1$~\llabel {r1} and $\app{T_2}{\rho'} \prec_{v_2} \sigma'_2$~\llabel
{r2}, simultaneously.
We reason by case analysis on $w_1$ and $w_2$ (restricted to the cases where
the zip exists).

If both $w_1$ and $w_2$ are $\rel\virr$, we take either $\rho'_1$ or
$\rho'_2$ for: since $w$ is $\virr$, we \lref {r0} trivially holds; 
Furthermore, $\app{T}{\rho'} = \app{T}{\rho'_1} = \app{T}{\rho'_2}$ by 
irrelevance of $v_i$ and \lref 1; therefore \lref {r1} and \lref {r2} 
follow from \lref {rp1} and \lref {rp2}.

If only one of the $w_i$ is $\rel\virr$, we'll suppose that it is $w_1$. We
then take $\rho_2$ for $\rho'$.  Since $\virr_1 \zip w_2$ is $w_2$, \lref
{r0} follows from \lref {rw2}; Furthermore, $\app{T}{\rho'} =
\app{T}{\rho'_1}$ by irrelevance of $v_1$ and \lref 1 while
$\app{T_2}{\rho'} = \app{T_2}{\rho'_2}$ holds by construction; Hence, as in
the previous case, \lref {r1} and \lref {r2} follow from \lref {rp1} and
\lref {rp2}.

Finally, if both $w_1$ and $w_2$ are $\rel\veq$, then 
\lref {rw1} and \lref {rw2} implies $\rho'_1 = \rho = \rho'_2$.
We take $\rho$ for $\rho'$ and all three conditions are obviously satisfied.
\qed
\end{proof}
This lemma admits a kind of converse lemma stating completeness of
zipping, that says that if ${\G \Der (T_i : v_i \vto v'_i)_i}$ holds, then
$\G$ is indeed related to a zip of contexts $(\G_i)_i$ that pairwise
decompose each of the $(T_i)_i$. However, the proof of completeness is more
delicate and
\Esop{can only be found in the extended version}
{we prove it separately in \S\ref{sec/zip-completeness}}.
\end{version}

\subsection{Syntactic decomposability}

\label {sec/syntactic-decomposability}

\begin{mathparfig}{fig/decomposability}{Syntactic decomposablity}
\inferrule[sc-Triv]
  {v \geq v'\\\G \der \tau : v}
  {\G \der \tau : v \To v'}

\inferrule[sc-Var]
  {w\alpha \in \G \\ w = v}
  {\G \der \alpha : v \To v'}

\inferrule[sc-Constr]
  {\G \der \mc{type}~\ty{\bar{w\alpha}}{t} : v\textrm{-closed}\\
   \G = \mathop{\zip_i} \G_i\\
   \forall i,\ \G_i \der \sigma_i : \varcomp v {w_i} \To \varcomp{v'}{w_i}}
  {\G \der \ty{\bar{\sigma}}{t} : v \To v'}
\end{mathparfig}

Equipped with the zipping operation, 
we introduce a judgment $\G \der \tau : v \To v'$ to express 
decomposability, syntactically, defined by the inference rules on
Figure~\ref{fig/decomposability}. 


We
\Esop
{sometimes need to} {rely on the zip soundness
(Lemma~\ref{lem/zip-soundness}) to} merge sub-derivations into larger ones,
so in addition to decomposability, the judgments simultaneously 
ensures that $v$ is a correct variance for $\tau$ under $\G$.
\begin{version}{\False}
  We want a judgment that captures decomposability. To be able to
  combine different subderivations together using our $\zip$
  operation, we will need an additional hypothesis of variance
  checking (see the details in our extended version).
\end{version}
Actually, in order to understand the details of this judgment, 
it is quite instructive to
compare it with the variance-checking judgment $\G \der \tau : v$ defined on
Figure~\ref{fig/variance}.

\begin{version}{\Esop}
  The first thing to notice is that the present rules are not
  completely syntax-directed: we first check whether $v \geq v'$
  holds, and if not, we apply syntax-directed inference rules;
  existence of derivations is still easily decidable. If $v \geq v'$
  holds, satisfying the semantics criterion is trivial: $\app \tau \bs
  \prec_v \tau'$ implies $\app \tau \bs \prec_{v'} \tau'$, so taking
  $\bs$ for $\bs'$ is always a correct witness, which is represented
  by Rule \Rule{sc-Triv}.  The other rules then follow the same
  structure as the variance-checking judgment.
\end{version}

Rule \Rule{sc-Var} is very similar to \Rule {vc-Var}, except that the
condition $w \geq v$ is replaced by a stronger equality $w = v$.
\begin{version}{\Not\Esop}
  The reason why the variance-checking judgment has an inequality
  $w \geq v$ is to make it monotonous in the environment---as
  requested by its corresponding semantics
  criterion~(Definition \ref{def/vc/semantics}). Therefore, the
  condition $w \geq v$ is necessary for completeness---and admissible.
  On the contrary, the present judgment ensures, according the
  semantic criterion~(Definition~\ref{def/cc/semantics}), that both
  the variance is correct (monotonous in the environment) and the type
  is decomposable, a property which is \emph{anti-monotonic} in the
  environment~(Lemma~\ref{lem/anti-monotonicity}).  Therefore, the
  semantics criterion $\sem{\G \der \tau : v \To v'}$ is invariant
  in $\G$ and, correspondingly, the variable rule must use a strict
  equality.
\end{version}
\begin{version}{\Esop}
  This difference comes from the fact that the semantics condition for
  closure checking~(Definition \ref{def/vc/semantics}) includes both
  a variance check, which is monotonic in the
  context~(Lemma \ref{lem/monotonicity}) and the decomposability property,
  which is anti-monotonic~(Lemma~\ref{lem/anti-monotonicity}), so the
  present judgment must be invariant with respect to the context.
\end{version}

The most interesting rule is \Rule{sc-Constr}. It checks first that
the head type constructor is $v$-closed (according to Definition~\ref
{def/v-closed}); then, it checks each subtype for decomposability from
$v$ to $v'$ \emph{with compatible witnesses}, that is, in an
environment family $\G_i$ that can be zipped into a unique environment
$\G$.

In order to connect the syntactic and semantics versions of
decomposability, we define the interpretation $\sem{\G \der \tau :
  v \To v'}$ of syntactic decomposability.

\begin{definition}[Interpretation of syntactic decomposability]
\label{def/cc/semantics}\noindent\\
We write $\sem{\G \der \tau : v \To v'}$ for the conjunction of
properties $\sem{\G \der \tau : v}$ and $\G \Der \tau : v \vto v'$.
\end{definition}

Note that our interpretation $\Gamma \der \tau : v \To v'$ does not
coincide with our previous decomposability formula $\Gamma \Der \tau :
v \leadsto v'$, because of the additional variance-checking hypothesis
that makes it composable. The distinction between those two notions of
decomposition is not useful to have a sound criterion, but is crucial
to be complete with respect to the criterion of Simonet and Pottier,
which imposes no variance checking condition.
\begin{lemma}[Soundness of syntactic decomposability]
\label {lem/decomposability-soundness}\noindent\\
If the judgment $\G \der \tau : v \To v'$ holds, then $\sem{\G \der
\tau : v \To v'}$ is holds.
\end{lemma}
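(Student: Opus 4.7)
{}
The plan is to proceed by structural induction on the derivation of $\G \der \tau : v \To v'$, establishing at each step both conjuncts of $\sem{\G \der \tau : v \To v'}$, namely the variance check $\sem{\G \der \tau : v}$ (via Lemma~\ref{lem/variance-checking-correct}) and the decomposability formula $\G \Der \tau : v \vto v'$.

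For Rule \Rule{sc-Triv}, the premise $\G \der \tau : v$ directly yields the variance component via Lemma~\ref{lem/variance-checking-correct}. For decomposability, the hypothesis $v \geq v'$ means $\rel{\prec_v} \subseteq \rel{\prec_{v'}}$, so given $\app \tau \br \prec_v \sigma'$ one can simply take $\br' = \br$: reflexivity of each relation $\rel{\prec_{w_i}}$ gives $\br \prec_\G \br'$, and $\app \tau {\br'} = \app \tau \br \prec_{v'} \sigma'$ follows from the inclusion. For Rule \Rule{sc-Var}, writing $\alpha = \alpha_i$ with variance $w = v$ in $\G$, the variance check is immediate from \Rule{vc-Var}; for decomposability, given $\rho_i \prec_v \sigma'$ we build $\br'$ by replacing the $i$-th component by $\sigma'$ and leaving the rest equal to $\br$, so that $\br \prec_\G \br'$ holds componentwise (using reflexivity for the other positions) and $\app \alpha {\br'} = \sigma' \prec_{v'} \sigma'$ holds by reflexivity.

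The main case, and the principal obstacle, is Rule \Rule{sc-Constr}, for $\tau = \ty{\bs}{t}$ with $\mc{type}~\ty{\bar{w\alpha}}{t}$, $\G = \mathop{\zip}_i \G_i$, and inductive premises $\G_i \der \sigma_i : \varcomp v {w_i} \To \varcomp{v'}{w_i}$ together with $v$-closedness of $\tyc{t}$. The induction hypothesis provides $\sem{\G_i \der \sigma_i : \varcomp v {w_i}}$ and $\G_i \Der \sigma_i : \varcomp v {w_i} \vto \varcomp{v'}{w_i}$ for each $i$. Since zipping corresponds to the join of contexts (inspection of the zip table against the variance lattice), we have $\G_i \leq \G$, and monotonicity (Lemma~\ref{lem/monotonicity}) together with \Rule{vc-Constr} yields the variance component $\G \der \ty{\bs}{t} : v$. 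For the decomposability component, assume $\app{(\ty{\bs}{t})}{\br} \prec_v \sigma'$. By $v$-closedness of $\tyc{t}$, $\sigma'$ must be of the form $\ty{\bs'}{t}$, and inversion of subtyping (\S\ref{subtyping_inversion}) gives $\app{\sigma_i}{\br} \prec_{\varcomp v {w_i}} \sigma'_i$ for every $i$. Applying the soundness of zipping (Lemma~\ref{lem/zip-soundness}) to the family $(\sigma_i)_i$—using both the variance premises and the decomposability premises provided by the induction hypothesis—we obtain $\G \Der (\sigma_i : \varcomp v {w_i} \vto \varcomp{v'}{w_i})_i$. Instantiating this simultaneous decomposition with our $\br$ and the $\sigma'_i$ yields a single witness $\br'$ satisfying $\br \prec_\G \br'$ and $\app{\sigma_i}{\br'} \prec_{\varcomp{v'}{w_i}} \sigma'_i$ for all $i$; the subtyping rule for datatypes then assembles these into $\app{(\ty{\bs}{t})}{\br'} \prec_{v'} \ty{\bs'}{t} = \sigma'$, closing the case.

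The delicate point is precisely the \Rule{sc-Constr} case, where the need to share a \emph{single} witness $\br'$ across all subterms is exactly what forces the zip premise into the rule and what makes the auxiliary variance premise in the semantic interpretation essential: it is what makes zipping applicable in Lemma~\ref{lem/zip-soundness}, and hence what allows decomposability to compose along the syntactic structure of $\tau$.
\qed
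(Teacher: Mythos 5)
Your proof is correct and follows essentially the same route as the paper's: induction on the derivation, the same direct witnesses ($\br$ itself, resp.\ $\br$ with the $i$-th component replaced by $\sigma'$) for \Rule{sc-Triv} and \Rule{sc-Var}, and for \Rule{sc-Constr} the same combination of $v$-closure, inversion of subtyping, and zip soundness (Lemma~\ref{lem/zip-soundness}) to obtain a single shared witness. You even make explicit two steps the paper leaves implicit (the use of monotonicity to lift $\G_i \der \sigma_i : \varcomp v {w_i}$ to $\G$, and the reflexivity arguments in the base cases), so no gap.
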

\begin{proof}{}
\locallabelreset
The proof is by induction on the derivation $\G \der \tau : v \To
v'$~\llabel H.  Expanding $\sem{\G \der \tau : v \To v'}$, we must show both
$\sem{\G \der \tau : v}$, or equivalently $\G \der \tau : v$~\llabel {C1}
and $\G \Der \tau : v \vto v'$~\llabel {C2}, which itself expands to:
$$
    \forall (\br : \G), \sigma',\quad
      \app \sigma \br \prec_{v} \sigma' \implies
      \exists (\br' : \G),\uad 
        \br \prec_\G \br'
        \wedge \app \sigma {\br'} \prec_{v'} \sigma'
$$
Let  $\br$, $\tau'$ be such that $\app{\tau}{\br} \prec_v \tau'$. 
We must exhibit a sequence $\br'$ such
that $\br \prec_\G \br'$~\llabel {Cr} and 
$\app{\tau}{\br'} \prec_{v'} \tau'$~\llabel {Ct}.
Cases where the derivation of \lref H ends with \Rule{sc-Triv} and
\textsc{sc-Var} cases are direct: take $\br$ and $(\dots, \sigma', \dots)$
for $\br'$, respectively.

In the remaining cases, the derivation ends with Rule \Rule{sc-Constr} and
$\tau$ is of the form $\ty{\bs}{t}$.  
\begin{itemize}
\item 
The $v$-closure assumption of the left
premise ensures that $\tau'$ is itself of the form $\ty{\bs'}{t}$ for some
sequence of closed types $\bs'$.
By inversion on the variance $\bar{w\alpha}$ of the head constructor
$\tyc{t}$, we deduce $\app{\sigma_i}{\br} \prec_{\varcomp v {w_i}}
\sigma'_i$ for all $i$~\llabel A. 
\item
The middle premise is the zipping assumption on the contexts $\G =
\mathop{\zip}_\iI \G_i$~\llabel C. 
\item
The right premises gives us subderivations $\G_i \der
\sigma_i : \varcomp v {w_i} \To \varcomp {v'} {w_i}$. 
This implies $\G_i \der \sigma_i : \varcomp v {w_i}$, for all $i$, 
which implies $\G \der \ty{\bs}{t} : v$, \ie. \lref {C1}.
By induction hypothesis, this also implies 
$\G_i \der \sigma_i : \varcomp v {w_i}$~\llabel D and 
$\G_i \Der \sigma_i : \varcomp v {w_i} \To \varcomp {v'} {w_i}$ for
all $i$~\llabel E. 
\end{itemize}
We may now apply zip soundness (Lemma~\ref{lem/zip-soundness}) with
hypotheses \lref C, \lref D and \lref E, which  gives us the simultaneous
decomposition 
$\G \Der (\sigma'_i : \varcomp v {w_i} \vto \varcomp {v'}
{w_i})_\iI$.   Expanding this property (Definition \ref
{def/simultaneous-decomposition}), we may apply to \lref A to get 
to get a witness $\br'$ such that both $\br' \prec_\G \br$, \ie.
our first goal \lref {Cr},
and $(\forall \iI, \app{\sigma_i}{\br'} \prec_{\varcomp {v'} {w_i}}
\sigma'_i)$, which implies 
$\app{(\ty{\bs}{t})}{\br'} \prec_{v'} \ty{\bs'}{t}$, \ie. our second goal
\lref {Ct}.
\qed
\end{proof}

Completeness is the general case is however much more difficult and we
only prove it when the right-hand side variance $v'$ is $\rel\veq$. In
other words, we take back the generality that we have introduced
in~\S\ref{sec/expressing_decomposability} when defining
decomposability.
\begin{version}{\Not\Esop}
The proof requires several auxiliary lemmas; it is the
subject of the next subsection.
\end{version}

\begin{version}{\Esop}
\begin{lemma}[Completeness of syntactic decomposability]
\label{lem/decomposability-completeness}
If $\sem{\G \der \tau : v \To v'}$ holds for $v' \in \{\veq,\virr\}$,
then $\G \der \tau : v \To v'$ is provable.
\end{lemma}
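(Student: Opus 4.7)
The plan is to proceed by induction on $\tau$ with an outer case split on $v'$. The case $v' = \virr$ is immediate: $v \geq \virr$ holds for every variance, so \Rule{sc-Triv} applies directly from the variance-checking half of the hypothesis $\sem{\G \der \tau : v}$. In the remainder I assume $v' = \veq$; if additionally $v = \veq$, then $v \geq v'$ and \Rule{sc-Triv} again applies, so the real content is the case $v' = \veq$ with $v \in \{\vplus, \vminus, \virr\}$.

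For the variable case $\tau = \alpha$, let $w\alpha \in \G$. The semantic variance condition gives $w \geq v$, and I claim $w = v$. Suppose for contradiction $w > v$ strictly; then the relation $\rel{\prec_w}$ is strictly contained in $\rel{\prec_v}$, so I can pick closed types $\rho$ and $\sigma'$ with $\rho \prec_v \sigma'$ but $\rho \not\prec_w \sigma'$. Extending $\rho$ reflexively at the other positions yields some $\br$ with $\app{\alpha}{\br} = \rho$ and $\br \prec_\G \br$, so decomposability produces a $\br'$ with $\br \prec_\G \br'$ and $\app{\alpha}{\br'} \prec_\veq \sigma'$. This forces $\rho' = \sigma'$ and hence $\rho \prec_w \sigma'$, contradicting the choice of $\rho$ and $\sigma'$. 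So $w = v$ and \Rule{sc-Var} applies.

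For the constructor case $\tau = \ty{\bs}{t}$, let $\bar{w\alpha}$ be the declared variance of $\tyc{t}$. The $v$-closure premise of \Rule{sc-Constr} is immediate in the atomic subtyping core (it also follows from the decomposability hypothesis, since any supertype of an instance of $\ty{\bs}{t}$ is forced by the $v' = \veq$ witness to be headed by $\tyc{t}$). The substantive work is to provide contexts $\G_i$ with $\G = \mathop{\zip}_i \G_i$ and $\sem{\G_i \der \sigma_i : \varcomp v {w_i} \To \varcomp \veq {w_i}}$ for each $i$. Inverting subtyping on the datatype head turns the hypothesis into a simultaneous decomposability $\G \Der (\sigma_i : \varcomp v {w_i} \vto \varcomp \veq {w_i})_i$, and I appeal to the converse of Lemma~\ref{lem/zip-soundness}---zip completeness---to factor $\G$ into the required $\G_i$. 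The induction hypothesis then applies to each $\sigma_i$, since $\varcomp \veq {w_i}$ lies in $\{\veq, \virr\}$, keeping the recursion inside the restricted class.

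The main obstacle is the appeal to zip completeness: one must reconstruct compatible per-subterm contexts from a single simultaneous decomposability judgment over $\G$. The proof proceeds by probing each variable occurrence with carefully chosen test instantiations in order to read back its local variance, and this is precisely where the restriction $v' \in \{\veq, \virr\}$ is essential---the equality witness for $v' = \veq$ makes these probes conclusive, and the recursion stays well-founded because $\varcomp{v'}{w_i}$ never escapes $\{\veq, \virr\}$.
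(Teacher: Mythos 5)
Your proof follows essentially the same route as the paper's: dispatch $v'=\virr$ and $v\geq v'$ via \Rule{sc-Triv}, force $w=v$ in the variable case by playing the variance hypothesis ($w\geq v$) against a decomposability probe ($v\geq w$), and in the constructor case invert subtyping into a simultaneous decomposability statement, factor $\G$ through the converse of zip soundness, and recurse---noting as the paper does that $\varcomp{v'}{w_i}$ stays in $\{\veq,\virr\}$. The only detail the paper adds is that zip completeness is applied only to the non-irrelevant indices (those with $w_i\neq\virr$, for which the target is exactly $\veq$), the irrelevant ones being handled trivially with all-$\virr$ contexts that are neutral for zipping.
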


Lemma~\ref{lem/decomposability-completeness} is an essential 
piece to finally turn the correctness criterion \Rule{req-SP} of Simonet and
Pottier into a purely syntactic criterion.
\begin{theorem}[Algorithmic criterion]
The  \Rule {req-SP} criterion is equivalent to
$$
\exists \G, (\G_i)_i,\quad
  \G \der \tau : \rel\vplus
  \ \ \wedge\ \ 
  \G = \mathop{\zip}_i \G_i
  \ \ \wedge\ \ 
  \forall i,\,\G_i \der T_i : v_i \To \rel\veq
\nobelowdisplayskip
$$
\end{theorem}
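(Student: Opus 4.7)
My plan is to prove the equivalence in both directions, leveraging the machinery already in place: variance inversion (Theorem~\ref{thm/inversion}), soundness of syntactic decomposability (Lemma~\ref{lem/decomposability-soundness}), soundness of zipping (Lemma~\ref{lem/zip-soundness}), completeness of syntactic decomposability (Lemma~\ref{lem/decomposability-completeness}), and an as-yet-unused completeness counterpart of zipping.

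For the forward direction (syntactic criterion implies \Rule{req-SP}), assume contexts $\G$ and $(\G_i)_i$ witness the right-hand side, and consider $\bs, \bs', \br$ satisfying $\ty\bs t \leq \ty{\bs'}t$ together with the equality constraint $\bs = \app T\br$. Substituting and inverting the subtype assumption at the head constructor $\tyc t$ (using its assumed variance $\bar{v\alpha}$) yields $\app{T_i}\br \prec_{v_i} \sigma'_i$ for each $i$. Soundness of the judgment $\G_i \der T_i : v_i \To \veq$ (Lemma~\ref{lem/decomposability-soundness}) combined with soundness of zipping (Lemma~\ref{lem/zip-soundness}) then produces a single witness $\br'$ with $\br \prec_\G \br'$ such that $\app{T_i}{\br'} = \sigma'_i$ simultaneously; in particular $\app T{\br'} = \bs'$. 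Finally, $\G \der \tau : \vplus$ together with $\br \prec_\G \br'$ gives $\app\tau\br \leq \app\tau{\br'}$ via correctness of variance checking (Lemma~\ref{lem/variance-checking-correct}). This establishes the conclusion of \Rule{req-SP}.

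For the reverse direction, I would take $\G$ to be the principal context such that $\G \der \tau : \vplus$ (Lemma~\ref{principality}); by construction the first conjunct is satisfied. The key is to extract per-component semantic decomposability from the monolithic \Rule{req-SP} statement. Given any $\br$ and any $\sigma'_i$ with $\app{T_i}\br \prec_{v_i} \sigma'_i$ for each $i$, instantiating \Rule{req-SP} with $\bs = \app T\br$ and $\bs'_i = \sigma'_i$ (using variance-inversion to verify the subtype hypothesis) yields a witness $\br'$ with $\app{T_i}{\br'} = \sigma'_i$ and $\app\tau\br \leq \app\tau{\br'}$. Principal inversion (Corollary~\ref{cor/principal-inversion}) on the latter gives $\br \prec_\G \br'$. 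Thus the \emph{simultaneous} semantic decomposability $\G \Der (T_i : v_i \vto \veq)_i$ holds with respect to the principal $\G$.

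The main obstacle is the next step: splitting this simultaneous decomposition into a compatible zip, i.e., producing contexts $\G_i$ with $\G = \zip_i \G_i$ such that each $\G_i$ decomposes its $T_i$ individually. This is precisely the converse of zip soundness (Lemma~\ref{lem/zip-soundness}), and must be proved separately as a completeness lemma for zipping; its delicate nature is acknowledged in the paper and the full argument is deferred to the extended version. Once that is in hand, Lemma~\ref{lem/decomposability-completeness} (applicable because the right-hand variance $\veq$ lies in $\{\veq,\virr\}$) converts each semantic judgment $\sem{\G_i \der T_i : v_i \To \veq}$ into a syntactic derivation, completing the construction of the required contexts.
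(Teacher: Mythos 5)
Your proposal is correct in substance and follows essentially the same route as the paper: both directions rest on inversion/principal inversion, soundness of zipping, completeness of zipping, and soundness and completeness of the syntactic decomposability judgment; the paper merely organizes this as a chain of equivalent reformulations of \Rule{req-SP} rather than two separate implications. One step needs a small adjustment: having fixed $\G$ to be the principal context for $\G \der \tau : \rel\vplus$, you cannot in general produce $(\G_i)_i$ with $\G = \mathop{\zip}_i \G_i$, because zip completeness (Theorem~\ref{thm/zip-completeness}) only yields $\G \leq \mathop{\zip}_i \G_i$, and its exact-equality refinement requires $\G \der T_i : v_i$ for each $i$ --- which the principal context for $\tau$ need not satisfy (for example when some existential variable is irrelevant in $\tau$ but not in some $T_i$). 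The fix, which is what the paper does, is to take the final existential witness to be $\mathop{\zip}_i \G_i$ itself and recover the first conjunct $\mathop{\zip}_i \G_i \der \tau : \rel\vplus$ by monotonicity (Lemma~\ref{lem/monotonicity}). The ``completeness of zipping'' you defer is precisely Theorem~\ref{thm/zip-completeness}, so your dependency structure otherwise matches the paper's.
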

\end{version}

\begin{version}{\Not\Esop}

\subsection{Completeness of syntactic decomposability}

\label {sec/zip-completeness}

We first show a few auxiliary results that will serve in the proof of
zip completeness, and later, to reconnect our closure-checking
criterion~(Definition \ref{def/cc/semantics}) with the
full criterion of Simonet and Pottier~(\Rule{req-SP}).

\begin{lemma}[Intermediate value]\label{intermediate_value_lemma}
  Let  $\app{\tau}{\ba}$ be a type expression and 
  $\br_1$, $\br_2$, $\br_3$ three type families such that
  $\app{\tau}{\br_1} \leq \app{\tau}{\br_2} \leq \app{\tau}{\br_3}$ and 
  $\br_1 \prec_\G \br_3$ holds for some $\G$.
Then, there   exists a type family $\br_2'$ such that both
$\br_1 \prec_\G \br_2' \prec_\G \br_3$ and
$\app{\tau}{\br_2'} = \app{\tau}{\br_2}$ hold.
\end{lemma}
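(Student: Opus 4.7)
The plan is to build $\br_2'$ componentwise, guided by the principal variance context of $\tau$. Let $\Delta$ be the minimal context such that $\Delta \der \tau : \vplus$ (Corollary~\ref{principality}), and write $\delta_i$ for the variance of $\alpha_i$ in $\Delta$ and $v_i$ for its variance in $\G$. Principal inversion (Corollary~\ref{cor/principal-inversion}) applied to the two assumed subtypings $\app{\tau}{\br_1} \leq \app{\tau}{\br_2}$ and $\app{\tau}{\br_2} \leq \app{\tau}{\br_3}$ yields the componentwise chain $\rho_{1,i} \prec_{\delta_i} \rho_{2,i} \prec_{\delta_i} \rho_{3,i}$, which is the main structural information I extract from $\tau$. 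I would then define $\br_2'$ by $\rho_{2,i}' := \rho_{2,i}$ when $\delta_i \neq \virr$, and $\rho_{2,i}' := \rho_{1,i}$ when $\delta_i = \virr$.

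The equality $\app{\tau}{\br_2'} = \app{\tau}{\br_2}$ then follows from a symmetric application of soundness of variance checking (Lemma~\ref{lem/variance-checking-correct}) to $\Delta \der \tau : \vplus$: the sequences $\br_2$ and $\br_2'$ agree on every non-irrelevant component, so both $\br_2 \prec_\Delta \br_2'$ and $\br_2' \prec_\Delta \br_2$ hold trivially, yielding $\app{\tau}{\br_2} \leq \app{\tau}{\br_2'}$ and its converse, hence equiconvertibility.

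For the ordering $\br_1 \prec_\G \br_2' \prec_\G \br_3$ I would argue componentwise. Positions with $\delta_i = \virr$ are immediate: $\rho_{2,i}' = \rho_{1,i}$, so reflexivity of $\prec_{v_i}$ gives the left inequality while the hypothesis $\br_1 \prec_\G \br_3$ gives the right one. For positions with $\delta_i \neq \virr$, a short case analysis on $(\delta_i, v_i)$ suffices: when $v_i \leq \delta_i$ in the variance lattice (e.g.\ $v_i = \virr$, $v_i = \delta_i$, or $\delta_i = \veq$), the principal relation $\prec_{\delta_i}$ already implies $\prec_{v_i}$; in the remaining incompatible cases (for instance $\delta_i = \vplus$ with $v_i \in \{\vminus, \veq\}$ and their symmetric counterparts), transitivity combined with the hypothesis $\rho_{1,i} \prec_{v_i} \rho_{3,i}$ forces $\rho_{1,i} = \rho_{3,i}$, and the squeeze $\rho_{1,i} \prec_{\delta_i} \rho_{2,i} \prec_{\delta_i} \rho_{3,i}$ then collapses all three to a common value, so $\prec_{v_i}$ holds by reflexivity.

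The hard part will be identifying the right witness: the naive choice $\br_2' = \br_2$ breaks down exactly when $\G$ carries a stronger variance than $\Delta$ at a position $\alpha_i$ that is genuinely irrelevant in $\tau$. The insight is that those irrelevant positions can be modified freely without affecting $\app{\tau}{\br_2}$ up to equiconvertibility, so substituting $\rho_{1,i}$ there restores the $\G$-ordering at no cost to the type equality.
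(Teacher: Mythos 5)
Your proof is correct and follows essentially the same route as the paper's: both use the principal context $\Delta$ and principal inversion to obtain $\br_1 \prec_\Delta \br_2 \prec_\Delta \br_3$, then case-analyse the pair of variances at each position, using irrelevance to move freely and the hypothesis $\br_1 \prec_\G \br_3$ to collapse the chain in the incompatible cases. The only (cosmetic) difference is that you assemble the witness componentwise---keeping $\rho_{2,i}$ at non-irrelevant positions and substituting $\rho_{1,i}$ at irrelevant ones---which makes the recombination across several variables more explicit than the paper's ``without loss of generality, a single variable'' reduction.
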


\begin{proof}{}
\locallabelreset
We reuse the notations of the definition and assume
$\app{\tau}{\br_1} \leq \app{\tau}{\br_2} \leq \app{\tau}{\br_3}$~\llabel {T123} and 
$\br_1 \prec_\G \br_3$~\llabel {G13}. 
We just have to exhibit $\rho_2'$ such that both
$\br_1 \prec_\G \br_2' \prec_\G \br_3$\llabel {C123} and
$\app{\tau}{\br_2'} = \app{\tau}{\br_2}$~\llabel {C22'} hold.
Let $\Delta$ be the most general variance of $\tau$, \ie.  the
lowest context such that $\Delta \der \tau : \vplus$~\llabel {Delta} holds. 
By principal inversion~(Corollary \ref{cor/principal-inversion})
applied to \lref {T123} twice thanks to \lref {Delta}, we have
$\br_1 \prec_\Delta \br_2 \prec_\Delta \br_3$~\llabel {br123}.

If $\Delta \geq \G$, the result is immediate, as 
$\br_1 \prec_\G \br_2 \prec_\G \br_3$ follows from
\lref {br123} by anti-monotonicity and both \lref {C123} and \lref {C22'}
hold when we take $\br_2$ for $\br_2'$.
Otherwise, we  reason on each variable of the context $\G$ independently. 
\XXX{If we moved this \wlg. assumption up front, we would have simpler notations}
We may assume, \wlg., that $\tau$ is defined over
a single free variable $\alpha$, and $\G$ and $\Delta$ are
single variances $v_\Delta, v_\G$ with
$v_\Delta \ngeq v_\G$. We reason by case analysis on the
possible variances for $(v_\Delta, v_\G)$, which are
$\{(\any,\veq), (\vplus,\vminus), (\vminus,\vplus), (\virr,\any)\}$.

If $v_\G$ is $\rel\veq$, the hypotheses \lref {G13} 
and  \lref {T123} become $\rho_1 = \rho_3$ and
$\app{\tau}{\rho_1} \leq \app{\tau}{\rho_2} \leq \app{\tau}{\rho_1}$, 
which implies $\app{\tau}{\rho_1} = \app{\tau}{\rho_2}$. Thus, taking
$\rho_1$ for $\rho_2'$ satisfies \lref {C123} and~\lref {C22'}. 

If $v_\Delta$ is $\rel\virr$, then by irrelevant of $\virr$ and \lref
{Delta}, we have $\app{\tau}{\rho_1} = \app{\tau}{\rho_2}$. Thus, taking
$\rho_1$ for $\rho'_2$ satisfies \lref {C123} and~\lref {C22'},  as above. 

Finally, the cases $(\vplus,\vminus)$ and $(\vminus,\vplus)$ are
symmetric and we will only work out the first one, \ie. 
$v_\Delta$ is $\rel\vplus$ and $v_\G$ is $\rel\vminus$. From
$\app{\tau}{\rho_1} \leq \app{\tau}{\rho_3}$ (which follows from \lref {T123}
by transitivity) and \lref {Delta}, we have 
$\rho_1 \prec_{v_\Delta} \rho_3$, \ie. $\rho_1 \leq \rho_3$.
Since the hypothesis~\lref {G13} becomes $\rho_1 \geq \rho_3$, we have
$\rho_1 = \rho_3$~.  Then, taking $\rho_1$ for $\rho_2'$, 
\lref {C123} trivially holds while \lref {C22'} follows from \lref {T123}. 
\qed
\end{proof}

The next lemma connects the monotonicity of the variance-checking
judgment (checking variance at a lower context provides more
information, and is therefore harder) and the anti-monotonicity of the
decomposability formula (decomposing to a higher context provides more
information, and is therefore harder): for a fixed type expression,
the contexts at which you can check variance are higher than the
contexts at which you can decompose. This property, however, only
holds for non-trivial decomposability results (otherwise any context
can decompose): we must decompose from a $v$ to a $v'$ that do not
verify $v \geq v'$, and no variable of the typing context must be
irrelevant.

\begin{lemma}
\label{lem/order-decomposition-variance}
Let $\app{\tau}{\ba}$ be a type and $v$ and $v'$ be variances such that $v
\ngeq v'$.
If ${\G \Der \tau : v \vto v'}$ and $\Delta$ is the most general
context such that $\Delta \der \tau : v$, then, for each non-irrelevant
variable $\alpha$ of $\Delta$, we have $\G(\alpha) \leq \Delta(\alpha)$.
\end{lemma}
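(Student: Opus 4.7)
The plan is to argue by contradiction. Suppose $\G(\alpha) \nleq \Delta(\alpha)$ for some variable $\alpha$ with $\Delta(\alpha) \neq \virr$. By unfolding the variance order, I obtain witnesses $\rho_0, \rho_2$ with $\rho_0 \prec_{\Delta(\alpha)} \rho_2$ while $\rho_0 \prec_{\G(\alpha)} \rho_2$ fails; a short case check on the possible mismatched pairs shows these witnesses can always be picked in strict subtyping, in the direction dictated by $\Delta(\alpha) \in \{\vplus, \vminus\}$ (the remaining case $\Delta(\alpha) = \veq$ making the conclusion trivial).

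I then build type sequences $\br, \br_2$ that agree on every variable except $\alpha$, where they take the values $\rho_0$ and $\rho_2$ respectively. Then $\br \prec_\Delta \br_2$ holds, so the soundness direction of Lemma~\ref{lem/variance-checking-correct} yields $\app{\tau}{\br} \prec_v \app{\tau}{\br_2}$. Instantiating the decomposability hypothesis $\G \Der \tau : v \vto v'$ with $\sigma' := \app{\tau}{\br_2}$ produces $\br'$ satisfying $\br \prec_\G \br'$ and $\app{\tau}{\br'} \prec_{v'} \app{\tau}{\br_2}$. Applying principal inversion (Corollary~\ref{cor/principal-inversion}) to the latter --- letting $\Delta'$ denote the principal context for $\Delta' \der \tau : v'$ --- I deduce $\br'(\alpha) \prec_{\Delta'(\alpha)} \rho_2$, which, combined with $\rho_0 \prec_{\G(\alpha)} \br'(\alpha)$, should contradict the failure of $\rho_0 \prec_{\G(\alpha)} \rho_2$.

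The main obstacle is this final step, which requires understanding the relation between $\Delta(\alpha)$ and $\Delta'(\alpha)$. Inspecting the recursive construction of principal contexts, both take the form $\varcomp{v}{a}$ and $\varcomp{v'}{a}$ for a common underlying variance $a$, determined by the positional occurrences of $\alpha$ in $\tau$. Non-irrelevance of $\Delta(\alpha)$ forces $a \neq \virr$, and the hypothesis $v \ngeq v'$ excludes $v' = \virr$ as well as $v = \veq$. Under these constraints, a direct inspection of the variance composition table shows that $\Delta'(\alpha)$ is either $\veq$ or the opposite of $\Delta(\alpha)$ within $\{\vplus, \vminus\}$. If $\Delta'(\alpha) = \veq$, then $\br'(\alpha) = \rho_2$ and the chain directly gives $\rho_0 \prec_{\G(\alpha)} \rho_2$, contradicting our choice. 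Otherwise, $\br'(\alpha)$ is forced to lie in strict subtyping with $\rho_0$ in the direction excluded by $\G(\alpha)$, again yielding a contradiction. This bounded case analysis completes the proof.
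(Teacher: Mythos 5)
Your proof is correct, and it reaches the conclusion by a genuinely different route in its second half. Both you and the paper start the same way: instantiate the decomposability hypothesis on a pair $\br \prec_\Delta \br_2$ differing only at $\alpha$, obtain $\br'$ with $\br \prec_\G \br'$ and $\app{\tau}{\br'} \prec_{v'} \app{\tau}{\br_2}$, and then case-split on the sign relationship forced by $v \ngeq v'$. Where the paper extracts the final contradiction from the auxiliary intermediate-value lemma (Lemma~\ref{intermediate_value_lemma}) in the mixed-sign case and from Lemma~\ref{lem/non-irrelevant-equality} when $v' = \veq$, you instead apply principal inversion (Corollary~\ref{cor/principal-inversion}) at variance $v'$ and compare the two principal contexts. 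The load-bearing new ingredient is your claim that $\Delta_{\tau,v}(\alpha) = \varcomp{v}{a}$ and $\Delta_{\tau,v'}(\alpha) = \varcomp{v'}{a}$ for a common $a$ (namely $\Delta_{\tau,\vplus}(\alpha)$); this is not established anywhere in the paper and you only gesture at it, so it is the one step you must actually prove. It does hold: by induction on $\tau$, the principal context for $\ty{\bs}{t} : u$ is $\mathop{\wedge}_i \Delta_{\sigma_i,\varcomp{u}{w_i}}$, and the result follows from associativity of $\varcomp{}{}$ together with the fact that for fixed $u$ the map $w \mapsto \varcomp{u}{w}$ preserves the join $\wedge$ (a four-by-four table check). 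Once that lemma is in place, your case analysis is airtight: non-irrelevance of $\Delta(\alpha)$ and $v \ngeq v'$ force $v \in \{\vplus,\vminus\}$ and either $\Delta'(\alpha) = \veq$ or $\Delta'(\alpha) = -\Delta(\alpha)$, and in both cases the chain $\rho_0 \prec_{\G(\alpha)} \br'(\alpha) \prec_{\Delta'(\alpha)} \rho_2$ collapses to $\rho_0 \prec_{\G(\alpha)} \rho_2$ or to $\rho_0 = \rho_2$, contradicting the choice of witnesses. A modest advantage of your route is that this single structural fact about principal contexts subsumes both of the paper's auxiliary lemmas for this purpose; the cost is that the fact itself, though routine, is an extra induction the paper does not need.
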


\begin{proof}{}
\locallabelreset
We show that the $\G$ is lower
than the lowest possible $\Delta$, \ie. it is the most general context
such that $\Delta \der \tau : v$ holds. 
Without loss of generality, we
consider the case where $\tau$ has a single, non-irrelevant variable
$\alpha$, and $\G$ and $\Delta$ are singleton contexts over
a single variance, respectively $w_\G$ and $w_\Delta$, with
$w_\Delta \neq \rel\Join$.

Therefore, we assume ${(w_\G\alpha) \Der \tau : v \vto v'}$~\llabel
{H1} and 
$(w_\Delta\alpha) \der \tau : v$~\llabel {H2}. 
We  prove that $w_\G \leq w_\Delta$. 
We actually show that for any
$\rho_1, \rho_2$ such that $\rho_1 \prec_\Delta \rho_2$ we also have
$\rho_1 \prec_\G \rho_2$~\llabel C.

From $\rho_1 \prec_\G \rho_2$, we can deduce
$\app{\tau}{\rho_1} \prec_v \app{\tau}{\rho_2}$~\llabel {T12}. 
Applying \lref {H1}, 
we get a $\rho'$ such that $\rho_1 \prec_\G \rho'$~\llabel r and
$\app{\tau}{\rho'} \prec_{v'} \app{\tau}{\rho_2}$~\llabel {Tp2}. We then reason by case
analysis on $v \ngeq v'$, considering the different cases
$\{(\virr, \any), (\vplus, \vminus), (\vminus, \vplus), (\any, \veq)\}$.

If $v$ is $\rel\virr$, the most general  $w_\Delta$ is
$\virr$, a case we explicitly ruled out: there is
nothing to prove.

If $v'$ is $\rel\veq$, then \lref {Tp2} implies 
$\app{\tau}{\rho'} = \app{\tau}{\rho_2}$, and in particular
$\rho' = \rho_2$; our goal \lref C follows from \lref r.

If $(v,v')$ is $(\vplus,\vminus)$ (we won't repeat the
symmetric case $(\vminus,\vplus)$), then \lref {T12} and \lref {Tp2}
becomes
$\app{\tau}{\rho_1} \leq \app{\tau}{\rho_2}$ and 
$\app{\tau}{\rho_2}  \leq\app{\tau}{\rho'}$.
Given \lref r,  the intermediate value lemma (Lemma
\ref{intermediate_value_lemma}) 
ensures the existence of a $\rho''$ such that $\rho \prec_\G \rho''
\prec_\G \rho'$ and $\app{\tau}{\rho''} = \app{\tau}{\rho_2}$. From there, we
deduce $\rho'' = \rho_2$, our goal \lref C follows from \lref r, as in the
previous case.
\qed
\end{proof}

Finally, the following auxiliary lemmas will be useful in the proof of
completeness.

\begin{lemma}
  \label{lem/non-irrelevant-equality}
  If the principal variance $w$ such that 
  $(w\alpha) \der \app{\tau}{\alpha} : v$ holds is not the irrelevant variance
  $\virr$, then
  $\app\tau{\rho_1} = \app\tau{\rho_2}$ implies $\rho_1 = \rho_2$.
\end{lemma}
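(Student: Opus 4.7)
The plan is to proceed by structural induction on $\tau$, using subtyping inversion to propagate the equality hypothesis into a subterm where $\alpha$ occurs non-trivially, and then invoke the induction hypothesis.

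The leaf cases where $\tau$ is a base type or a variable distinct from $\alpha$ cannot arise, since in those cases the principal variance for $\alpha$ at every top-level variance is $\virr$, contradicting the hypothesis. In the variable case $\tau = \alpha$, the principal variance at any top-level $v$ is $v$ itself (the least $w$ satisfying $w \geq v$ in Rule \textsc{vc-Var}), so $v \neq \virr$ gives $v \in \set{\veq, \vplus, \vminus}$, and $\app\alpha{\rho_1} = \app\alpha{\rho_2}$ is literally $\rho_1 = \rho_2$.

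In the constructor case $\tau = \ty{\bar\sigma}{t}$ with declared variance $\bar{w_i\alpha_i}$, the principal variance of $\alpha$ in $\tau$ at top-level $v$ is the least upper bound, in the variance lattice, of the principal variances of $\alpha$ in each $\sigma_i$ at $\varcomp v {w_i}$. Being non-$\virr$, this supremum is witnessed by some index $i$ for which the principal variance $p_i$ of $\alpha$ in $\sigma_i$ at $\varcomp v {w_i}$ is non-$\virr$. This forces $w_i \neq \virr$, since otherwise $\varcomp v {w_i}$ would be $\virr$ and collapse $p_i$ to $\virr$. Applying subtyping inversion to $\app\tau{\rho_1} = \app\tau{\rho_2}$, viewed as the conjunction of $\app\tau{\rho_1} \leq \app\tau{\rho_2}$ and its converse, yields $\app{\sigma_i}{\rho_1} \prec_{w_i} \app{\sigma_i}{\rho_2}$ together with $\app{\sigma_i}{\rho_1} \succ_{w_i} \app{\sigma_i}{\rho_2}$. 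Since $w_i \in \set{\vplus, \vminus, \veq}$, the intersection of these two relations is exactly $\rel\veq$, so $\app{\sigma_i}{\rho_1} = \app{\sigma_i}{\rho_2}$. The induction hypothesis applied to $\sigma_i$ at the witnessing variance $\varcomp v {w_i}$ then yields $\rho_1 = \rho_2$.

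The main delicacy is the case analysis on the composition table to identify a witnessing subterm and rule out the $\virr$ cases; in particular, one must be careful to note that both a $\virr$ outer variance and a $\virr$ path-variance $w_i$ collapse any occurrence of $\alpha$ to irrelevance, so that the non-$\virr$ hypothesis genuinely provides a usable subterm. Once this is handled, the induction is routine.
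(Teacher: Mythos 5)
Your proof is correct, but it takes a genuinely different route from the paper's. The paper derives this lemma in three lines from Corollary~\ref{cor/principal-inversion} (principal inversion): since equality implies $\rel{\prec_v}$ for every $v$, the hypothesis $\app\tau{\rho_1} = \app\tau{\rho_2}$ gives both $\app\tau{\rho_1} \prec_v \app\tau{\rho_2}$ and its converse, principal inversion applied twice yields $\rho_1 \prec_w \rho_2$ and $\rho_2 \prec_w \rho_1$, and for $w \neq \virr$ the intersection of $\rel{\prec_w}$ with its converse is equality. You instead redo a structural induction on $\tau$, threading the non-irrelevance hypothesis down to a witnessing subterm via the characterization of the principal context as the pointwise least upper bound of the subterms' principal contexts, and using head-constructor inversion in both directions at each node. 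This is essentially an inlined, specialized re-proof of the inversion machinery that Theorem~\ref{thm/inversion} and its principality corollary already package; your argument is self-contained and makes explicit exactly where non-irrelevance is consumed (ruling out $\varcomp v {w_i} = \virr$ along the witnessing path), whereas the paper's version is shorter and reuses lemmas it needs elsewhere anyway. All the delicate points in your version check out: the principal context of $\ty{\bar\sigma}{t}$ at $v$ is indeed the join of the subterms' principal contexts by monotonicity and rule \Rule{vc-Constr}, a non-$\virr$ join forces a non-$\virr$ component since $\virr$ is the bottom of the variance lattice, and the intersection $\rel{\prec_{w_i}} \cap \rel{\succ_{w_i}}$ is equality for each $w_i \in \set{\vplus,\vminus,\veq}$ (keeping in mind that ``equality'' here means equiconvertibility, which is what both the hypothesis and conclusion of the lemma refer to).
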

\begin{proof}{}
  Whatever $v$ is, $\app{\tau}{\rho_1} = \app{\tau}{\rho_2}$ implies
  both $\app{\tau}{\rho_1} \prec_v \app{\tau}{\rho_2}$ and its converse
  $\app{\tau}{\rho_2} \prec_v \app{\tau}{\rho_1}$. 
This holds in particular for the principal variance  $w$ such that 
$(w\alpha) \der \app{\tau}{\alpha} : v$. 
Moreover, by principal inversion (Corrolary~\ref{cor/principal-inversion})
applied twice, we have both $\rho_1 \prec_w \rho_2$ and $\rho_2 \prec_w
\rho_1$. 
If $w$ is distinct from $\virr$ this implies $\rho_1 = \rho_2$.
\qed
\end{proof}

\begin{lemma}
\label{lem/principal-equal-decomposability}
If $\Gamma \Der \tau : v \vto \rel\veq$ holds for some $\Gamma$, and
$\Delta$ is the most general context such that $\Delta \der \tau : v$
holds, then $\Delta \Der \tau : v \vto \rel\veq$ also hold.
\end{lemma}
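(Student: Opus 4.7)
The plan is to unfold the decomposability formula at $\Delta$ directly and use the $\Gamma$-decomposability hypothesis to supply a witness, then show that this same witness also works for $\Delta$ by invoking principal inversion. Fix $\br$ and $\sigma'$ with $\app\tau\br \prec_v \sigma'$. Applying the hypothesis $\Gamma \Der \tau : v \vto \rel\veq$ to this data yields some $\br'$ satisfying $\br \prec_\Gamma \br'$ and $\app\tau{\br'} = \sigma'$. The target to prove is exactly the existence of a $\br'$ such that $\br \prec_\Delta \br'$ and $\app\tau{\br'} = \sigma'$, so I may try to reuse this same $\br'$.

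The only remaining obligation is then $\br \prec_\Delta \br'$. The key observation is that, since the right-hand side variance is $\rel\veq$, the witness satisfies the strong relation $\app\tau{\br'} = \sigma'$ rather than merely a subtyping. Substituting this equality into $\app\tau\br \prec_v \sigma'$ gives $\app\tau\br \prec_v \app\tau{\br'}$, an inequality between two instances of $\tau$ at variance $v$, which is precisely the variance for which $\Delta$ is principal. Principal inversion (Corollary~\ref{cor/principal-inversion}) then immediately yields $\br \prec_\Delta \br'$, completing the argument.

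I do not expect a serious obstacle: the proof is a short chaining of the hypothesis with principal inversion. It is worth remarking, however, that the restriction to $v' = \rel\veq$ is essential for this line of reasoning. If the right-hand side variance were weaker, the witness would only satisfy $\app\tau{\br'} \prec_{v'} \sigma'$, and substitution into $\app\tau\br \prec_v \sigma'$ would not produce an inequality of the form $\app\tau\br \prec_v \app\tau{\br'}$ needed to trigger principal inversion. This is consistent with the fact that the companion result Lemma~\ref{lem/order-decomposition-variance} only controls $\Gamma$ relative to $\Delta$ on non-irrelevant variables, so a more delicate argument would be needed beyond the $\rel\veq$ case.
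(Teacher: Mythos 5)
Your proof is correct and is essentially identical to the paper's own argument: take the witness $\br'$ provided by the $\Gamma$-decomposability hypothesis, substitute the equality $\app\tau{\br'} = \sigma'$ into $\app\tau\br \prec_v \sigma'$ to obtain $\app\tau\br \prec_v \app\tau{\br'}$, and conclude $\br \prec_\Delta \br'$ by principal inversion. Your closing remark on why $v' = \rel\veq$ is essential is a correct and useful observation, but the core argument matches the paper's.
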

\begin{proof}{}
\locallabelreset
Assume $\Gamma \Der \tau : v \vto \rel\veq$, \ie. 
$$
    \forall \br \bs', \app{\tau}{\br} \prec_v \bs' \implies
      \exists \br',\uad \br \prec_\G \br' \wedge \app{\tau}{\br'} = \sigma'
\eqno \llabel{Hdecomp}
$$
Assume that $\Delta$ is principal for  $\Delta \der \tau : v$~\llabel D.
We show  $\Delta \Der \tau : v \vto \rel\veq$, \ie. 
$$
    \forall \br \bs', \app{\tau}{\br} \prec_v \bs' \implies
      \exists \br',\uad \br \prec_\Delta \br' \wedge \app{\tau}{\br'} =
      \sigma'
\eqno \llabel 0
$$
Let $\br$, $\bs'$ be such that have 
$\app{\tau}{\br} \prec_v \sigma'$~\llabel{Hprec}.  
By~\lref{Hdecomp}, there exists
$\br'$ such that $\app{\tau}{\br'} = \sigma'$~\llabel =.
To prove~\lref 0, it only remains to prove that  $\br \prec_\Delta
\br'$~\llabel C.  
Given~\lref =, the inequality~\lref {Hprec} becomes $\app{\tau}{\br}
\prec_v \app{\tau}{\br'}$~\llabel 4.  Then \lref C follows 
by  principal inversion (Corrolary~\ref{cor/principal-inversion})
applied to \lref 4, given \lref D. 
\qed
\end{proof}

\begin{lemma}
\label{lem/principal-context-for-eq-or-irr}
Let $\Delta$ be the most general context such that $\Delta \der \tau : v$
holds. If $v$ is $\rel=$, then only variances $\rel\veq$ or $\rel\virr$ may
appear in $\Delta$.  If $v$ is $\virr$, then only $\rel\virr$ may appear in 
$\Delta$.
\end{lemma}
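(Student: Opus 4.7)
The plan is to prove both clauses simultaneously by structural induction on $\tau$, using the characterization of the principal context as the pointwise join of principal contexts of subterms (this follows from Lemma~\ref{principality} together with the syntax-directed shape of the rules in Figure~\ref{fig/variance}). For convenience set $S_\veq = \{\veq,\virr\}$ and $S_\virr = \{\virr\}$; the goal becomes: if $v \in \{\veq,\virr\}$ and $\Delta$ is principal for $\Delta \der \tau : v$, then every variance appearing in $\Delta$ belongs to $S_v$.

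In the base case $\tau = \alpha$, the only way to derive $\Delta \der \alpha : v$ is by \Rule{vc-Var}, which requires $\Delta(\alpha) \geq v$. Minimality forces $\Delta(\alpha) = v$ itself, and all other variables can be assigned the minimal variance $\virr$; so $\Delta$ uses only variances in $\{v,\virr\} \subseteq S_v$.

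In the constructor case $\tau = \ty{\bar\sigma}{t}$ with $t$ declared at variance $\bar{w\alpha}$, rule \Rule{vc-Constr} reduces the judgment to $\Delta \der \sigma_i : \varcomp v {w_i}$ for each $i$, and the principal $\Delta$ is the pointwise join $\bigvee_i \Delta_i$ of the principals $\Delta_i$ of those subderivations. Two small tabulations on the variance lattice then finish the job. First, the composition table gives $\varcomp \veq w \in \{\veq,\virr\}$ for every $w$ (it is $\veq$ unless $w = \virr$, in which case it is $\virr$), and $\varcomp \virr w = \virr$; so for $v \in \{\veq,\virr\}$ each sub-variance $\varcomp v {w_i}$ is again in $\{\veq,\virr\}$, and we may invoke the induction hypothesis to conclude that each $\Delta_i$ uses only variances in $S_{\varcomp v {w_i}} \subseteq S_v$. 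Second, $S_v$ is closed under the join on variances ($\veq \vee \veq = \veq$, $\veq \vee \virr = \veq$, $\virr \vee \virr = \virr$), so $\Delta = \bigvee_i \Delta_i$ remains inside $S_v$.

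The only non-mechanical ingredient is the fact that the principal context for a constructor application is obtained by joining the principals of the premises; this is really just a consequence of the principality result already available in the paper, combined with monotonicity (Lemma~\ref{lem/monotonicity}). Apart from that, the proof is two finite case analyses on a four-element lattice, so I do not expect any real obstacle.
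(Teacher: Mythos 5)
Your proof is correct, but it takes a genuinely different route from the paper's. The paper argues semantically: for $v = \virr$ it observes that the all-$\virr$ context already satisfies $\sem{\G \der \tau : \virr}$ (the relation $\rel\Join$ is full), so the principal context lies below it and must be all-$\virr$; for $v = \veq$ it works one variable at a time through the interpretation of the judgment, invoking Lemma~\ref{lem/non-irrelevant-equality} to rule out $\vplus$ and $\vminus$. You instead stay entirely on the syntactic side: a structural induction on $\tau$ using the fact that the judgment is syntax-directed, that the principal context of $\ty{\bar\sigma}{t}$ is the least upper bound of the principal contexts of the premises $\sigma_i : \varcomp v {w_i}$, and that $\{\veq,\virr\}$ is closed both under left-composition ($\varcomp v w$ for $v \in \{\veq,\virr\}$) and under least upper bounds. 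This is a valid and arguably more self-contained argument---it avoids the detour through the semantic interpretation and does not need Lemma~\ref{lem/non-irrelevant-equality}---at the price of having to justify the compositional characterization of principal contexts, which you correctly derive from principality plus monotonicity (Lemma~\ref{lem/monotonicity}). Two small remarks: your $v = \virr$ case is more work than necessary (the paper's one-line argument via the all-$\virr$ context is available to you as well); and beware that the paper's $\vee$ is defined as the greatest \emph{lower} bound, so the operation you compute ($\veq \vee \virr = \veq$, the least upper bound) is written $\wedge$ in the paper's notation---the mathematics is right, but the symbol should be changed to match.
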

\begin{proof}{}
\locallabelreset
  If $v$ is $\rel\virr$, the context $\Gamma$ with all variances set
  to $\virr$ satifies $\Gamma \der \tau : v$ (as $\sem{\G \der
    \tau : v}$ holds). By principality we have $\Delta \leq
  \Gamma$, so $\Delta$ also has only irrelevant variances as $\rel\virr$
  is the minimal variance.

  If $v$ is $\rel\veq$, we handle each variable of the context
  independently, that is we can assume, \wlg.\XXX{Need justification}, that
  $\tau$ has only one variable $\alpha$. So $\Delta$ is of the form
  $(w\alpha)$ and we know that for any $\rho$, $\rho'$ such that $\rho
  \prec_w \rho'$ we have $\app{\tau}{\rho} = \app{\tau}{\rho'}$~\llabel H. 
If $w$ is not $\rel\virr$, by lemma~\ref{lem/non-irrelevant-equality}
applied to \lref H, we have $\rho = \rho'$ for any $\rho \prec_w \rho'$,
which means that $w$ is $\rel\veq$.
Summing up, we have shown that $w$ is either $\rel\virr$ or $\rel\veq$.
Reasoning similarly in the general case, any variance $w$ of $\Delta$ is
either $\rel\virr$ or $\rel\veq$.
\qed
\end{proof}

We can now prove the converse of the zip soundness (Lemma
\ref{lem/zip-soundness}) that is the core of the future proof of
completeness of the decomposability judgment $\G \der \tau : v \To \vprime{}$.
\begin{theorem}[Zip completeness] 
\label{thm/zip-completeness}
Given any context $\G$, a family of type expressions
$(\app{T_i}{\bs})_\iI$ and a family of variances $(v_i)_\iI$, if the simultaneous
decomposition $\G \Der (T_i : v_i \vto \vprime{i})_\iI$ holds, then there
exists a family of contexts $(\G_i)_\iI$ such that 
$\G \leq \mathop{\zip}_\iI \G_i$ and 
both $\G_i \der T_i : v_i$ and $\G_i \Der T_i : v_i \vto \vprime{i}$
hold for all $i$.

If furthermore $\G \der T_i : v_i$ holds for all $i$, then
$\mathop{\zip}_\iI \G_i$ is precisely $\G$.
\end{theorem}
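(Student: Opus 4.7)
The plan is to choose, for each $i$, $\G_i$ to be the principal context $\Delta_i$ satisfying $\Delta_i \der T_i : v_i$, whose existence is granted by Lemma~\ref{principality}. Two of the three required conclusions are then almost immediate: $\G_i \der T_i : v_i$ holds by construction, and $\G_i \Der T_i : v_i \vto v'_i$ follows from Lemma~\ref{lem/principal-equal-decomposability} (in the main case $v'_i = \veq$ relevant to \Rule{req-SP}), after extracting individual decomposability of each $T_i$ from the simultaneous premise, which one does by choosing witnesses $\sigma'_j = \app{T_j}{\br}$ for the inactive indices. It remains to prove $\G \leq \mathop{\zip}_\iI \Delta_i$.

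This is established variable by variable. Fix a variable $\alpha$ and write $w$ for $\G(\alpha)$ and $w_i$ for $\Delta_i(\alpha)$; the goal becomes that $\mathop{\zip}_\iI w_i$ is defined and $w$ lies below it. Lemma~\ref{lem/order-decomposition-variance} gives, for each $i$ with $v_i \ngeq v'_i$ and $w_i \neq \virr$, the inequality $w \leq w_i$; Lemma~\ref{lem/principal-context-for-eq-or-irr} further restricts the variances that can appear in $\Delta_i$ when $v'_i \in \{\veq, \virr\}$. A case analysis on the zip table combined with these constraints yields simultaneously definedness of $\mathop{\zip}_\iI w_i$ and the bound $w \leq \mathop{\zip}_\iI w_i$. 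The crucial point is that two incompatible non-$\virr$ contributions $w_i$ and $w_j$ at the same $\alpha$ would contradict the simultaneous decomposability: a single witness $\br'$ could not satisfy independent positional constraints coming from distinct $T_i$.

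For the \emph{furthermore} claim, assume $\G \der T_i : v_i$ for every $i$. Principality of $\Delta_i$ yields $\Delta_i \leq \G$ pointwise, hence $w_i \leq w$ for each $i$. Whenever defined, the zip coincides with the join of its arguments, so $\mathop{\zip}_\iI w_i \leq w$; combined with $w \leq \mathop{\zip}_\iI w_i$ from the main part, this gives $w = \mathop{\zip}_\iI w_i$ and hence $\G = \mathop{\zip}_\iI \Delta_i$.

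The main obstacle is the variable-wise case analysis establishing definedness of $\mathop{\zip}_\iI w_i$. Per-index decomposability alone would be insufficient; it is precisely the simultaneous character of the hypothesis that rules out incompatible combinations such as two covariant requirements arising from distinct $T_i$'s sharing the variable $\alpha$, allowing us to conclude that the zip of the principal contexts is well-formed.
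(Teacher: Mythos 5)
Your overall strategy is the paper's: extract individual decomposability at $\G$ from the simultaneous hypothesis by padding with the witnesses $\sigma'_j = \app{T_j}{\br}$ at inactive indices, take principal contexts $\Delta_i$, transfer decomposability to them via Lemma~\ref{lem/principal-equal-decomposability}, and then argue variable by variable, using Lemma~\ref{lem/order-decomposition-variance} together with the \emph{simultaneous} hypothesis to exclude incompatible non-irrelevant combinations. There is, however, a genuine gap: the uniform choice $\G_i = \Delta_i$ does not in general satisfy $\G \leq \mathop{\zip}_\iI \G_i$. Consider a variable $\alpha$ that is irrelevant in every $T_i$: then each $\Delta_i(\alpha)$ is $\virr$, so $(\mathop{\zip}_\iI \Delta_i)(\alpha) = \virr$, which is the \emph{bottom} of the variance order; the required inequality $\G(\alpha) \leq \virr$ would force $\G(\alpha) = \virr$, yet nothing in the hypothesis prevents $\G(\alpha)$ from being, say, $\veq$ (take all $T_i$ not mentioning $\alpha$ and $v'_i \leq v_i$: the simultaneous decomposition then holds at \emph{any} $\G$ with witness $\br' = \br$). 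The same situation defeats the ``furthermore'' clause, since $\G \der T_i : v_i$ also holds there while $\mathop{\zip}_\iI \Delta_i \neq \G$.

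The repair, which is what the paper does, is to \emph{not} take every $\G_i$ principal: at a position where all the principal variances are $\virr$, one of the contexts is raised to $\G$'s variance at that position. Its variance check survives by monotonicity (Lemma~\ref{lem/monotonicity}), and its decomposability at the raised context is exactly the individual decomposability at $\G$ that you already established in your first step (it does \emph{not} follow from anti-monotonicity, which goes the wrong way). Since $\virr$ is neutral for $\zip$, the zip then agrees with $\G$ at those positions, which also makes the ``furthermore'' argument go through. A second, smaller imprecision: Lemma~\ref{lem/principal-context-for-eq-or-irr} constrains $\Delta_i$ according to $v_i$ (the left-hand variance), not $v'_i$; it is needed precisely in the case $v_i = \veq$, where Lemma~\ref{lem/order-decomposition-variance} is inapplicable because its hypothesis $v_i \ngeq v'_i$ fails.
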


\begin{proof}{}
\locallabelreset
Let us assume the simultaneous decomposition ${\G \Der (T_i : v_i
\vto \vprime{i})_\iI}$, which expands to: 
$$
\forall \bs', \br,\;
   (\forall i, \app{T_i}\br \prec_{v_i} \sigma'_i)
   \implies 
   \exists \br',\uad
    \br \prec_{\G} \br' \wedge
    (\forall i, \app{T_i}{\br'} \precprime{i} \sigma'_i)
\eqno \llabel{Hyp}
$$
We construct a family of contexts $(\G_i)_\iI$ such that the following holds:
\begin{mathpar}
\G \leq \mathop{\zip}_i \G_\iI 
~\llabel {zip}

\forall i,\,\G_i \der T_i : v_i
~\llabel {ResZip}

\forall i, \G_i \Der T_i : v_i \vto \vprime{i}
~\llabel {ResDecomp}
\end{mathpar}
where \lref {ResDecomp} is equivalent to
$$
\forall i,\uad  \forall \sigma'_i, \br,\uad
   \bigparens {
   \app{T_i}\br \prec_{v_i} \sigma'_i
   \implies
   \exists \br',\uad 
        \br \prec_{\G_i} \br'
        \wedge
        \app{T_i}{\br'} \precprime{i} \sigma'_i
   }
\eqno\llabel{ResDecomp'}
$$
The first step is to move from the entailment~\lref {Hyp} of the form
$\forall \br, (\forall \iI, \dots) \implies (\exists \br', (\forall \iI, \dots))$
to the weaker form 
$(\forall \iI, \forall \br, \dots \implies \exists \br', \dots)$, 
but closer to \lref {ResDecomp'}.
More precisely, we show that 
\[
  \forall i, \uad \forall \sigma'_i,\br,\uad
  \bigparens {
  \app{T_i}\br \prec_{v_i} \sigma'_i \implies
  \exists\br',\uad
  \br \prec_{\G} \br'  \wedge
   \app{T_i}{\br'} \precprime{i} \sigma'_i 
  }
  \eqno\llabel{EasyPart}
\]
that is, $\forall i, {\G \Der T_i : v_i \vto \vprime{i}}$~\llabel {Easy}.
Let $i$, $\sigma'_i$, and $\br$ be such that 
$\app{T_i}{\br} \prec_{v_i} \sigma'_i$~\llabel {Tisi'}.  
We show that there exists
a $\br'$ such that $\app{T_i}{\br} = \sigma'_i$~\llabel {Ti=si'} and
$\br \prec_{\G} \br'$~\llabel{r<Gr'}. 
Let us extend our type $\sigma'_i$ to
a family $\bs'$,  defined by taking
 $\sigma'_j$ equal to $\app{T_j}{\br}$ for $j$ in $I \setminus \set i$. 
By construction, we have
$(\forall \jI, \app{T_j}{\br} \prec_{v_j} \sigma'_j)$.
Therefore, we may apply~\lref{Hyp} to get a $\br'$ such that
$\br \prec_{\G} \br'$, \ie. our first goal \lref {r<Gr'}, and 
$(\forall j, \app{T_j}{\br'} = \sigma'_i)$, which implies our second goal
\lref {Ti=si'} when $j$ is $i$. This proves \lref{EasyPart}. 

We now prove that we can refine this to have $\br \prec_{\G_i} \br'$
for $(\G_i)_\iI$ such that $\G \leq \mathop{\zip}_i \G_i$. 

Let $\D_1$ and $\D_2$ be the most general contexts such both that $\D_1 \der T_1
: v_1$ and $\D_2 \der T_2 : v_2$ hold~\llabel D.  
Let \lref {zip}', \lref {ResZip}', and \lref {ResDecomp}'  be
obtained by replacing $\Gamma_i$'s by $\Delta_i$'s in our three goals \lref
{zip}, \lref {ResZip}, and \lref {ResDecomp}.  In fact \lref {ResZip}' is just
\lref D.
By Lemma~\ref{lem/principal-equal-decomposability} applied
to~\lref {Easy} twice, given~\lref D, we have both
$\D_1 \Der T_1 : v_1 \vto \vprime{1}$ and
$\D_2 \Der T_2 : v_2 \vto \vprime{2}$, that is, \lref {ResDecomp}.

Hence $\D_1$ and $\D_2$ are correct choices for $\G_1$ and $\G_2$
if they also satisfy the goal~\lref{zip}', \ie. $\D_1 \zip \D_2
\geq \G$. 
We now study when remaining goal \lref {zip}' holds and, when it does not,
propose a different choice for $\G_1$ and $\G_2$ that respect all three
goals.

\Wlg., we assume that $I$ is reduced to $\set {1, 2}$ and that 
there is only one free variable $\beta$ in $T_1, T_2$.
Since we focus on a single variable of the context we name $w_1$ and $w_2$
the variances of $\beta$ in $\D_1$ and $\D_2$, \XXX{Changed $\G_i$ to
$\D_i$} respectively. We now reason by case analysis on the variances $w_1$
and $w_2$.

If both of them are $\virr$, we have $\D_1 \zip \D_2 = (\virr\beta)$,
so we do not necessarily have $\G \leq \D_1 \zip \D_2$. 
Instead, we make a different choice for $G_2$. Namely, we  pick $\G$ for
$\G_2$ and keep $\D_1$ for  $\G_1$.  
As $\D_2$ is $\virr$ we have $\D_2 \leq \G_2$, so by
monotonicity of the variance checking judgment we have $\G_2 \der
T_2 : v_2$ from~\lref D, and we still have $\G_2 \Der T_2 : v_2 \vto
\rel\veq$ from~\lref{Easy}. Hence~\lref {ResZip} and \lref {ResDecomp}
are reestablished.
Finally, we have $\G_1 \zip \G_2 = \G$, so in
particular $\G \leq \G_1 \zip \G_2$, \ie. \lref {zip}. 

If only one of the $w_i$ is $\virr$, we may assume, \wlg., that it is
$w_1$.  Then $\D_1 \zip \D_2$ is $\D_2$ and we only need to show that 
$\G \leq \D_2$~\llabel {GD2}. 
From \lref {Easy}, we have $\G \Der T_2 : v_2 \To
\vprime{2}$. We then make a case analysis on $v_2$:
if $v_2$ is not $\rel\veq$, then 
by 
since $\D_2$ is most general and $w_2 \neq \virr$, we may apply
Lemma~\ref{lem/order-decomposition-variance} 
to get \lref {GD2};
Otherwise, $v_2$ is $\rel\veq$;
Lemma~\ref{lem/principal-context-for-eq-or-irr} applied to \lref D implies
that $\D_2$ is itself $(\veq\beta)$, and \lref {GD2} trivially
holds. \XXX{Because $\rel\veq$ is the top variance. Recheck as this was not
clearly the argument used.}

Finally, if none of the $w_i$ is $\virr$, we first prove that they are
both $\rel\veq$. In fact, we only prove that $w_1$ is $\rel\veq$~\llabel
{w1}, as the other case follows  by symmetry. 
To prove~\lref {w1}, we assume that $\rho''$ be such that $\rho
\prec_{w_1} \rho''$  and we show that $\rho = \rho''$~\llabel {LC} holds.  
By~\lref D, we have $\app{T_1}{\rho} \prec_{v_1} \app{T_1}{\rho''}$. 
By reflexivity, we have $\app{T_2}{\rho} \prec_{v_2} \app{T_2}{\rho}$.
We can use those
two inequalities to invoke our simultaneous decomposability
hypothesis~$\lref{Hyp}$ with $\app{T_1}{\rho''}$ for $\sigma'_1$ 
and $\app{T_2}{\rho}$ for $\sigma'_2$ 
to get a $\rho'$ such that both
$\app{T_1}{\rho'} \precprime{1} \app{T_1}{\rho''}$ and 
$\app{T_2}{\rho'} \precprime{2} \app{T_2}{\rho}$ hold.
By Lemma~\ref {lem/non-irrelevant-equality}\XXX[WAS]{\ref
{lem/principal-equal-decomposability}}
applied with \lref D, this implies both $\rho' = \rho''$ and $\rho' = \rho$, 
and therefore~\lref {LC}. 

Therefore, the only remaining case is when $w_1$ and $ w_2$ are both
$\rel\veq$. Then $\D_1 \zip \D_2$ is $(\veq\beta)$, which is the highest
single-variable context. So our goal \lref {zip}' trivially holds. 

Of these several cases, one ($\virr,\virr$) has $\G_1 \zip \G_2 =
\G$ directly, and in the others $\G_1$ and $\G_2$ were defined as the
most general contexts such that $\G_1 \der T_1 : v_1$ and $\G_2 \der
T_2 : v_2$. If we add the further hypothesis that for each $i$, $\G
\der T_i : v_i$ holds, then by principality of the $\G_i$, we have that
$\G_i \leq \G$ for each $i$. This implies that we have
$(\mathop{\zip}_\iI \G_i) \leq \G$ (when it is defined, $\zip$
coincides with the lowest upper bound $\wedge$).
By combination with \lref {zip}, we  get  $(\mathop{\zip}_\iI \G_i) = \G$.
\qed
\end{proof}

\begin{lemma}[Completeness of syntactic decomposability]
\label{lem/decomposability-completeness}\noindent\\
If $\sem{\G \der \tau : v \To v'}$ holds for $v' \in \{\veq,\virr\}$,
then $\G \der \tau : v \To v'$ is provable.
\end{lemma}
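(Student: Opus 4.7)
{}
The plan is to proceed by case analysis on whether $v \geq v'$, then by induction on $\tau$ in the non-trivial case, using Theorem~\ref{thm/zip-completeness} (zip completeness) as the main tool for the constructor case.

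If $v \geq v'$, Rule~\Rule{sc-Triv} applies, with the premise $\G \der \tau : v$ following from the hypothesis $\sem{\G \der \tau : v}$ by Lemma~\ref{lem/variance-checking-correct}. Otherwise, $v \not\geq v'$ and I proceed by induction on $\tau$.

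In the variable case $\tau = \alpha$, I apply Rule~\Rule{sc-Var}. The context contains some $w\alpha$ with $w \geq v$ (by the variance part of the hypothesis). For the reverse inequality, Lemma~\ref{lem/order-decomposition-variance} applies since $v \not\geq v'$: the most general context for $\alpha$ is $(v\alpha)$, and $v$ is non-irrelevant here (if $v$ were $\virr$ we would have $v \geq v'$ for $v' = \virr$, and the case $v = \virr, v' = \veq$ forces $w = \virr = v$ directly from the semantic decomposability). This gives $w \leq v$, hence $w = v$.

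In the constructor case $\tau = \ty{\bs}{t}$ with head variance $\bar{w\alpha}$, the goal is to apply Rule~\Rule{sc-Constr}. The $v$-closure premise is immediate from the atomic subtyping assumption. The interesting step is to derive, from the monolithic $\G \Der \ty{\bs}{t} : v \vto v'$, the simultaneous decomposition ${\G \Der (\sigma_i : \varcomp v {w_i} \vto \varcomp{v'}{w_i})_i}$. Given $\br$ and witnesses $(\sigma'_i)_i$ with $\app{\sigma_i}{\br} \prec_{\varcomp v {w_i}} \sigma'_i$, I assemble $\sigma' = \ty{(\sigma'_i)_i}{t}$, which satisfies $\app{\tau}{\br} \prec_v \sigma'$ by the subtyping rule for datatypes. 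Applying the hypothesis yields $\br'$ with $\br \prec_\G \br'$ and $\app{\tau}{\br'} \prec_{v'} \sigma'$. Here the restriction $v' \in \{\veq,\virr\}$ is crucial: inversion on the head $\tyc{t}$ (noting that equality decomposes into two opposing subtyping relations that collapse variance-wise) recovers $\app{\sigma_i}{\br'} \prec_{\varcomp{v'}{w_i}} \sigma'_i$ for each~$i$, as needed.

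Once the simultaneous decomposition is in hand, I invoke Theorem~\ref{thm/zip-completeness}. Since $\sem{\G \der \ty{\bs}{t} : v}$ gives, by inversion of Rule~\Rule{vc-Constr}, that $\G \der \sigma_i : \varcomp v {w_i}$ for each~$i$, the strengthened conclusion of zip completeness delivers contexts $(\G_i)_i$ with $\G = \mathop{\zip}_i \G_i$ and, for each~$i$, both $\G_i \der \sigma_i : \varcomp v {w_i}$ and $\G_i \Der \sigma_i : \varcomp v {w_i} \vto \varcomp{v'}{w_i}$, that is $\sem{\G_i \der \sigma_i : \varcomp v {w_i} \To \varcomp{v'}{w_i}}$. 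Because $v' \in \{\veq,\virr\}$ implies $\varcomp{v'}{w_i} \in \{\veq,\virr\}$ for every~$w_i$, the induction hypothesis applies to each subterm, yielding the right premises of Rule~\Rule{sc-Constr} and concluding the proof. The main obstacle is the middle step---extracting per-component decomposability from the global one---where the hypothesis $v' \in \{\veq,\virr\}$ is essential: for arbitrary $v'$ the collapse of equality to componentwise equality via inversion would not go through, which is precisely why completeness is restricted to these two target variances.
\qed
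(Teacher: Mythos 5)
Your overall architecture is the same as the paper's: dispatch $v \geq v'$ (hence all of $v' = \virr$) with \Rule{sc-Triv}, induct on $\tau$, in the constructor case assemble the componentwise witnesses into $\ty{\bs'}{t}$ and invert head subtyping to obtain the simultaneous decomposition ${\G \Der (\sigma_i : \varcomp v {w_i} \vto \varcomp{v'}{w_i})_i}$, then use zip completeness with the strengthening from $\G \der \sigma_i : \varcomp v {w_i}$ to recover $\G = \mathop{\zip}_i \G_i$ and close with the induction hypothesis. Your variable case reaches the same conclusion $w = v$ by a slight detour through Lemma~\ref{lem/order-decomposition-variance} where the paper argues directly from the expanded decomposability formula, but that is harmless (and your side remark disposing of the $v = \virr$ subcase is correct).

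There is one genuine loose step: you invoke Theorem~\ref{thm/zip-completeness} uniformly over all components $i$, including those with $w_i = \virr$, for which the target variance is $\varcomp{v'}{w_i} = \virr$ rather than $\veq$. Zip completeness, as the paper proves it, is only available for target variance $\veq$: its proof goes through Lemma~\ref{lem/principal-equal-decomposability} and the ``EasyPart'' argument, both of which exploit the \emph{equality} $\app{T_i}{\br'} = \sigma'_i$ on the right-hand side to pin down the witness. For a component with target $\virr$ the decomposability constraint is vacuous and gives no handle on $\G_i$, so the theorem cannot be applied as a black box there. The paper therefore first splits the index set into $I_{\virr}$ and its complement, assigns the all-irrelevant context to each $i \in I_{\virr}$ (discharging its premise of \Rule{sc-Constr} by \Rule{sc-Triv}), applies zip completeness only to the non-irrelevant components where $\varcomp{\veq}{w_i} = \veq$, and then observes that the irrelevant contexts are neutral for $\zip$ so the combined zip is still $\G$. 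Your proof needs this case split; without it the appeal to zip completeness is not licensed by the theorem as established. The fix is routine, but it is a missing step rather than a stylistic difference. (A second, cosmetic point: you call the restriction $v' \in \{\veq,\virr\}$ ``crucial'' for the head-inversion step, but inversion yields $\app{\sigma_i}{\br'} \prec_{\varcomp{v'}{w_i}} \sigma'_i$ for arbitrary $v'$; the restriction is really needed only where zip completeness enters.)
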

\begin{proof}{}
\locallabelreset
Assume $\sem{\G \der \tau : v \To v'}$ holds for $v' \in \{\veq,\virr\}$,
\ie. 
$\G \der \tau : v$~\llabel d
and
$\G \Der \tau : v \vto v'$~\llabel D, which 
expands to
$$
\let \sigma \tau
\forall (\br : \G), \sigma',\uad
      \app \sigma \br \prec_{v} \sigma' \implies
      \exists (\br' : \G),\uad
        \br \prec_\G \br'
        \wide\wedge \app \sigma {\br'} \prec_{v'} \sigma'
\eqno \llabel {D'}
$$
We show  $\G \der \tau : v \To v'$~\llabel C  by 
structural induction on $\tau$

If $v \geq v'$ holds, then~\lref C
directly follows from Rule \Rule{sc-Triv}.
This applies in particular when $v' = \virr$. 
Hence, we only need to consider the remaining cases where
$v'$ is $\rel\veq$ and $v \not \geq v'$


We now reason by cases on $\tau$. 

\Case {Case $\tau$ is a variable $\alpha$.} \lref {D'} becomes
$$
\forall \rho, \tau',\uad
      \rho \prec_{v} \tau' \implies
      \exists \rho',\uad
        \rho \prec_\G \rho'
        \wide\wedge \rho' = \tau'
$$
This means that if $\rho \prec_v \tau'$ holds then $\rho \prec_{\G} \tau'$ also
holds: the variance $w\alpha \in \G$ satisfies $v \geq w$. 
Since, the hypothesis \lref d implies $v \leq w$, we have
$v = w$. Therefore, \lref C follows by Rule \Rule{sc-Var}.

\Case {Case $\tau$ is of the form $\ty{\bs}{t}$.}
By inversion, the derivation of~\lref d must end with rule \Rule
{vc-Constr}, hence we have  $\G \der \sigma_i : \varcomp v
{w_i}$~\llabel{HypVar} for each $i \in I$ with  $\G \der
\mc{type}~\ty{\bar{w\alpha}}{t}$~\llabel t.   
 

  \newcommand{\Iirr}{{I_{\virr}}}
  \newcommand{\Inirr}{I_{\not \virr}}

Let us show that $\G \Der (\sigma_i : \varcomp v {w_i} \vto {\varcomp \veq
{w_i}})_\iI$~\llabel S, \ie. 
$$
    \forall \br, \bs',\uad
      \parens 
        {\forall \iI, \app{\sigma_i}{\br} \prec_{v_i} \sigma'_i}
      \implies
      \exists \br',\uad \br \prec_\G \br'
        \wedge \parens 
         {\forall \iI, \app {\sigma_i} {\br'} \prec_{v'_i} \sigma'_i}
$$
Let $\br$ and $\bs'$ be such that  
$\app{\sigma_i}{\br} \prec_{v_i} \sigma'_i$ holds for all $i$ in $I$. 
From this and \lref {HypVar}, we have $\app {(\ty\bs t)} \br \prec_{v}
\ty{\bs'} t$. By applitcation of \lref D, there exists
$\br'$ such that $\br \prec_\G \br'$ and
$\app {(\ty\bs t)} {\br'} \prec_{\veq} \ty{\bs'} t$. 
By inversion of subtyping\XXX{Attention, not by the inversion lemma...}
\XXX{This is just mentioned in plain text, but it be a lemma},
this implies $\app{\sigma_i}{\br'} \prec_{\varcomp \veq {w_i}} \sigma'_i$, 
for all $i$ in $I$. This proves~\lref S.
We also note that the constructor $\tyc{t}$ is $v$-closed~\llabel {Note}.

To prove our goal~\lref C, we construct a family $(\G_i)_\iI$ of contexts that satisfies
$\mathop{\zip}_\iI \G_i = \G$~\llabel G and subderivations $\G_i \der
  \sigma_i : \varcomp v {w_i} \To \varcomp \veq {w_i}$~\llabel i, since then
the conclusion~\lref C follows by an application of
rule \Rule {sc-Constr} with~\lref t, \lref G, and \lref i.

We will handle separately the arguments $\sigma_j$ that are irrelevant, \ie.
when $w_j$ is $\virr$,  from the rest. 
Let $\Iirr$ be the set of indices with irrelevant variances and $\Inirr$ the
others. 
 
For any $i \in \Iirr$, $\varcomp v {w_i}$ and $\varcomp {\veq} {w_i}$ are
both $\virr$ so the condition~\lref i, which becomes $\G_i \der \sigma_i :
\virr \To \virr$, is void of content ($\app{\sigma_i}{\br}
\prec_{\virr} \sigma'_i \implies \exists \br',\uad \app{\sigma_i}{\br'}
\prec_{\virr} \sigma'_i$ is always true). More precisely, let $\Gamma_i$ be
the irrelevant context having only irrelevant variances.  Then \lref i
follows by Rule \Rule {sc-Triv}.

Since the decomposability constraints for $i \in \Iirr$ such that
$w_i = \virr$ are trivial, \lref S  is equivalent to $\G \Der (\sigma_i : \varcomp
v {w_i} \vto {\varcomp \veq {w_i}})_{i \in \Inirr}$~\llabel {Snirr}. 
\XXX{Better show it, formally.}

For each $i \in \Inirr$, $\rel{\varcomp \veq {w_i}}$ equals $\rel\veq$, so
\lref {Snirr} becomes $\G \Der (\sigma_i : \varcomp v {w_i} \vto
\rel\veq)_{i \in \Inirr}$.
By zip completeness (Theorem~\ref{thm/zip-completeness}), there is a family
$(\Gamma_i)_{i \in \Inirr}$ such that
$\G \leq \mathop{\zip}_{i \in \Inirr} \Gamma_i$ and
both $\G_i \der \sigma_i : \varcomp v {w_i}$ and $\G_i \Der \sigma_i :
\varcomp v {w_i} \vto \rel\veq$, \ie. 
$\sem{\Gamma_i \der \sigma_i :\varcomp v {w_i} \To \rel\veq}$~\llabel h hold for any
$i \in \Inirr$, 
Furthermore, since we also
have~$\lref{HypVar}$, we can strengthen our result into $\mathop{\zip}_{i \in \Inirr} 
\Gamma_i = \G$.  By induction hypothesis applied to \lref h, we have 
\lref i for $i \in \Inirr$. 

We have two families of contexts over domains $\Iirr$ and $\Inirr$ that
partition $I$; we can union them in a family $(\Gamma_i)_{\iI}$ that has
subderivations $\forall \iI, \G_i \der \sigma_i : \varcomp v {w_i} \To
\varcomp {v'} {w_i}$. As the contexts in $\Iirr$ are all irrelevant, they
are neutral for the zipping operation: $\mathop{\zip}_{i \in I} \Gamma_i$
is equal to $\mathop{\zip}_{i \in \Inirr} \Gamma_i$, that is $\G$. 
This proves \lref G while \lref i has already been proved separately for 
$i \in \Iirr$ and $i \in \Inirr$.  
\qed
\end{proof}

\begin{remark}[Note \lref {Note}]
   The head
    constructor $\tyc{t}$ is closed in our system with atomic
    subtyping, but the situation is in fact a bit stronger than that:
    the statement of $v$-closure of $\ty{\ba}{t}$ can be formulated in
    term of decomposability $\Gamma \Der \ty{\ba}{t} : v \vto
    \rel\veq$. It is very close from our decomposability hypothesis
    $\Gamma \Der \ty{\bs}{t} : v \vto \rel\veq$, but uses variables
    $\ba$ instead of full type expressions $\bs$. We conjecture that
    the decomposability hypothesis (with $\veq$ on the right) implies
    $v$-closure in a much larger set of subtyping systems that just
    atomic subtyping: it suffices that the subtyping relation is
    defined only in term of head constructors.
\end{remark}

\XXX{It would be useful so say that all cases where the zip is defined 
have been covered and that the only missing part for full completeness
is to show that cases where the zip is underfined are not possible.}

\subsection{Back to the correctness criterion}

\locallabelreset

Remember the correctness criterion \Rule{req-SP} of Simonet and Pottier: 
\[
\forall \bs, \bs', \br, \quad
\left(
   \ty{\bs}{t} \leq \ty{\bs'}{t}
     \wedge \app D{\bs,\br}
   \implies
   \exists \br',\; \app D{\bs',\br'}
     \wedge \app\tau\br \leq \app\tau{\br'}
\right) 
\eqno \llabel 1
\]
We now show how the closure judgment $\G \der \tau : v \To v'$ can be
used to verify that this criterion holds: we will express this
criterion in an equivalent form that uses the interpretation of our
judgments.

The first step is to rewrite the property $\ty{\bs}{t} \leq
\ty{\bs'}{t}$ using the variance annotation $\Gva$ of
$\tyc{t}$. Again, we are taking the variance annotation for the
datatype $\tyc{t}$ as granted (this is why we can use it in this
reasoning step), and checking that the definitions of the constructors
of $\tyc{t}$ are sound with respect to this annotation. \[
\forall \bs, \bs', \br, \quad
\left(
   (\forall i, \sigma_i \prec_{v_i} \sigma'_i)
      \wedge \app D{\bs,\br}
   \implies
   \exists \br',\; \app D{\bs',\br'}
     \wedge \app\tau\br \leq \app\tau{\br'}
\right) 
\eqno \llabel 2
\]
Since, the constraint $\app D{\ba,\bb}$ is a set of equalities of the form
$\alpha_i = \app{T_i}\bb$ (where $T_i$ is a type), \lref 2 is actually:
\[
\forall \bs, \bs', \br, \uad
   \parens {\forall i, \sigma_i \prec_{v_i} \sigma'_i}
     \wedge \parens {\forall i, \bs_i = \app{T_i}{\br}}
   \implies
   \exists \br',\; (\forall i, \app{T_i}{\br'} = \sigma'_i)
     \wedge \app\tau\br \leq \app\tau{\br'}
\]
Substituting the equalities and, in particular, removing the quantification
on the $\bs$, which are fully determined by the equality constraints $\bs =
\app{\bar{T}}\br$, we get:
\[
\forall \bs', \br, \quad
   (\forall i, \app{T_i}\br \prec_{v_i} \sigma'_i)
   \implies
   \exists \br',\; (\forall i, \app{T_i}{\br'} = \sigma'_i)
     \wedge \app\tau\br \leq \app\tau{\br'}
\eqno\llabel 3
\]
By inversion (Theorem~\ref{thm/inversion}), we may replace the goal $\app \tau
\br \leq \app \tau {\br'}$ by the formula
$\exists \G, (\G \der \tau : \vplus) \wedge (\br \prec_\G \br')$.
Moreover, since $\G \der \tau : \vplus$ always for for some $\G$, we may 
move  this quantification in front. Hence, \lref 3  is equivalent to:
\[
\exists \G, \uad\bigwedge 
\begin{cases}
\G \der \tau : \vplus
\\
\forall \bs', \br,\;
   (\forall i, \app{T_i}\br \prec_{v_i} \sigma'_i)
   \implies
   \exists \br',\; (\forall i, \app{T_i}{\br'} = \sigma'_i)
     \wedge \br \prec_{\G} \br'
\end{cases}
\eqno\llabel 4
\]
We may recognize in second clause the simultaneous 
decomposability judgment (Definition~\ref {def/decompasability})
$\G \der (T_i : v_i \vto \veq)_{\iI}$. Hence, \lref 4 is in fact: 
$$
\exists \G, \uad
\G \der \tau : \vplus
\wide\wedge \G\der (T_i : v_i \vto \veq)_{\iI}
\eqno\llabel {4b}
$$
Then, comes the delicate step of this series of equivalent rewriting:
\[
\exists \G, (\G_i)_\iI, \uad
\bigwedge
\begin{cases}
\G \der \tau : \vplus
\\
     \G = \mathop{\zip}_\iI \G_i
     \wide\wedge
      \forall i, \uad
      \bigparens {\G_i \der T_i : v_i
      \wide\wedge
      \G_i \der T_i : v_i \vto \veq}
\end{cases}
\eqno\llabel 5
\]
The reverse imiplication from \lref 5 to \lref {4b} is is the zip soundness
(Lemma \ref{lem/zip-soundness}).  

The direct implication, from \lref {4b} to
\lref 5 is more involved: let $\G_0$ be such that $\G_0 \der \tau
: \vplus$.  By zip completeness (Theorem
\ref{thm/zip-completeness}), with the hypotheses of $\lref{4}$, there
exists a family $(\G_i)_\iI$ satisfying the typing, zipping and
decomposability of second line of \lref{5} with $\G_0 \leq \mathop{\zip}_\iI
\G_i$. We take $\mathop{\zip}_\iI \G_i$ for $\G$. Then, from $\G_0 \leq \G$
we get $\G \der
\tau : \vplus$ by monotonicity (Lemma~\ref {lem/monotonicity}).

As a last step, the last conjuncts of \lref 5 are equivalent to $\forall
i,\,\G_i \der T_i : v_i \To \rel\veq$ by interpretation of syntactic
decomposability (Definition~\ref {def/cc/semantics}) and soundness and
completeness of zipping (lemmas~\ref {lem/decomposability-soundness}
and~\ref {lem/decomposability-completeness}).  Therefore, \lref 5 is
equivalent to:
\[
\exists \G, (\G_i)_\iI,\quad
  \G \der \tau : \rel\vplus
  \ \ \wedge\ \ 
  \G = \mathop{\zip}_\iI \G_i
  \ \ \wedge\ \ 
  \forall \iI,\,\G_i \der T_i : v_i \To \rel\veq
  \qquad
\eqno \llabel 7
\]
which is our final criterion. 

\end{version}

\Esop{}{\paragraph{Pragmatic evaluation of this criterion}}

This presentation of the correctness criterion only relies on
syntactic judgments. It is pragmatic in the sense that it suggests
a simple and direct implementation, as a generalization of the check
currently implemented in type system engines --- which are only
concerned with the $\Gamma \der \tau : \vplus$ part.

To compute the contexts $\Gamma$ and $(\Gamma_i)_\iI$ existentially
quantified in this formula, one can use a variant of our syntactic
judgments where the environment $\Gamma$ is not an input, but an
output of the judgment; in fact, one should return for each variable
$\alpha$ the \emph{set} of possible variances for this judgment to
hold. For example, the query $(? \der \alpha * \ty{\beta}{ref}: \vplus)$
should return $(\alpha \mapsto \{\vplus, \veq\}; \beta \mapsto
\{\veq\})$. Defining those algorithmic variants of the judgments is
routine, and we have not done it here\Esop{by lack of space}{}.
The sets of variances corresponding to the decomposability of the
$(T_i)_\iI$ ($? \der T_i : v_i \To \rel\veq$) should be zipped
together and intersected with the possibles variances for $\tau$,
returned by
($? \der \tau : \vplus$). The algorithmic criterion is satisfied if
and only if the intersection is not empty; this can be decided in
a simple and efficient way.

\section{Closed-world vs. open-world subtyping}

\begin{version}{\Not\Esop}
\subsection{Upward and downward closure in a ML type system}

In the type system we have used so far, \emph{all types} are both
upward and downward-closed. Indeed, thanks to the simplicity of our
subtyping relation, we have a very strong inversion principle: two
ground types in a subtyping relation necessarily have exactly the same
structure. We have therefore completely determined a sound variance
check for a simple type system with \GADT.

This simple resolution, however, does not hold in general: richer subtyping
relations will have weaker invertibility properties. As soon as a bottom
type $\bot$ is introduced, for example, such that that for all
type $\sigma$ we have $\bot \leq \sigma$, downward-closure fails
for most types. For example, products are no longer downward-closed: $\G \der
\sigma * \tau \geq \bot$ does not imply that $\bot$ is equal to some
$\sigma' * \tau'$. Conversely, if one adds a top type $\top$, bigger than
all other types, then most type are not upward-closed anymore.

In OCaml, there is no $\bot$ or $\top$ type\footnote{A bottom type
  would be admissible, but a top type would be unsound in OCaml, as
  different types may have different runtime
  representations. Existential types, that may mix values of different
  types, are constructed explicitly through a boxing step.}. 
However, object
types and polymorphic variant have subtyping,  so they are, in general,
neither upward nor downward-closed. Finally, subtyping is also used
in private type definitions, that were demonstrated in the example.

Our closure-checking relation therefore degenerates into the
following, quite unsatisfying, picture:
\begin{itemize}
\item no type is downward-closed because of the existence of private types;
\item no object type  but the empty object type is upward-closed;
\item no arrow type is upward-closed because its left-hand-side would
  need to be downward-closed;
\item datatypes are upward-closed if their components types are.
\end{itemize}
From a pragmatic point of view, the situation is not so bad; as our
main practical motivation for finer variance checks is the relaxed
value restriction, we care about upward-closure (covariance) more than
downward-closure (contravariance). This criterion tells us that
covariant parameters can be instantiated with covariant datatypes
defined from sum and product types (but no arrow), which would satisfy
a reasonably large set of use cases.
\end{version}

\subsection{A better control on upward and downward-closure}

As explained in the introduction, the problem with the upward and
downward closure properties is that they are not monotonic: enriching
the subtyping lattice of our type system does not preserve them. While
the core language has a nice variance check for \GADT, adding private
types in particular destroys the downward-closure property of the
whole type system.

Our proposed solution to this tension is to give the user the choice
to locally strengthen negative knowledge about the subtyping relation
by abandoning some flexibility. Just as object-oriented languages have
a concept of \code{final} classes that cannot be extended, we would
like to allow to define \code{downward-closed} datatypes, whose private
counterparts cannot be declared, and \code{upward-closed} datatypes
that cannot be made \code{invisible}: defining
$\mc{type}~\tyc{t}~\mc{=}~\mc{private}~\tau$ would be rejected by the
type-checker if $\tau$ was itself declared \code{downward-closed}.

\begin{version}{\Not\Esop}
Such ``closure specifications'' are part of the semantic properties of
a type and would, as such, sometimes need to be exposed through module
boundaries. It is important that the specification language for
abstract types allow to say that a type is upward-closed
(respectively downward-closed). These new ways to classify types raise
some software engineering questions. When is it desirable to define
types as \code{upward-closed}? The user must balance its ability to
define semi-abstract version of the type against its use in a \GADT---and
potentially other type-system features that would make use of
negative reasoning on the subtyping relation. We do not yet know how
to answer this question and believe that more practice is necessary
to get a clearer picture of the trade-off involved.
\end{version}

\subsection{Subtyping constraints and variance assignment}
\label{sec/gadts-with-subtyping-constraints}

\begin{version}{\Not\Esop}
We will now revisit our previous example, using the guarded existential notation:
\begin{lstlisting}
  type $\ty{\alpha}{expr}$ =
    | Val of $\exists \beta [\alpha = \beta].\,\beta$
    | Int of $[\alpha = \tyc{int}].\,\tyc{int}$
    | Thunk of $\exists \beta \gamma [\alpha = \gamma].\, \ty{\beta}{expr} * (\beta \to \gamma)$
    | Prod of $\exists \beta\gamma[\alpha = \beta*\gamma].\, \ty{\beta}{expr} * \ty{\gamma}{expr}$
\end{lstlisting}
\end{version}
\begin{version}{\Esop}
  Consider our introductory example $\ty{\alpha}{expr}$ of strongly
  typed expressions~(\S\ref{expr-example}).
\end{version}
A simple way to get such a type to be covariant would be, instead of
proving delicate, non-monotonic upward-closure properties on the tuple
type involved in the equation $\alpha = \beta * \gamma$, to
\emph{change} this definition so that the resulting type is obviously
covariant:
\begin{lstlisting}
  type $\ty{\vplus\alpha}{expr}$ =
    | Val of $\exists \beta [\alpha \geq \beta].\,\beta$
    | Int of $[\alpha \geq \tyc{int}].\,\tyc{int}$
    | Thunk of $\exists \beta \gamma [\alpha \geq \gamma].\, \ty{\beta}{expr} * (\beta \to \gamma)$
    | Prod of $\exists \beta\gamma[\alpha \geq \beta*\gamma].\, \ty{\beta}{expr} * \ty{\gamma}{expr}$
\end{lstlisting}
We have turned each equality constraint $\alpha = \app T \bb$ into
a subtyping constraint $\alpha \geq \app T \bb$. For a type $\alpha'$
such that $\alpha \leq \alpha'$, we get by transitivity that
$\alpha' \geq \app T \bb$. This means that $\ty{\alpha}{expr}$
trivially satisfies the correctness criterion from Simonet and
Pottier. Formally, instead of checking
$\G \der T_i : v_i \To \rel\veq$, we are now checking
$\G \der T_i : v_i \To \rel\vplus$, which is significantly easier to
satisfy\Esop{}{\footnote{
    Note that the formal proofs of the precedent section were, in some
    cases, specialized to the equality constraint. More precisely, our
    decomposability criterion is still sound when extended to
    arbitrary subtyping constraints, but its completeness is unknown
    and left to future work.
}}: when $v_i$ is itself $\vplus$ we can directly apply the
\Rule{sc-Triv} rule.

While we now have a different datatype, which  gives us a weaker
subtyping assumption when pattern-matching, we are still able
to write the classic function
$\mc{eval} : \ty{\alpha}{expr} \to {\alpha}$, because the constraints
$\alpha \geq \tau$ are in the right direction to get an $\alpha$ as
a result.
\begin{lstlisting}
  let rec eval : $\ty{\alpha}{expr} \to \alpha$ = function
    | Val $\beta$ (v : $\beta$) -> (v :> $\alpha$)
    | Int (n : int) -> (n :> $\alpha$)
    | Thunk $\beta \gamma$ ((v : $\ty{\beta}{expr}$), (f : $\beta \to \gamma$)) ->
      (f (eval v) :> $\alpha$)
    | Prod $\beta$ $\gamma$ ((b : $\ty{\beta}{expr}$), (c : $\ty{\gamma}{expr}$)) ->
      ((eval b, eval c) :> $\alpha$)
\end{lstlisting}

\begin{version}{\Not\Esop}
  We conjecture that moving from an equality constraint on the \GADT
  type parameters to a subtyping constraint (bigger than, or smaller
  than, according to the desired variance for the parameter) is often
  unproblematic in practice. In the examples we have studied, such
  a change did not stop functions from type-checking---we only needed
  to add some explicit coercions.
\end{version}

However, allowing subtyping constraints in \GADTs has some
disadvantages. If the language requires subtyping casts to be
explicit, this would make pattern matching of \GADT syntactically
heavier than with current \GADTs where equalities constraints are used
implicitly.
\begin{version}{\Not\Esop}
  This is related to practical implementation questions,
  as languages based on inference by unification tend to favor
  equality over subtyping, bidirectional coercions over unidirectional
  ones.
\end{version}
Subtyping constraints need also be explicit in the type declaration,
forcing the user out of the convenient ``generalized codomain type''
syntax.

From a theoretical standpoint, we think there is value in exploring
both directions: experimenting with \GADTs using subtyping constraints,
and with fine-grained closure properties for equality
constraints. Both designs allow to reason in an open world setting, by
being resilient to extensions of the subtyping relation. Whether it is
possible to expose those features to the expert language user
(\eg. library designers) without forcing all users to pay the
complexity burden remains to be seen.


\section{Future Work}

\paragraph{Extension of the formal exposition to non-atomic subtyping}

As remarked in \S\ref{atomic_subtyping_restriction} during the
definition of our formal subtyping relation, the soundness proof of
Simonet and Pottier is restricted to atomic subtyping. We conjecture
that their work can be extended to non-atomic subtyping, and
furthermore that our results would extend seamlessly in this setting,
thanks to our explicit use of the $v$-closure hypothesis.

\begin{version}{\Not\Esop}
\paragraph{On the relaxed value restriction}

Regarding the relaxed value restriction, which is our initial
practical motivation to investigate variance in presence of \GADTs,
there is also future work to be done to verify that it is indeed
compatible with this refined notion of variance. While the syntactic
proof of soundness of the relaxation doesn't involve subtyping
directly, the ``informal justification'' for value restriction uses
the admissibility of a global bottom type $\bot$ to generalize
a covariant unification variable; in presence of downward-closed type,
there is no such general $\bot$ type (only one for
non-downward-closed types). We conjecture that the relaxed value restriction
is still sound in this case, because the covariance criterion is really used to
rule out mutable state rather than subtype from a $\bot$ type; but it
will be necessary to study the relaxation justification in more
details to formally establish this result.
\end{version}

\paragraph{Experiments with $v$-closure of type constructors as a new semantic property}

In a language with non-atomic subtyping such as OCaml, we need to
distinguish $v$-closed and non-$v$-closed type constructors. This is
a new semantic property that, in particular, must be reflected through
abstraction boundaries: we should be able to say about an abstract
type that it is $v$-closed, or not say anything.

How inconvenient in practice is the need to expose those properties to
have good variance for \GADTs? Will the users be able to determine
whether they want to enforce $v$-closure for a particular type they
are defining?

\paragraph{Experiments with subtyping constraints in \GADTs}

In \S\ref{sec/gadts-with-subtyping-constraints}, we have presented
a different way to define \GADTs with weaker constraints
(simple subtyping instead of equality) and stronger variance
properties. It is interesting to note that, for the few \GADTs
examples we have considered, using subtyping constraints rather than
equality constraints was sufficient for the desired applications of
the \GADT.

However, there are cases were the strong equality relying on
fine-grained closure properties is required. We need to consider more
examples of both cases to evaluate the expressiveness trade-off in,
for example, deciding to add only one of these solutions to an
existing type system.

\begin{version}{\Not\Esop}
\XXX[GS]{Are the implementation question below appropriate for the
  article?} On the implementation side, we suspect that adding
subtyping constraints to a type system that already supports \GADT and
private types should not require large engineering efforts
(in particular, it does not implies supporting the most general forms
of bounded polymorphism). Matching on a \GADT $\ty{\alpha}{t}$ already
introduces local type equalities of the form $\alpha = \app{T}{\bb}$ in
pattern matching clauses. Jacques Garrigue suggested that adding an
equality of the form $\alpha = \mc{private}~\app{T}{\bb}$ should
correspond to \GADT equations of the form $\alpha \leq \app{T}{\bb}$,
and lower bounds could be represented using the dual notion of
\code{invisible} types. Regardless of implementation difficulties, in
a system with only explicit subtyping coercion, such subtyping
constraints would still require more user annotations.
\end{version}

\begin{version}{\Not\Esop}
\paragraph{Mathematical structures for variance studies}

There has been work on more structured presentation of \GADTs as part
of a categorical framework (\cite{ghani-popl07} and
\cite{Hamana-Fiore-11}). This is orthogonal to the question of
variance and subtyping, but it may be interesting to re-frame the
current result in this framework.

Parametrized types with variance can also be seen as a sub-field of
order theory with very partial orders and functions with strong
monotonicity properties. Finally, we have been surprised to find that
geometric intuitions were often useful to direct our formal
developments. It is possible that existing work in these fields would
allow us to streamline the proofs, which currently are rather
low-level and tedious.

\end{version}

\paragraph{Completeness of variance annotations with domain information}
\label{future_work:domain_information}

For simple algebraic datatypes, variance annotations are ``enough'' to
say anything we want to say about the variance of
datatypes. Essentially, all admissible variance relations between
datatypes can be described by considering the pairwise variance of
their parameters, separately.

\begin{version}{\Not\Esop}
This does not work anymore with \GADTs. For example, the equality type
$\ty{(\alpha,\beta)}{eq}$ cannot be accurately described by
considering variation of each of its parameters independently. We
would like to say that
$\ty{(\alpha,\beta)}{eq} \leq \ty{(\alpha',\beta')}{eq}$ holds as soon
as $\alpha = \beta$ and $\alpha' = \beta'$. With the simple notion of
variance we currently have, all we can soundly say about \code{eq} is
that it must be invariant in both its parameters---which is
considerably weaker. In particular, the well-known trick of
``factoring out'' \GADT by using the \code{eq} type in place of
equality constraint does not preserve variances: equality constraints
allow fine-grained variance considerations based on upward or
downward-closure, while the equality type instantly makes its
parameters invariant.
\end{version}

\begin{version}{\Esop}
  This does not work anymore with \GADTs. For example, with only this
  notion of variance, all we can soundly say about the equality type
  $\ty{(\alpha,\beta)}{eq}$ is that it must be invariant in both its
  parameters. In particular, the well-known trick of ``factoring out''
  \GADT by using the \code{eq} type in place of equality constraint
  does not preserve variances.
\end{version}

We think it would possible to regain some ``completeness'', and in
particular re-enable factoring by \code{eq}, by considering
\emph{domain information}, that is information on constraints that
must hold for the type to be inhabited. If we restricted the subtyping
rule with conclusion $\ty{\bs}{t} \leq \ty{\bs'}{t}$ to only cases
where $\ty{\bs}{t}$ and $\ty{\bs'}{t}$ are inhabited---with a separate
rule to conclude subtyping in the non-inhabited case---we could have
a finer variance check, as we would only need to show that the
criterion of Simonet and Pottier holds between two instances of the
inhabited domain, and not any instance. If we stated that the domain
of the type $\ty{(\alpha,\beta)}{eq}$ is restricted by the constraint
$\alpha = \beta$, we could soundly declare the variance
$\ty{(\virr\alpha, \virr\beta)}{eq}$ on this domain---which no longer 
prevents from factoring out \GADTs by equality types.

\section*{Conclusion}

Checking the variance of \GADTs is surprisingly more difficult (and
interesting) than we initially thought.  We have studied a novel
criterion of upward and downward closure of type expressions and
proposed a corresponding syntactic judgment that is easily
implementable. We presented a core formal framework to prove both its
correctness and its completeness with respect to the more general
criterion of Simonet and Pottier.

This closure criterion exposes important tensions in the design of
a subtyping relation, for which we previously knew of no convincing
example in the context of ML-derived programming languages. We have
suggested new language features to help alleviate these tensions,
whose convenience and practicality is yet to be assessed by real-world
usage.

Considering extension of \GADTs in a rich type system is useful in
practice; it is also an interesting and demanding test of one's type
system design.


\def\urltilda{\kern -.15em\lower .7ex\hbox{\~{}}\kern .04em}
\def\urldot{\kern -.10em.\kern -.10em}
\def\urlhttp{http\kern -.10em\lower -.1ex\hbox{$\colon\!$}\kern -.12em\lower 0ex\hbox{/}\kern -.18em\lower 0ex\hbox{/}}

\bibliographystyle{alphaurl}
\bibliography{variance_gadts}

\begin{thebibliography}{EKRY06}

\bibitem[Abe06]{polarized-subtyping-for-sized-types}
Andreas Abel.
\newblock Polarized subtyping for sized types.
\newblock {\em Mathematical Structures in Computer Science}, 2006.
\newblock Special issue on subtyping, edited by Healfdene Goguen and Adriana
  Compagnoni.

\bibitem[EKRY06]{csharp-generalized-constraints}
Burak Emir, Andrew Kennedy, Claudio Russo, and Dachuan Yu.
\newblock Variance and generalized constraints for {C}\# generics.
\newblock In {\em Proceedings of the 20th European conference on
  Object-Oriented Programming}, ECOOP'06, 2006.

\bibitem[Gar04]{relaxing-the-value-restriction}
Jacques Garrigue.
\newblock Relaxing the value restriction.
\newblock In {\em In International Symposium on Functional and Logic
  Programming, Nara, LNCS 2998}, 2004.

\bibitem[GJ08]{ghani-popl07}
Neil Ghani and Patricia Johann.
\newblock Foundations for structured programming with gadts.
\newblock In {\em Proceedings of Principles and Programming Languages (POPL),
  2008}, pages 297--308, 2008.

\bibitem[HF11]{Hamana-Fiore-11}
Makoto Hamana and Marcelo Fiore.
\newblock A foundation for gadts and inductive families: dependent polynomial
  functor approach.
\newblock In {\em Proceedings of the seventh ACM SIGPLAN workshop on Generic
  programming}, 2011.
\newblock URL: \url{http://www.cs.gunma-u.ac.jp/~hamana/Papers/dep.pdf}.

\bibitem[Kis]{finally-tagless}
Oleg Kiselyov.
\newblock Typed tagless interpretations and typed compilation.
\newblock URL: \url{http://okmij.org/ftp/tagless-final/index.html}.

\bibitem[KR05]{gadt-and-oop}
Andrew Kennedy and Claudio~V. Russo.
\newblock Generalized algebraic data types and object-oriented programming.
\newblock In {\em Proceedings of the 20th annual ACM SIGPLAN conference on
  Object-oriented programming, systems, languages, and applications}, 2005.
\newblock URL: \url{http://research.microsoft.com/pubs/64040/gadtoop.pdf}.

\bibitem[MR09]{montagu-remy-09}
Beno{\^\i}t Montagu and Didier R{\'e}my.
\newblock Modeling abstract types in modules with open existential types.
\newblock In {\em {ACM} {S}ymposium on {P}rinciples of {P}rogramming
  {L}anguages ({POPL})}, pages 63--74, January 2009.
\newblock URL:
  \url{http://gallium.inria.fr/~remy/modules/Montagu-Remy@popl09:fzip.pdf}.

\bibitem[OL92]{laufer-odersky-92}
Martin Odersky and Konstantin Läufer.
\newblock An extension of {ML} with first-class abstract types.
\newblock In {\em ACM Workshop on ML and its Applications}, pages 78--91, June
  1992.
\newblock URL: \url{http://www.cs.luc.edu/laufer/papers/ml92.pdf}.

\bibitem[Pfe01]{pfenning:intextirr}
Frank Pfenning.
\newblock Intensionality, extensionality, and proof irrelevance in modal type
  theory.
\newblock In {\em 16th IEEE Symposium on Logic in Computer Science (LICS 2001),
  16-19 June 2001, Boston University, USA, Proceedings}, 2001.

\bibitem[SP07]{simonet-pottier-hmg-toplas}
Vincent Simonet and François Pottier.
\newblock A constraint-based approach to guarded algebraic data types.
\newblock {\em ACM Transactions on Programming Languages and Systems}, 29(1),
  January 2007.

\end{thebibliography}

\begin{version}{\Not\Esop}

\pagebreak
\appendix
\tableofcontents
\end{version}

\end{document}